    \numberwithin{equation}{section}
\def\eqalign#1{\null\vcenter{\def\\{\cr}\openup\jot\m@th
  \ialign{\strut$\displaystyle{##}$\hfil&$\displaystyle{{}##}$\hfil
      \crcr#1\crcr}}\,}
\newcommand{\be}{\begin{equation}}
\newcommand{\ee}{\end{equation}}
\newcommand{\wt}{\widetilde}
\newcommand{\de}{\delta}
\newcommand{\al}{\alpha}
\newcommand{\bt}{\beta}
\newcommand{\ga}{\gamma}
\newcommand{\Ga}{\Gamma}
\newcommand{\si}{\sigma}
\newcommand{\Si}{\Sigma}
\newcommand{\Om}{\Omega}
\newcommand{\lb}{\lambda}
\newcommand{\ep}{\varepsilon }
    \def\e{{\epsilon}}
    \def\tr{{\rm tr \,}}
    \def\Re{{\rm Re \,}}
    \def\Im{{\rm Im \,}}
    \def\bigO{{\cal O}}
    \def\P2n{{\rm P}_{{\rm II}}^{(n)}}
    \newtheorem{theorem}{Theorem}[section]
    \newtheorem{lemma}[theorem]{Lemma}
    \newtheorem{proposition}[theorem]{Proposition}
    \newtheorem{Definition}[theorem]{Definition}
    \newtheorem{Remark}[theorem]{Remark}
    \newenvironment{remark}{\begin{Remark}\rm}{\end{Remark}}
    \newtheorem{Example}[theorem]{Example}
    \newtheorem{Assumptions}[theorem]{Assumptions}
    \newenvironment{proof}%
    {\rm \trivlist \item[\hskip \labelsep{\bf Proof. }]}%
    {\hspace*{\fill}$\Box$\endtrivlist}
    \newenvironment{varproof}%
    {\rm \trivlist \item[\hskip \labelsep{\bf Proof}]}%
    {\hspace*{\fill}$\Box$\endtrivlist}
\begin{document}
\title{Emergence of a singularity for Toeplitz determinants
and Painlev\'e V}
\author{T. Claeys, A. Its, I. Krasovsky}
\maketitle

\begin{abstract}
We obtain asymptotic expansions for Toeplitz determinants corresponding to a family
of symbols depending on a parameter $t$. For $t$ positive, the symbols are regular
so that the determinants obey Szeg\H{o}'s strong limit theorem. If $t=0$, the symbol
possesses a Fisher-Hartwig singularity. Letting $t\to 0$ we analyze the emergence of a
Fisher-Hartwig singularity and a transition between the two different types of
asymptotic behavior for Toeplitz determinants. This transition is described by a
special Painlev\'e V transcendent. A particular case of our result
complements the classical description of Wu, McCoy, Tracy, and Barouch of  the
behavior of a 2-spin correlation function for a large distance between spins
in the two-dimensional Ising model as the phase transition occurs.
\end{abstract}

\section{Introduction}
Consider the Toeplitz determinant with symbol $f(z)\in L^1(C)$, where $C$ is the
unit circle:
\begin{equation}\label{Toeplitz}
D_n=\det(f_{j-k})_{j,k=0}^{n-1}, \qquad
f_{j}=\frac{1}{2\pi}\int_{0}^{2\pi}f(e^{i\theta})e^{-ij\theta}d\theta.
\end{equation}
We are interested in the behavior of $D_n$ as $n\to\infty$.

If $\ln f(z)$ is sufficiently smooth on the unit circle (in particular,
$f(z)$ is never zero for $z\in C$ and has no winding around the origin) so that
$\ln f(z)\in L^1(C)$ and the sum
\be\label{Sf}
\sum_{k=-\infty}^{\infty}|k||(\ln f)_k|^2,\qquad
(\ln f)_k=\frac{1}{2\pi}\int_{0}^{2\pi}\ln f(e^{i\theta})e^{-ik\theta}d\theta,
\ee
converges, then
the asymptotics of $D_n$ are given by the strong Szeg\H o limit theorem \cite{I,GI,J}:
\begin{equation}\label{Sz}
\ln D_n=\frac{n}{2\pi}\int_{0}^{2\pi}\ln f(e^{i\theta})d\theta
+\sum_{k=1}^{\infty}k(\ln f)_k (\ln f)_{-k}+o(1),
\qquad \mbox{ as $n\to\infty$.}
\end{equation}

However, one often encounters a situation where the symbol possesses so-called
Fisher-Hartwig singularities.
In the case of only one such singularity, located
at $z=1$, the symbol has the form:
\begin{equation}\label{symbolFH}
f(z)=|z-1|^{2\al}z^\bt e^{-i\pi\bt}e^{V(z)}=
(2-2\cos\theta)^\alpha e^{i\beta(\theta-\pi)}e^{V(e^{i\theta})}, \qquad \mbox{ for
$0<\theta<2\pi$,}
\end{equation}
where $V(z)$ is a sufficiently smooth function (see \cite{DIK2}) on the unit circle.
The singularity at $z=1$ combines a
jump-type (for $\alpha=0$, $\beta\neq 0$) and a root-type singularity
(for $\beta=0$, $\alpha\neq 0$). For this symbol the sum (\ref{Sf}) diverges,
and therefore Szeg\H o's theorem does not hold. The asymptotics for the Toeplitz
determinant are given instead by the expression \cite{BS, W, Basor, Ehr, DIK2}:
\begin{multline}\label{FH}
\ln D_n=nV_0+\sum_{k=1}^{\infty}kV_kV_{-k}
-(\alpha-\beta)\sum_{k=1}^{\infty}V_k-(\alpha+\beta)\sum_{k=1}^{\infty}V_{-k}\\
+(\alpha^2-\beta^2)\ln n
+\ln\frac{G(1+\alpha+\beta)G(1+\alpha-\beta)}{G(1+2\alpha)}+o(1), \qquad\mbox{ as
$n\to\infty$,}
\end{multline}
if
\be\label{ab}
\al\pm\bt\neq -1,-2,...,
\ee
with
\[
V_k=\frac{1}{2\pi}\int_0^{2\pi}V(e^{i\theta})e^{-ik\theta}d\theta.
\]
Here $G$ is Barnes' $G$-function, which is an entire function having the properties:
$G(z+1)=\Gamma(z)G(z)$, where $\Gamma(z)$ is Euler's $\Gamma$-function, and $G(1)=1$,
$G(-k)=0$ for $k=0,1,2,\dots$. Note that if $V(z)\equiv 0$,
there exists an explicit expression
for $D_n$ with symbol $|z-1|^{2\al}z^\bt e^{-i\pi\bt}$ in terms of $G$-functions \cite{BS2,BW}.

Suppose now that a symbol depends on a parameter $t$ ($f(z)=f(z;t)$) so that
when $t>0$ the symbol is ``regular'', i.e. Szeg\H o's theorem holds for $D_n(t)$,
while at $t=0$ the symbol has the form (\ref{symbolFH}). The purpose of the present
paper
is to study the transition from (\ref{Sz}) to (\ref{FH}) as $t\to 0$.
Namely, consider the following symbol
\begin{equation}\label{symbol}
f(z)=(z-e^{t})^{\alpha+\beta}(z-e^{-t})^{\alpha-\beta}z^{-\alpha+\beta}
e^{-i\pi(\alpha+\beta)}e^{V(z)},\qquad \al\pm\bt\neq -1,-2,...,
\end{equation}
where $t\geq 0$ is sufficiently small and $\alpha, \beta\in\mathbb C$ with
$\Re\alpha>-\frac{1}{2}$.
We further assume that $V(z)$ is analytic in an annulus containing the unit circle
and write it there in terms of its Fourier series
\begin{equation}V(z)=\sum_{k=-\infty}^{+\infty}V_kz^k\label{V}.\end{equation}
We define the powers in (\ref{symbol}) with arguments between $0$ and $2\pi$.
With this choice of branch cuts, $f$, and moreover $\ln f$,  is analytic in
$\mathbb C\setminus \left([0,e^{-t}]\cup[e^{t},+\infty)\right)$ and, in
particular, on the unit circle for $t>0$.
Therefore, for any fixed $t>0$, the asymptotics of  the Toeplitz determinant $D_n(t)$
are given by (\ref{Sz}). Calculating the Fourier coefficients $(\ln f)_k$, we obtain
\begin{multline}\label{Szego}
\ln
D_n(t)=nt(\alpha+\beta)+nV_0\\+\sum_{k=1}^{\infty}k\left[V_k-(\alpha+\beta)\frac{e^{-tk}}{k}\right]\left[V_{-k}-(\alpha-\beta)\frac{e^{-tk}}{k}\right]+o(1),
\qquad \mbox{ as $n\to\infty$, $t>0$.}
\end{multline}
For $t=0$, the symbol reduces to (\ref{symbolFH}) (with analytic $V$). Therefore,
for $t=0$, the asymptotics of $D_n(t)$ are given by (\ref{FH}).
In the present paper we describe
the transition from (\ref{Szego}) to (\ref{FH}) when $t$ decreases to $0$.

The paper also sets the stage for analysis of various other transition asymptotics
for Toeplitz determinants,
such as 2 singularities approaching each other; emergence of an arc of the unit
circle where
the symbol $f=0$ from 2 jump-type singularities at the ends of the arc, etc.

Our analysis explains to some extent the question of connection between Painlev\'e
tau-functions and Toeplitz determinants which was noticed before: see
\cite{BTpainleve} for a discussion
and references. Historically and as the most prominent example, this connection
appeared in the study
of 2-spin correlation functions in the 2-dimensional Ising model. We discuss this in
the following section.

\subsection{Application: the two-dimensional Ising model}

Transitions between Szeg\H{o} weights and Fisher-Hartwig weights arise for example
in the
theory of solvable two-dimensional statistical models and one-dimensional Heisenberg
spin chains. Recall
the two-dimensional Ising model solved by Onsager (see, e.g., \cite{MW,McCoy}).
In this model a $2{\cal M}\times 2{\cal N}$ rectangular lattice is considered with
an associated
spin variable $\si_{jk}$ taking the values $1$ and $-1$
at each vertex $(j,k)$, $-{\cal M}\le j\le {\cal M}-1$,
$-{\cal N}\le k\le {\cal N}-1$.
There are $2^{4{\cal M}{\cal N}}$ possible spin configurations
$\{\si\}$ of the lattice (a configuration corresponds to values of all
$\si_{jk}$ fixed). We associate with each configuration the energy
of the nearest-neighbor coupling (imposing the cyclic boundary conditions on the lattice)
\be
E(\{\si\})=
-\sum_{j=-{\cal M}}^{{\cal M}-1}\sum_{k=-{\cal N}}^{{\cal N}-1}
\left(\ga_1\si_{jk}\si_{j\,k+1}+\ga_2\si_{jk}\si_{j+1\,k}\right),\qquad
\ga_1,\ga_2>0.
\ee
The partition function at a temperature $T>0$ is equal to
\be
Z(T)=
\sum_{\{\si\}}e^{-E(\{\si\})/T},
\ee
where the sum is over all configurations. A remarkable feature of this model
is the presence of a thermodynamic phase transition in the limit of the
infinite lattice at a certain temperature $T_c$ depending on $\ga_1$, $\ga_2$.

Define a 2-spin correlation function by the expression
\be\label{corr}
<\si_{00}\si_{nn}>=
\lim_{{\cal M},{\cal N}\to\infty}{\frac{1}{Z(T)}}
\sum_{\{\si\}}\si_{00}\si_{nn}e^{-E(\{\si\})/T}.
\ee
For large $n$, this function measures the long-range order
in the lattice at a temperature $T$, which determines magnetization. Indeed
one can show that the spontaneous magnetization $M$ is given by the expression
\be\label{M}
M=\sqrt{\lim_{n\to\infty}<\si_{00}\si_{nn}>}.
\ee

It is a remarkable fact that the 2-spin correlation function is a Toeplitz
determinant
\be\label{Ising}
<\si_{00}\si_{nn}>=e^{nt/2} D_n(t),\qquad
f(z;t)=(z-e^t)^{-1/2}(z-e^{-t})^{1/2}z^{-1/2}e^{i\pi/2},
\ee
where
\be
e^t=\sinh{\frac{2\ga_1}{T}}\sinh{\frac{2\ga_2}{T}},
\ee
and the branches of the roots are chosen with the arguments from $0$ to $2\pi$.
The symbol in (\ref{Ising}) has the form (\ref{symbol}) with
\be\label{Iab}
\al=0,\qquad \bt=-{\frac{1}{2}},\qquad V(z)\equiv 0.
\ee
The critical temperature $T_c$ is defined by the condition that
$t=0$.

For $T<T_c$ we have $t>0$. Therefore the strong Szeg\H o limit theorem (\ref{Szego}) holds,
and using the elementary identity
\be\label{logid}
\sum_{k=1}^\infty e^{-2kt}/k=-\ln(1-e^{-2t}),
\ee
we rederive the well-known result
\begin{multline}\label{D1}
e^{nt/2}D_n(t)=(1-e^{-2t})^{1/4}(1+o(1))=\\
\left[1-\left(\sinh{\frac{2\ga_1}{T}}\sinh{\frac{2\ga_2}{T}}\right)^{-2}
\right]^{1/4}(1+o(1)),\qquad\mbox{ as $n\to\infty$},\ T<T_c.
\end{multline}
The  correlations tend to a constant as $n\to\infty$: the model exhibits the
long-range order.
Notice that by (\ref{M}), the asymptotics (\ref{D1}) imply the existence of the
spontaneous magnetization for $T<T_c$ and the famous power-law decay
of the magnetization $M\sim \mathrm{Const}\cdot (T_c-T)^{1/8}$ as $T\nearrow T_c$.

At $T=T_c$, we have $t=0$, and therefore a Fisher-Hartwig singularity with the parameters
$\al=0$, $\bt=-1/2$ appears at $z=1$. In this case, (\ref{FH}) holds and we
obtain (cf. \cite{MW,McCoy})
\be\label{D2}
D_n(0)=\frac{\sqrt{\pi}G(1/2)^2}{n^{1/4}}(1+o(1)),
\ee
so the correlations decrease as $n^{-1/4}$, $n\to\infty$: the long-range order is
destroyed.
Note that as $V(z)\equiv 0$, there is an explicit expression for $D_n(0)$ (cf. the remark
following (\ref{FH},\ref{ab})):
\[
D_n(0)=\left({2\over n}\right)^n \prod_{k=1}^{n-1}\left(1-{1\over 4k^2}\right)^{k-n}.
\]

For $T>T_c$ we have $t<0$. The symbol still has a
singularity at $z=1$ but now with the parameters $\al=0$, $\bt=-1$.
This is the situation of a degenerate type of a Fisher-Hartwig singularity, and (\ref{FH})
does not hold in this case as one of the $G$-functions vanishes.
Calculations \cite{MW} show an
$n^{-1/2}e^{nt}$ decay of the correlations (\ref{corr}) as  $n\to\infty$.
There is no long-range order and $M=0$.

The transition $T\to T_c$ for large $n$ was studied in \cite{WMTB,
MTW, T} (for a more general correlation function
$<\si_{00}\si_{nm}>$). The authors took $n\to\infty$ with
$x=n(e^{2t}-1)$ {\it fixed} and found, in particular,
that in this limit
\be\label{Wu}
n^{1/4}<\si_{00}\si_{nn}>\rightarrow F(x),
\ee
where $F(x)$ is given in terms of a solution to Painlev\'e III equation (reducible
to Painlev\'e V: see equation (\ref{sigma5ising}) below). Moreover, in \cite{MTW},
McCoy, Tracy and Wu
evaluated the connection formulae for this Painlev\'e III function
and showed that the limiting behavior of $F(x)$ as $x \to \infty$ and as $x \to 0$
formally matches  the  asymptotics  (\ref{D1}) and  (\ref{D2}), respectively. The
matching with (\ref{D2}) was, however, checked only up to a multiplicative constant.
A more detailed evaluation of the small $x$ behavior of the function
$F(x)$ was carried out later  by Tracy in \cite{T}, and reproduced exactly
the constant in the critical point  asymptotics (\ref{D2}). The calculations
of \cite{MTW} and \cite{T} were  based on an alternative representation
of the function $F(x)$ as an infinite series of integrals and were rather
involved.

The results of the present paper, namely a particular case of Theorem \ref{theorem:
Toeplitz2}
below, fully describe (up to a possibility of a finite number of
poles: see below) the transition of
the correlation function (\ref{corr}) for large $n$ from $T<T_c$ to $T=T_c$,
that is the transition from (\ref{D1}) to (\ref{D2}).
Note that we do not rely on (\ref{Wu}) as the parameter
$x$ in our analysis is not necessarily fixed, in fact,
our asymptotics for the correlation function are {\it uniform}
in the whole range $x\in[0,\infty)$ (away from a finite number of positive points).
More precisely, our asymptotics as $n\to\infty$ are uniform for all
$T\in [T_1,T_c]$  (away from a finite number of positive $x$'s)
for some $T_1<T_c$ with $T_c-T_1$ sufficiently small.

The description of the Ising double scaling theory which we obtain as
a particular case of Theorem \ref{theorem: Toeplitz2} is in
agreement with the  classical results of \cite{WMTB, MTW, T} (see also
Remark \ref{MTWBus} below). Furthermore,
by obtaining the uniform asymptotics for the whole transition range of temperatures $T\le T_c$,
we complement the analysis of this case of the Wu-McCoy-Tracy-Barouch scaling theory.

Another example where Theorems \ref{theorem: Toeplitz}, \ref{theorem: Toeplitz2}
below can be applied is the so-called emptiness
formation probability in a Heisenberg spin chain \cite{Abanov}.

\subsection{Statement of results}
Consider the second order ODE
\begin{multline}
\left(x\frac{d^2\sigma}{dx^2}\right)^2 = \left(\sigma
-x\frac{d\sigma}{dx} +2\left(\frac{d\sigma}{dx}\right)^2 +
2\alpha\frac{d\sigma}{dx}\right)^2 \\
\label{sigma5}
-4\left(\frac{d\sigma}{dx}\right)^2\left(\frac{d\sigma}{dx} +\alpha
+\beta\right) \left(\frac{d\sigma}{dx} +\alpha -\beta\right).
\end{multline}
This is the Jimbo-Miwa-Okamoto $\sigma$-form \cite{J, jm2} of the
fifth Painlev\'e equation
\begin{equation}\label{PVintro}
u_{xx}=\left(\frac{1}{2u}+\frac{1}{u-1}\right)u_{x}^2-\frac{1}{x}u_x+\frac{(u-1)^2}{x^2}\left(A
u+\frac{B}{u}\right)+\frac{C u}{x} +D\frac{u(u+1)}{u-1},
\end{equation}
with the parameters $A, B, C, D$ given by
\begin{equation}\label{ABCD}
A=\frac{1}{2}(\alpha-\beta)^2,\qquad
B=-\frac{1}{2}(\alpha+\beta)^2,\qquad C=1+2\beta, \qquad
D=-\frac{1}{2}.
\end{equation}

\medskip

In the following theorem, we give the asymptotic expansion for the Toeplitz determinant
with symbol (\ref{symbol}) as $n\to \infty$ which is valid uniformly for $0<t<t_0$.
Our asymptotic expansion interpolates between Szeg\H{o} and Fisher-Hartwig
asymptotics.

\begin{theorem}\label{theorem: Toeplitz}
Let $\alpha\in \mathbb R$, $\alpha>-\frac{1}{2}$, $\beta\in i\mathbb
R$. Let $f$ be defined by (\ref{symbol}) and consider the Toeplitz
determinant $D_n(t)$ defined by (\ref{Toeplitz}) corresponding to
this symbol. The following asymptotic expansion holds as
$n\to\infty$ with the error term $o(1)$ uniform for $0< t <
t_0$ where $t_0$ is sufficiently small:
\begin{multline}\label{expansion Dn}
\ln D_n(t)=n V_0+(\al+\bt)nt+
\sum_{k=1}^{\infty}k\left[V_k-(\alpha+\beta)\frac{e^{-tk}}{k}\right]
\left[V_{-k}-(\alpha-\beta)\frac{e^{-tk}}{k}\right]\\
+\ln\frac{G(1+\alpha+\beta)G(1+\alpha-\beta)}{G(1+2\alpha)}
+\Om(2nt)+o(1),
\end{multline}
where $G(z)$ is Barnes' G-function, and
\begin{equation}
\label{def
Omega}\Om(2nt)=\int_0^{2nt}\frac{\sigma(x)-\alpha^2+\beta^2}{x}dx+(\alpha^2-\beta^2)\ln
2nt.
\end{equation}
The function $\sigma(x)$ is a particular solution to the equation
(\ref{sigma5}) which is real analytic on $(0, +\infty)$, and has the
following asymptotics for $x>0$: \be\label{westintro}
\sigma(x)=\begin{cases}\al^2-\bt^2+
\frac{\alpha^2-\beta^2}{2\alpha}\{x-x^{1+2\al}C(\alpha,
\beta)\}(1+\bigO(x)),& x\to 0,\quad 2\al\notin\mathbb Z\cr
\al^2-\bt^2+\bigO(x)+\bigO(x^{1+2\al})+\bigO(x^{1+2\al}\ln x),& x\to
0,\quad 2\al\in\mathbb Z\cr x^{-1+2\alpha}e^{-x}
\frac{-1}{\Gamma(\alpha-\beta)\Gamma(\alpha+\beta)} \left(1 +
\bigO\left(\frac{1}{x}\right)\right),& x\to +\infty,
\end{cases}
\ee
with
\be\label{Cab}
C(\al,\bt)=
\frac{\Gamma(1+\al+\bt)\Gamma(1+\al-\bt)}{\Gamma(1-\al+\bt)\Gamma(1-\al-\bt)}
\frac{\Gamma(1-2\al)}{\Gamma(1+2\al)^2}\frac{1}{1+2\al},
\ee
where $\Gamma(z)$ is Euler's $\Gamma$-function.
\end{theorem}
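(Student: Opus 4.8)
The strategy is the now-standard combination, for Toeplitz determinants with a varying symbol, of a Riemann--Hilbert (RH) analysis of the associated orthogonal polynomials with a differential identity in $t$; the two features specific to the present situation are that the steepest-descent analysis must be carried out uniformly in the double-scaling variable $s=2nt\in(0,2nt_0)$, and that the local parametrix near $z=1$ has to be built from a model RH problem governed by the $\si$-form (\ref{sigma5}) of Painlev\'e~V. First I would write $D_n(t)=\prod_{j=0}^{n-1}\ch_j(t)^{-2}$, with $\ch_j$ the leading coefficient of the degree-$j$ orthonormal polynomial for the weight $f(e^{i\th};t)$ on $C$, and introduce the Fokas--Its--Kitaev $2\times 2$ RH problem $Y=Y(z;n,t)$, normalised by $Y(z)\,\mathrm{diag}(z^{-n},z^{n})\to I$ as $z\to\infty$, whose first row encodes the monic degree-$n$ orthogonal polynomial. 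Since the $t$-dependence of $f$ sits entirely in the branch points $e^{\pm t}$ (the factor $z^{-\al+\bt}e^{-i\pi(\al+\bt)}e^{V(z)}$ being independent of $t$), differentiating $\ln D_n$ in $t$ and using the RH problem yields a differential identity of the shape
\[
\frac{\partial}{\partial t}\ln D_n(t)=(\al+\bt)\,n+\mathcal E_n(t),
\]
in which $\mathcal E_n(t)$ is expressed through the entries of (the transformed) $Y$ in a fixed neighbourhood of $z=1$ only; everything then reduces to the asymptotic evaluation of $\mathcal E_n(t)$.

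The steepest descent proceeds in the usual way away from $z=1$: normalise $Y$ by $z^{n}$, open lenses along the arcs of $C$, and construct the global parametrix from the Szeg\H{o} function of $e^{V}$ together with an explicit scalar outer function carrying the exponents $\al\pm\bt$ and the prescribed branch cuts, so that all jumps away from $z=1$ become exponentially close to $I$. The essential point is the parametrix in a fixed disc $U\ni 1$. There the two Fisher--Hartwig--type singularities sit at $e^{\pm t}$, a distance $\sim t$ apart; under a conformal map $\ze=\ze(z)$ with $\ze\sim 2in(z-1)$ they go to the points $\approx\pm is/2$, $s=2nt$, so for $s$ of order one they neither coincide nor separate. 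The matching conditions on $\partial U$ are met by a model function $\Psi=\Psi(\ze;s)$ with jumps on a fixed contour through $\pm is/2$, local behaviour $\sim|\ze\mp is/2|^{\al\pm\bt}$ at those two points, and a prescribed oscillatory normalisation as $\ze\to\infty$. The hypotheses $\al\in\mathbb R$, $\al>-\tfrac12$, $\bt\in i\mathbb R$ enter exactly here: they keep the local exponents in the range in which $\Psi$ is uniquely solvable for every $s>0$ and free of spurious poles --- the poles alluded to in the Introduction, and permitted in Theorem~\ref{theorem: Toeplitz2}, occur precisely when this range is left.

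The model function $\Psi(\ze;s)$ carries a Lax pair in $\ze$ and in $s$, and its compatibility condition reduces, after standard manipulations, to the Jimbo--Miwa--Okamoto $\si$-form (\ref{sigma5}) of Painlev\'e~V with the parameters (\ref{ABCD}); the relevant solution is the logarithmic $s$-derivative of the corresponding isomonodromic $\tau$-function, and this is the $\si$ of (\ref{def Omega}). To conclude one needs the behaviour of $\Psi$, hence of $\si(s)$, at both ends of the half-line. As $s\to+\infty$ one shows $\Psi$ factorises into two parametrices attached to the well-separated singularities $\pm is/2$, which yields the exponentially small third line of (\ref{westintro}), including the prefactor $-1/(\Gamma(\al-\bt)\Gamma(\al+\bt))$. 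As $s\to0^+$ the two singularities coalesce and $\Psi$ collapses to the confluent-hypergeometric parametrix of a single Fisher--Hartwig singularity with parameters $\al,\bt$; to leading order this gives $\si(x)\to\al^2-\bt^2$ and the first two lines of (\ref{westintro}), and from the first correction it produces the constant $C(\al,\bt)$ of (\ref{Cab}). I expect this last item --- the exact evaluation of the connection constant $C(\al,\bt)$, the RH-theoretic counterpart of the intricate series computations of McCoy--Tracy--Wu and of Tracy --- to be the main obstacle, together with the proof that the model problem is solvable, which must go through a vanishing lemma.

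Finally one assembles the pieces. The error matrix $R=(\text{global and local parametrices})^{-1}(\text{transformed }Y)$ solves a small-norm RH problem with jumps tending to $I$ uniformly for $0<t<t_0$, so that $R=I+o(1)$ together with the derivatives required below. Feeding the parametrices (and $R=I+o(1)$) into the differential identity and combining the contributions of the outer and the local parametrices gives, uniformly for $t\in(0,t_0)$,
\[
\frac{\partial}{\partial t}\ln D_n(t)=(\al+\bt)\,n+\frac{d}{dt}\Bigl[\bigl(\text{Szeg\H{o}-type sum in (\ref{expansion Dn})}\bigr)+\Om(2nt)\Bigr]+o(1),
\]
where one uses $\frac{d}{ds}\Om(s)=\si(s)/s$. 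Integrating from $t$ to the fixed $t_0$ and inserting the Szeg\H{o} asymptotics (\ref{Szego}) at $t_0$ produces (\ref{expansion Dn}) up to the $t$-independent additive constant $-\lim_{n\to\infty}\Om(2nt_0)=-\Om(+\infty)$. That this constant equals $\ln\frac{G(1+\al+\bt)G(1+\al-\bt)}{G(1+2\al)}$ follows by letting $t\to0^+$ in the resulting expansion, using the continuity of $D_n(t)$ at $t=0$ together with $\lim_{s\to0^+}\bigl(\Om(s)-(\al^2-\bt^2)\ln s\bigr)=0$, and matching with the Fisher--Hartwig formula (\ref{FH}); equivalently, $\Om(+\infty)$ can be evaluated directly from the asymptotics of the model problem established above. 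Real-analyticity of $\si$ on $(0,\infty)$ is inherited from the analytic dependence of the uniquely solvable model problem on $s$.
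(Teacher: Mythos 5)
Your proposal is correct and follows essentially the same route as the paper: the orthogonal-polynomial Riemann--Hilbert analysis with a Painlev\'e~V model parametrix at $z=1$ uniform in $x=2nt$, the two endpoint asymptotics of $\sigma$ (confluent-hypergeometric/hypergeometric analysis as $x\to0$, exponentially small jumps as $x\to+\infty$), a vanishing lemma giving solvability of the model problem for $\al\in\mathbb R$, $\al>-\tfrac12$, $\bt\in i\mathbb R$, and integration of a $t$-differential identity with the integration constant ultimately pinned by the known Fisher--Hartwig asymptotics (\ref{FH}) at $t=0$. The only real deviation is bookkeeping of that constant: the paper integrates from $\ep$ and lets $\ep\to0$ using (\ref{FH}) directly, while you integrate toward $t_0$ with Szeg\H{o} at $t_0$ and recover the Barnes $G$ constant as $-\Om(+\infty)$ via the same $t\to0^+$ matching --- which is fine, but your parenthetical alternative of evaluating $\Om(+\infty)$ ``directly from the asymptotics of the model problem'' is not substantiated (the endpoint asymptotics of $\sigma$ do not determine its total integral; the paper obtains that identity only as a corollary of the theorem), so the matching argument is the one to rely on.
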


\begin{remark}
Later on, we will construct $\sigma(x)$ explicitly in terms of a
Riemann-Hilbert problem.
\end{remark}

\begin{remark}
With increasing effort, one can calculate more terms in the expansion
(\ref{expansion Dn}) using our approach.
\end{remark}

The function $\sigma=\sigma(x;\alpha,\beta)$ is defined for  $x\in
\mathbb C$ with a cut from zero to infinity. It is analytic in the
cut plane apart from possible poles. Asymptotics (\ref{westintro})
imply that there are no poles for $x$ positive and sufficiently
large. Hence the number of possible poles of $\sigma(x)$ on
$(0,+\infty)$ is finite. We show below that for $\al>-\frac{1}{2}$
real, $\bt$ imaginary, there are no poles on the real half-axis
$(0,+\infty)$. Therefore we took the intervals of the real line as a
path of integration in (\ref{def Omega}). For $\bt$ arbitrary,
$\Re\al>-1/2$, a similar result holds, but we have to choose a
path of integration in the complex plane avoiding possible poles
which we denote $\{x_1,\dots,x_\ell\}$. Namely, we have
\begin{theorem}\label{theorem: Toeplitz2}
Let $\alpha, \beta\in\mathbb C$ with $\Re\alpha>-\frac{1}{2}$,
$\al\pm\bt\neq -1,-2,\dots$, and let $s_\de$ denote a sector
$-\pi/2+\de<\arg x<\pi/2-\de$, $0<\de<\pi/2$. Let $f$ be defined by
(\ref{symbol}) and consider the Toeplitz determinants $D_n(t)$
defined by (\ref{Toeplitz}) corresponding to this symbol. There
exists a finite set $\{x_1, \ldots , x_{\ell}\}\in s_\de$ (with
$\ell=\ell(\alpha,\beta,\delta)$ and $x_j=x_j(\alpha,\beta)\neq 0$)
such that the expansion (\ref{expansion Dn}) holds uniformly for
$t\in s_\delta, |t|<t_0$ (with $t_0$ sufficiently small) as long as
$2nt$ remains bounded away from the set $\{x_1, \ldots ,
x_{\ell}\}$. The function $\Omega$ is defined by (\ref{def Omega}),
where the path of integration is chosen in $s_\delta$, connecting
$0$ with $2nt$ and not containing any of the points $\{x_1, \ldots ,
x_{\ell}\}$. Moreover $\sigma(x)$ solves the ODE (\ref{sigma5}) and has
the asymptotics in the mentioned sector given by (\ref{westintro}).
\end{theorem}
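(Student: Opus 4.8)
The plan is to analyze the Toeplitz determinant $D_n(t)$ via the Riemann–Hilbert (RH) problem for the orthogonal polynomials on the unit circle associated with the symbol $f(z;t)$. First I would set up the standard $2\times 2$ RH problem $Y(z)$ whose solution encodes the orthogonal polynomials, and recall the identity expressing $D_n(t)$ (or rather a logarithmic derivative of it) in terms of entries of $Y$ evaluated at $z=0$ and $z=\infty$; equivalently, I would work with the differential identity $\partial_t \ln D_n(t)$, which reduces to residue computations involving $Y$ near the singularities $e^{\pm t}$. The crucial feature is that the two singularities $e^{\pm t}$ of the symbol sit at distance $\sim 2t$ from each other, so after rescaling $z=1+\frac{\zeta}{2n}$ (roughly), the natural variable in the local parametrix is $x=2nt$, and the transition regime is precisely where $2nt$ stays bounded.

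Next I would carry out the Deift–Zhou steepest-descent analysis: perform the transformations $Y\mapsto T\mapsto S$ (normalization at infinity and opening of lenses along the unit circle, using the fact that $V$ is analytic in an annulus so the symbol extends off the circle), construct the global parametrix $N$ solving the RH problem without the jumps near $z=1$, and — this is the heart of the matter — build a \emph{local parametrix} in a fixed-size neighbourhood of $z=1$ that simultaneously handles both singularities. For $t$ bounded away from zero relative to $1/n$ one could use two separate confluent-hypergeometric (Bessel/Fisher–Hartwig) parametrices, and for $2nt$ bounded one uses the known Fisher–Hartwig parametrix with a single merged singularity; the new ingredient is a parametrix valid \emph{uniformly} across the whole range, whose construction is governed by a model RH problem on a disk with jumps on several rays. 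That model RH problem is exactly the one whose isomonodromy analysis produces Painlevé V: its $\sigma$-function $\sigma(x)$ — with $x=2nt$ playing the role of the monodromy-deformation parameter — satisfies (\ref{sigma5}), and matching the model solution against $N$ on the boundary circle forces the factor $\Omega(2nt)$ into the asymptotics. I would then verify that $\sigma(x)$ is real analytic on $(0,\infty)$ when $\alpha>-1/2$ is real and $\beta$ is imaginary (no poles, since a pole of $\sigma$ would correspond to non-solvability of the model RH problem, which is excluded by a symmetry/vanishing-lemma argument), and more generally that its poles in the sector $s_\delta$ form the finite set $\{x_1,\dots,x_\ell\}$; the large-$x$ and small-$x$ asymptotics (\ref{westintro}), with the constant $C(\alpha,\beta)$, come from matching the model parametrix with, respectively, the two-separate-singularities parametrix and the merged Fisher–Hartwig parametrix, i.e. from the known connection problem for this special Painlevé V transcendent.

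Having the uniform parametrix, I would close the argument in the usual way: the final RH problem $R=R(z)$ has jumps that are $I+O(1/n)$ uniformly for $2nt$ bounded away from $\{x_j\}$ (the error being controlled by the matching on the boundary of the local disk and by the decay of the global jumps), so $R=I+O(1/n)$ by small-norm theory; feeding this back through the differential identity and integrating in $t$ from a reference point yields (\ref{expansion Dn}), with the Szegő-type sum and the Barnes $G$-function ratio appearing from the global parametrix and the standard Fisher–Hartwig constant, and $\Omega(2nt)$ from the local model. For Theorem~\ref{theorem: Toeplitz2} the only change from Theorem~\ref{theorem: Toeplitz} is bookkeeping: allowing $\beta$ complex and $t$ in the sector $s_\delta$, the model RH problem may fail to be solvable at finitely many values of $x$, giving the poles $x_1,\dots,x_\ell$; one must choose the integration contour for $\Omega$ in $s_\delta$ avoiding these points and check that the steepest-descent estimates remain uniform as long as $2nt$ stays a fixed distance away from them. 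I expect the main obstacle to be the construction and, especially, the \emph{uniformity} of the local parametrix near $z=1$ across the entire transition regime — showing that the Painlevé-V model solution interpolates smoothly between the two-singularity and merged-singularity regimes, with error estimates that do not degenerate as $2nt\to 0$ or $2nt\to\infty$, and carefully identifying the subleading behaviour of $\sigma(x)$ (hence the precise constant $C(\alpha,\beta)$) by tracking the connection formulae through the RH matching.
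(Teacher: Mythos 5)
Your proposal follows essentially the same route as the paper: a $t$-differential identity for $\ln D_n(t)$ in terms of the orthogonal-polynomial RH solution $Y$ at $e^{\pm t}$, Deift--Zhou steepest descent with a Painlev\'e V model parametrix near $z=1$ in the variable $x=2nt$, asymptotic analysis of that model problem at $x\to 0$ and $x\to\infty$ (giving (\ref{westintro}) and the finite exceptional set via meromorphy in $x$ and solvability for small and large $x$), and finally integration of the identity in $t$ with the known Fisher--Hartwig asymptotics at $t=0$ fixing the constant, which is exactly how the Barnes $G$ ratio enters. This matches the paper's proof, so no further comparison is needed.
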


\begin{remark}
It follows from the representation (\ref{expansion Dn}) that the
residue of $\frac{1}{x}\sigma(x)$ at each of its poles in the sector
$-\pi/2<\arg x<\pi/2$ is an entire number. Different choices of the
integration contour in (\ref{def Omega}) correspond, in general, to
different branches of $\ln D_n(t)$. If $\alpha$ and $\beta$ are such
that $\frac{1}{x}\sigma(x)$ has a pole $x$, the determinant $D_n(t)$
is zero at $2nt=x$ (up to an $\bigO(1/n)$ error term).
\end{remark}

\begin{remark}
From now on, we will always consider $t\geq 0$ for simplicity. The
extension to $t\in s_\delta$ is straightforward.
\end{remark}

\begin{remark}\label{MTWBus}
In the example of the Ising model discussed in the
previous section, we have $\al=0$, $\bt=-1/2$ (see \ref{Iab}), and
equation (\ref{sigma5}) becomes
\begin{equation}\label{sigma5ising}
\left(x\frac{d^2\sigma}{dx^2}\right)^2 = \left(\sigma
-x\frac{d\sigma}{dx} +2\left(\frac{d\sigma}{dx}\right)^2 \right)^2 -
4\left(\frac{d\sigma}{dx}\right)^2\left(\left(\frac{d\sigma}{dx}\right)^2
- \frac{1}{4} \right).
\end{equation}
This is exactly the equation which was obtained in \cite{jm2}  for
the function
\begin{equation}\label{zetaF}
\zeta(x) \equiv x\frac{d}{dx}\ln F(x) - \frac{1}{4},
\end{equation}
where $F(x)$ is the right hand side of the Ising double scaling
limit (\ref{Wu}) (see also equation (4.16), with $r = x/2$,  in
\cite{McCoy}).
It follows immediately from our main result (\ref{expansion Dn})
that the function $\zeta(x)$ in \cite{jm2} and our function
$\sigma(x)$ coincide:
\begin{equation}\label{sigmazeta}
\zeta(x) = \sigma(x).
\end{equation}
Thus the application of our Theorem \ref{theorem: Toeplitz2} to the Toeplitz
determinant (\ref{Ising})  yields the complete analysis
(up to the question of existence of a finite number of positive poles $x_j$)
for $T\le T_c$ of
the Jimbo-Miwa Painlev\'e V version of the Wu-McCoy-Tracy-Barouch scaling
theory for the 2D Ising model.
\end{remark}

From the expansion (\ref{expansion Dn}), we can recover the Fisher-Hartwig
asymptotics for $\ln D_n(0)$.
Let $t\to 0$, and $n$ fixed in (\ref{expansion Dn}). Then, using (\ref{def Omega})
and (\ref{westintro}), we obtain
that $\Om(2nt)=(\al^2-\bt^2)\ln(2nt)+o(1)$ if $\Re\alpha>-\frac{1}{2}$. Substituting
this into (\ref{expansion Dn})
and recalling (\ref{logid}), we
obtain (\ref{FH}).

The expansion (\ref{expansion Dn}) should also be consistent with
the Szeg\H o asymptotics for $t$ fixed. We see immediately that the
$\bigO(n)$ term gives, for a fixed $t$, the corresponding term in
the Szeg\H o asymptotics. Consistency of the $\bigO(1)$ terms,
however, yields an interesting identity involving the Painlev\'e
function $\sigma(x)$ via (\ref{def Omega}):
\be
\Om(+\infty)=-\ln\frac{G(1+\alpha+\beta)G(1+\alpha-\beta)}{G(1+2\alpha)}.
\ee

\subsection{The Painlev\'e V Riemann-Hilbert problem}\label{PVsection}

We can say more about the function $\sigma(x)$ than we did in Theorem
\ref{theorem: Toeplitz}: we can construct it explicitly in terms of
a Riemann-Hilbert (RH) problem. Consider the contour
$\Gamma=\cup_{j=1}^6\Gamma_j$ in the complex plane (see Figure
\ref{figure: Gamma}), with
    \begin{align*}
    &\Gamma_1=\frac{1}{2}+e^{i\frac{\pi}{4}}\mathbb R^+,
\qquad\Gamma_2=\frac{1}{2}+e^{i\frac{3\pi}{4}}\mathbb R^+,
\qquad\Gamma_3=\frac{1}{2}+e^{i\frac{5\pi}{4}}\mathbb R^+,\\
&\Gamma_4=\frac{1}{2}+e^{i\frac{7\pi}{4}}\mathbb R^+,\qquad
\Gamma_5=(1,+\infty),\qquad\Gamma_6=(0,1),
    \end{align*}
    with $\Gamma_1, \ldots , \Gamma_5$ oriented towards infinity and $\Gamma_6$
oriented to the right.
By a standard convention, the ``+'' side of the curve is on the left as one faces the direction
of the curve's orientation.

Let $\Re\alpha>-\frac{1}{2}$
and consider the following RH problem for $\Psi=\Psi(\zeta;x,\alpha,\beta)$.
    \begin{figure}[t]
\begin{center}
    \setlength{\unitlength}{0.8truemm}
    \begin{picture}(100,48.5)(0,2.5)

    \put(35,25){\thicklines\circle*{.8}}
    \put(34,27){\small $0$}
    \put(63,27){\small $1$}
    \put(65,25){\thicklines\circle*{.8}}
    \put(35,25){\line(1,0){30}}
    \put(45,25){\thicklines\vector(1,0){.0001}}
    \put(61,25){\thicklines\vector(1,0){.0001}}
    \put(50,25){\line(1,1){25}}
    \put(50,25){\line(1,-1){25}}
    \put(50,25){\line(-1,1){25}}
    \put(50,25){\line(-1,-1){25}}
    \put(65,25){\line(1,0){30}}
    \put(84,25){\thicklines\vector(1,0){.0001}}
    \put(65,40){\thicklines\vector(1,1){.0001}}
    \put(65,10){\thicklines\vector(1,-1){.0001}}
    \put(35,40){\thicklines\vector(-1,1){.0001}}
    \put(35,10){\thicklines\vector(-1,-1){.0001}}

    \put(67,37){\small $\begin{pmatrix}1&e^{\pi i(\alpha-\beta)}\\0&1\end{pmatrix}$}
    \put(-4,40){\small $\begin{pmatrix}1&0\\-e^{-\pi i(\alpha-\beta)}&1\end{pmatrix}$}
    \put(85,27){\small $e^{2\pi i\beta\sigma_3}$}
    \put(2,8){\small $\begin{pmatrix}1&0\\e^{\pi i(\alpha-\beta)}&1\end{pmatrix}$}
    \put(67,11){\small $\begin{pmatrix}1&-e^{-\pi i(\alpha-\beta)}\\0&1\end{pmatrix}$}
    \put(25,20){\small $e^{-\pi i(\alpha-\beta)\sigma_3}$}
    \put(110,35){I}
    \put(50,50){II}
    \put(0,25){III}
    \put(110,7){V}
    \put(50,-3){IV}
    \end{picture}
    \caption{The jump contour and jump matrices for $\Psi$.}
    \label{figure: Gamma}
\end{center}
\end{figure}
    \subsubsection*{RH problem for $\Psi$}
    \begin{itemize}
    \item[(a)] $\Psi:\mathbb C\setminus \Gamma \to \mathbb C^{2\times 2}$ is analytic.
    \item[(b)] $\Psi$ has continuous boundary values on
$\Gamma\setminus\left\{0,\frac{1}{2},1\right\}$, and they are related as
follows,
    \begin{align}
    &\label{RHP Psi:b1}\Psi_+(\zeta)=\Psi_-(\zeta)\begin{pmatrix}1&e^{\pi
i(\alpha-\beta)}\\0&1\end{pmatrix}, &\mbox{ for $\zeta\in\Gamma_1$,}\\
    &\Psi_+(\zeta)=\Psi_-(\zeta)\begin{pmatrix}1&0\\-e^{-\pi
i(\alpha-\beta)}&1\end{pmatrix}, &\mbox{ for $\zeta\in\Gamma_2$,}\\
    &\Psi_+(\zeta)=\Psi_-(\zeta)\begin{pmatrix}1&0\\e^{\pi
i(\alpha-\beta)}&1\end{pmatrix}, &\mbox{ for $\zeta\in\Gamma_3$,}\\
    &\Psi_+(\zeta)=\Psi_-(\zeta)\begin{pmatrix}1&-e^{-\pi
i(\alpha-\beta)}\\0&1\end{pmatrix}, &\mbox{ for $\zeta\in\Gamma_4$,}\\
&\Psi_+(\zeta)=\Psi_-(\zeta)e^{2\pi i\beta\sigma_3}, &\mbox{ for $\zeta\in\Gamma_5$},\\
    &\label{RHP Psi:b6}\Psi_+(\zeta)=\Psi_-(\zeta)e^{-\pi i(\alpha-\beta)\sigma_3},
&\mbox{ for $\zeta\in\Gamma_6$},
    \end{align}
    with $\sigma_3=\begin{pmatrix}1&0\\0&-1\end{pmatrix}$.
    \item[(c)] $\Psi$ has the following behavior as $\zeta\to\infty$
(for some matrices $C_1=C_1(x,\alpha,\beta)$, $C_2=C_2(x,\alpha,\beta)$),
    \begin{equation}\label{RHP Psi: c}
    \Psi(\zeta)=\left(I+\frac{C_1}{\zeta}+\frac{C_2}{\zeta^2}+\bigO(\zeta^{-3})\right)
\zeta^{-\beta\sigma_3}e^{-\frac{x}{2}\zeta\sigma_3}.
    \end{equation}
\item[(d0)]As $\zeta\to 0$,
    \begin{equation}\label{RHP Psi: d0}
    \Psi(\zeta)=\bigO\begin{pmatrix}|\zeta|^{\frac{\alpha-\beta}{2}}&|\zeta|^{-\frac{\alpha-\beta}{2}}\\
    |\zeta|^{\frac{\alpha-\beta}{2}}&|\zeta|^{-\frac{\alpha-\beta}{2}}\end{pmatrix}.
    \end{equation}
    \item[(d1)] As $\zeta\to 1$,
    \begin{equation}\label{RHP Psi: d1}
    \Psi(\zeta)=\bigO\begin{pmatrix}|\zeta-1|^{-\frac{\alpha+\beta}{2}}&|\zeta-1|^{\frac{\alpha+\beta}{2}}\\
    |\zeta-1|^{-\frac{\alpha+\beta}{2}}&|\zeta-1|^{\frac{\alpha+\beta}{2}}\end{pmatrix}.
    \end{equation}
    Furthermore $\Psi$ is bounded near $\frac{1}{2}$.
    \end{itemize}
    The RH conditions imply (by a standard argument) that the determinant of
the solution $\Psi$ (which is, if it exists, unique) is identically equal to
$1$, and consequently we have using (\ref{RHP Psi: c}) that $\tr C_1=0$.
Let us denote the matrix elements of $C_1$ by
\[C_1(x)=\begin{pmatrix}q(x)&r(x)\\t(x)&-q(x)\end{pmatrix}.\]
Define the functions $v$ and $u$ in terms of the matrix elements of $C_1$:
\begin{align}
&\label{def v intro}v(x)=\frac{\alpha+\beta}{2}-q(x)-xr(x)t(x), \\
&\label{def u intro}u(x)=1+\frac{xt}{(2\beta+1-x)t(x)+xt'(x)}.
\end{align}
We will show in Section \ref{section 43} below that
\begin{equation}\label{def sigma}
\sigma(x)=\int_x^{+\infty}v(\xi)d\xi
\end{equation}
is the function appearing in Theorem \ref{theorem: Toeplitz} and
Theorem \ref{theorem: Toeplitz2}. The RH problem for $\Psi$ is a
special case of the RH problem associated to the fifth Painlev\'e
equation, see e.g.\ \cite{FIKN, FokasMuganZhou}.

\medskip

We prove the following.
\begin{theorem}\label{theorem: Painleve}
Let $\alpha, \beta \in\mathbb C$ and $\Re\alpha>-\frac{1}{2}$.
\begin{itemize}
\item[(i)]
The RH problem for $\Psi$ is uniquely solvable for all $x>0$
except possibly for a finite number of positive $x$-values.
We denote the $x$-values for which the RH problem
is not solvable by $\{x_1, \ldots , x_k\}$, with $x_j=x_j(\alpha, \beta)$ and
$k=k(\alpha, \beta)$.
\item[(ii)] If $\Im\alpha=0$ and $\Re\beta=0$ the RH problem is solvable for all
positive $x$-values.
\item[(iii)]
The function $v$ defined by (\ref{def v intro}) is analytic in
$(0,+\infty)\setminus\{x_1, \ldots, x_k\}$, and solves, together
with $u$ defined by (\ref{def u intro}), the system
\begin{align}\label{system uv introb}
&xu_x=xu-2v(u-1)^2+(u-1)[(\alpha-\beta)u-\beta-\alpha],\\
&\label{system uv2
introb}xv_x=uv[v-\alpha+\beta]-\frac{v}{u}(v-\beta-\alpha).
\end{align}
 \item[(iv)] The function
$v$ has the asymptotics given by \be\label{vestintro} v(x)=\begin{cases}
-\frac{\alpha^2-\beta^2}{2\alpha}\{1-(2\alpha+1)x^{2\al}C(\alpha,
\beta)\}(1+\bigO(x)),& x\to 0,\quad 2\al\notin\mathbb Z,\cr
\bigO(1)+\bigO(x^{2\al})+\bigO(x^{2\al}\ln x),& x\to 0,\quad
2\al\in\mathbb Z\cr x^{-1+2\alpha}e^{-x}
\frac{-1}{\Gamma(\alpha-\beta)\Gamma(\alpha+\beta)} \left(1 +
\bigO\left(\frac{1}{x}\right)\right),& x\to +\infty,
\end{cases}
\ee
where $C(\al,\bt)$ is defined in (\ref{Cab}).

In addition, we have
\begin{equation}\label{intv}
\int_0^{+\infty}v(x)dx=\alpha^2-\beta^2
\end{equation}
if $\Im\al=0$ and $\Re\bt=0$. In the general case,  $\alpha, \beta \in\mathbb C$,
$\Re\alpha>-\frac{1}{2}$, equation (\ref{intv}) holds up to addition of $2\pi i m$,
$m\in\mathbb Z$, with the path of integration avoiding $\{x_1, \ldots, x_k\}$.
\end{itemize}
\end{theorem}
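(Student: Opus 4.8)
The plan is to handle the four items in turn.

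\textbf{Solvability, parts (i)--(ii).} Uniqueness is automatic from $\det\Psi\equiv1$, already recorded in the excerpt. For existence I would pass to the equivalent singular integral equation on $\Gamma$ with a jump depending analytically on $x$, set on a weighted $L^2$-space chosen so that the endpoint behaviours (\ref{RHP Psi: d0})--(\ref{RHP Psi: d1}) are admissible for $\Re\alpha>-\tfrac12$. When $\Im\alpha=0$ and $\Re\beta=0$ I would invoke Zhou's vanishing lemma: the jump matrices (\ref{RHP Psi:b1})--(\ref{RHP Psi:b6}) have a Schwarz symmetry under $\zeta\mapsto\bar\zeta$ together with a positivity of the diagonal jump on the real axis — the latter after a preliminary elementary conjugation (by powers of $\zeta$ and $\zeta-1$) which removes or simplifies the constant jump on $\Gamma_6$ and, when $\Re\alpha$ is large, also tempers the singular behaviour at the endpoints — so that any solution of the associated homogeneous problem vanishes identically, and Fredholm theory gives existence for every $x>0$. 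For general complex $\alpha,\beta$ the $x=0$ problem is the Fisher--Hartwig model RH problem, explicitly solvable (as in \cite{DIK2}) precisely because $\alpha\pm\beta\notin\{-1,-2,\dots\}$; hence the integral operator is invertible at $x=0$, and by the analytic Fredholm theorem $\Psi$ is meromorphic in $x$ off a discrete set. Since the $x\to+\infty$ analysis below shows solvability for all large $x$, and continuity at $x=0$ gives it for all small $x>0$, that set, call it $\{x_1,\dots,x_k\}$, is finite.

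\textbf{Lax pair and the system, part (iii).} Because the jump matrices are constant in $\zeta$ and in $x$, the matrices $A(\zeta)=\partial_\zeta\Psi\,\Psi^{-1}$ and $B(\zeta)=\partial_x\Psi\,\Psi^{-1}$ have no jump across $\Gamma$; combining this with (\ref{RHP Psi: c})--(\ref{RHP Psi: d1}) and Liouville's theorem forces $A(\zeta)=-\tfrac{x}{2}\sigma_3+\tfrac{A_0}{\zeta}+\tfrac{A_1}{\zeta-1}$ (simple poles at $0$ and $1$, a rank-one irregular point at $\infty$) and $B(\zeta)=-\tfrac{\zeta}{2}\sigma_3+B_0$, with $A_0,A_1$ having eigenvalues $\pm\tfrac{\alpha-\beta}{2}$ and $\pm\tfrac{\alpha+\beta}{2}$ and the residue data expressed through $C_1,C_2$ via (\ref{RHP Psi: c}). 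The zero-curvature relation $A_x-B_\zeta+[A,B]=0$ is the isomonodromy condition — Painlev\'e~V with the parameters (\ref{ABCD}) — and rewriting it in the coordinates $u,v$ of (\ref{def v intro})--(\ref{def u intro}), a lengthy but purely algebraic step, yields exactly (\ref{system uv introb})--(\ref{system uv2 introb}). Analyticity of $v$ on $(0,+\infty)\setminus\{x_1,\dots,x_k\}$ is then immediate, since $\Psi$ and hence $C_1$ are analytic there; I would if necessary enlarge the exceptional set by the zeros of the denominator in (\ref{def u intro}) so that $u$ is analytic too.

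\textbf{Asymptotics (\ref{vestintro}), part (iv).} Both ends are connection problems solved by Deift--Zhou steepest descent with phase $\tfrac{x}{2}\zeta$. For $x\to+\infty$ I would, after the usual normalization of $\Psi$ to $I$ at infinity and a deformation of the rays $\Gamma_1,\dots,\Gamma_4$ (in particular bending $\Gamma_2,\Gamma_3$ around $\zeta=0$ into $\Re\zeta<0$), reduce the triangular jumps to being exponentially small away from $\{0,\tfrac12,1\}$; the global parametrix is the ``bare'' solution carrying only the diagonal jumps on $\Gamma_5,\Gamma_6$ and the behaviour (\ref{RHP Psi: c}), completed by local parametrices at $0,\tfrac12,1$ in terms of elementary/confluent-hypergeometric functions. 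The $\bigO(e^{-x})$ correction, with coefficient read off from the $\Gamma$-function values of the bare solution near $\zeta=\tfrac12$ and $\zeta=1$, gives $C_1(x)=x^{-1+2\alpha}e^{-x}\frac{-1}{\Gamma(\alpha-\beta)\Gamma(\alpha+\beta)}\,(\text{fixed traceless matrix})\,(1+\bigO(1/x))$, whence the last line of (\ref{vestintro}) via (\ref{def v intro}). For $x\to0^+$ I would use that, away from $\zeta=0$ and $\zeta=\infty$, $\Psi(\zeta;x)\to\Phi(\zeta)$, the explicit $x=0$ solution; the approximation degenerates near $\zeta=\infty$, where the correct local model, obtained by the rescaling $\eta=x\zeta$, is again a confluent-hypergeometric RH problem, and matching the two regions produces the connection constant, which works out to exactly $C(\alpha,\beta)$ of (\ref{Cab}) (a ratio of the normalizing constants of the two hypergeometric models); tracking the resulting $x$-, $x^{2\alpha}$- and, for $2\alpha\in\mathbb Z$, $x^{2\alpha}\ln x$-terms through to $C_1(x)$ gives the remaining cases of (\ref{vestintro}). \emph{I expect this small-$x$ double-scaling matching to be the main obstacle}: it requires the explicit special-function solution at $x=0$, a carefully glued global parametrix with a nonstandard local piece at infinity, and exact bookkeeping of the logarithmic resonance at $2\alpha\in\mathbb Z$ and of the precise $\Gamma$-function constant.

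\textbf{The integral identity (\ref{intv}), part (iv).} From (\ref{vestintro}), $v$ is integrable at $0$ (since $2\Re\alpha>-1$) and decays exponentially at $+\infty$, so $\sigma(x)=\int_x^{+\infty}v(\xi)\,d\xi$ is well defined with $\sigma'=-v$ and $\sigma(+\infty)=0$, hence $\int_0^{+\infty}v=\sigma(0^+)$. Integrating the $x\to0$ asymptotics of $v$ determines $\sigma$ near $0$ up to the additive constant $\sigma(0^+)$, and that constant is fixed to $\alpha^2-\beta^2$ by evaluating the relevant combination of $C_1(0),C_2(0)$ on $\Phi$; equivalently it is the constant forced by consistency of (\ref{expansion Dn}) with the Fisher--Hartwig asymptotics, which is precisely where $\alpha^2-\beta^2$ enters Theorem~\ref{theorem: Toeplitz}. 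This establishes (\ref{intv}) when $\Im\alpha=0$, $\Re\beta=0$, where by part (ii) there are no poles and the integral is unambiguous. In the general case the admissible path from $0$ to $+\infty$ must avoid $\{x_1,\dots,x_k\}$, and deforming it past one of these points changes $\int_0^{+\infty}v$ by an element of $2\pi i\mathbb Z$ — the integrality being the statement, recorded in the Remark following Theorem~\ref{theorem: Toeplitz2}, that $\tfrac1x\sigma(x)$ has integer residues — so (\ref{intv}) holds modulo $2\pi i\mathbb Z$.
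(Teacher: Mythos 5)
Parts (i), (iii) and (iv) of your plan track the paper's proof quite closely: the Lax-pair/compatibility computation for (iii); Deift--Zhou analysis at both ends for (iv), where your ``outer $x=0$ hypergeometric solution plus a confluent-hypergeometric model at $\zeta=\infty$ under $\eta=x\zeta$'' is exactly the paper's construction read in the variable $\lambda=x(\zeta-1)$ (global confluent-hypergeometric parametrix, local ${}_2F_1$ parametrix containing both branch points, with the $x^{2\alpha}$- and $x^{2\alpha}\ln x$-terms coming from the explicit constants of these two models); and the same bookkeeping $\sigma=xw$, $\sigma'=-v$, $\sigma(0^+)=\alpha^2-\beta^2$ yields (\ref{intv}). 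Your analytic-Fredholm packaging of (i) is an acceptable variant of the paper's meromorphy-in-$x$ argument, provided the base point is taken at large $x$ and the argument is run on compact subintervals of $(0,\infty)$: continuity of the operator family at $x=0$ is precisely what fails (the exponential decay of the normalized jumps on the unbounded rays is lost there), but your own small-$x$ steepest descent supplies solvability for $0<x<\delta$, so finiteness still follows. Also, fixing $\sigma(0^+)$ ``by consistency with the Fisher--Hartwig asymptotics'' would be circular, since Theorem \ref{theorem: Painleve} feeds into Theorem \ref{theorem: Toeplitz} and not conversely; you must use the direct evaluation from the small-$x$ parametrix, as you also indicate.

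The genuine gap is part (ii). The symmetry you invoke is not present: the jump data of the $\Psi$-problem are not Schwarz-symmetric about the real axis (for real $\alpha$ and imaginary $\beta$ the off-diagonal entries carried by $\Gamma_1$ and by its mirror image $\Gamma_4$ differ by the factor $-e^{2\pi i\beta}\neq 1$), and the diagonal jump $e^{-\pi i(\alpha-\beta)\sigma_3}$ on $(0,1)$ is not positive in the required sense: its Hermitian part is $\cos(\pi\alpha)\,{\rm diag}\bigl(e^{\pi i\beta},e^{-\pi i\beta}\bigr)$, negative definite for, say, $\tfrac12<\alpha<\tfrac32$. Removing that jump by conjugating with $\zeta^{a\sigma_3}(\zeta-1)^{b\sigma_3}$ injects $\zeta$-dependent power factors into the ray jumps and destroys the adjoint relation you would need, so the ``conjugate-and-apply-Zhou'' route does not close. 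The symmetry actually available (cf.\ Proposition \ref{prop v real}) is reflection about the line $\Re\zeta=\tfrac12$ combined with conjugation by $\sigma_1$ and $e^{\pm\pi i\beta\sigma_3}$; accordingly the vanishing lemma must be run with $H(\zeta)=M(\zeta)M^*(-\overline{\zeta})$ after collapsing the rays onto the imaginary axis, and there the relevant quadratic form is only positive \emph{semi}definite (rank one), so the Cauchy-theorem identity kills only one column of $M_-$ on $i\mathbb R$. A further step is then unavoidable: one continues the surviving scalar piece analytically into a half-plane and applies Carlson's theorem (this is where $x>0$ enters) to conclude it vanishes identically. Note also that the paper explicitly warns that the vanishing lemma of Fokas--Mugan--Zhou does not cover this particular solution, so it cannot simply be quoted. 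Your sketch is missing both the correct symmetry and this Carlson-type completion, and as written the positivity step would fail.
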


Part (iii) of the theorem follows from a standard Lax pair argument and was proved in
\cite{FokasMuganZhou, FIKN} for a slightly different but equivalent RH
problem. That proof applies to our RH problem as well, and implies moreover that
the RH solution is meromorphic in $x$ for $x\in\mathbb C\setminus\{0\}$. We will
come back to this in Section \ref{section: Painleve}.
We
prove part (iv) by performing the Deift-Zhou steepest descent analysis
for the RH problem. This asymptotic analysis also implies the
solvability of the RH problem for large $x$ and small $x$, and by meromorphicity in
$x$, this leads
to the statement (i).
We prove part (ii) by applying the technique of a vanishing lemma to
the RH problem for $\Psi$.

\begin{remark}\label{MTWBus2}
The system (\ref{system uv introb})-(\ref{system uv2 introb}) is
related to the Painlev\'e V equation: eliminating $v$, we easily
verify that $u$ solves the Painlev\'e V equation
(\ref{PVintro})-(\ref{ABCD}). Asymptotic expansions as $x\to 0$ and
as $x\to\infty$ for various solutions to the
fifth Painlev\'e equation and the system (\ref{system uv
introb})-(\ref{system uv2 introb}) were obtained in several
works, see e.g.\ \cite{Andreev, AndreevKitaev, AndreevKitaev2, J,
MT, Shukla}. The solution $v$ which is of interest to us decays
exponentially at $+\infty$, is integrable near $0$ if
$\Re\alpha>-\frac{1}{2}$, and it has no poles on $(0,+\infty)$ if
$\alpha>-\frac{1}{2}\in\mathbb R$ and $\beta\in i\mathbb R$.
Note that the asymptotics of $\sigma(x)$ (\ref{westintro}) follow
from (\ref{vestintro}) and (\ref{intv}) by (\ref{def sigma}).
Although we obtained the connection formulae (\ref{westintro}) as a by-product of our analysis,
we would expect that these asymptotics  can be found  in the general list of
connection formulae for the fifth Painlev\'e equation obtained in
\cite{Andreev, AndreevKitaev, AndreevKitaev2}.

\end{remark}

\subsubsection*{Outline of the paper} The proofs of Theorem \ref{theorem: Toeplitz}
and Theorem \ref{theorem: Toeplitz2} are based on a well-known connection between
Toeplitz determinants and orthogonal polynomials on the unit circle. In Section
\ref{section: OP}, we obtain a
differential identity for $\ln D_n(t)$ in terms of the polynomials orthogonal on the
unit circle with weight $f(z)$.
In Section \ref{section: RH}, we obtain large $n$ asymptotics
for these orthogonal polynomials from a RH problem. The asymptotics will be given in
terms of a model RH
problem which we study in detail in Section \ref{section:
Painleve}, where we also give a proof of Theorem \ref{theorem: Painleve}.
In Section \ref{section: integrate}, we use the previously obtained
asymptotics for the orthogonal polynomials and the results of Section
\ref{section: Painleve} to integrate the differential identity for
$\ln D_n(t)$, which leads to Theorem \ref{theorem: Toeplitz} and Theorem
\ref{theorem: Toeplitz2}.

\medskip

Throughout the paper, we choose the branches of logarithms and roots corresponding
to arguments between $0$ and $2\pi$, unless stated otherwise.

\section{RH problem for orthogonal polynomials and a differential identity
for the Toeplitz determinants}
\label{section: OP}

Our analysis is based on a classical connection between Toeplitz determinants and
orthogonal polynomials.
Assume that for some $n>0$ $D_n,D_{n+1}\neq 0$, and
define a polynomial $\phi_n(z)$ in terms of the Fourier coefficients of $f(z)$ as
follows:
\begin{equation}\label{ef1}
\phi_n(z)={\frac{1}{\sqrt{D_n D_{n+1}}}}
\left|
\begin{matrix}
f_0& f_{-1}& \cdots & f_{-n}\cr
f_1& f_0& \cdots & f_{-n+1}\cr
\vdots & \vdots &  & \vdots \cr
f_{n-1} & f_{n-2} & \cdots & f_{-1} \cr
1& z& \cdots & z^n
\end{matrix}
\right|.
\end{equation}
The leading coefficient of $\phi_n$ is then equal to
\begin{equation}\label{chi}
\chi_n=\sqrt{\frac{D_n}{D_{n+1}}}.
\end{equation}
There holds the orthogonality relation
\begin{equation}\label{or1}
\frac{1}{2\pi}\int_{C}\phi_n(z)z^{-j}f(z)\frac{dz}{iz}=\chi_n^{-1}\delta_{jn},\qquad
j=0,1,\dots n,
\end{equation}
where $C$ is the unit circle oriented in the counterclockwise direction.
Similarly, let $\hat\phi_n(z)$ be defined by
\begin{equation}\label{ef2}
\hat\phi_n(z)={1\over\sqrt{D_n D_{n+1}}}
\left|
\begin{matrix}
f_0& f_{-1}& \cdots &f_{-n+1}& 1\cr
f_1& f_0& \cdots &f_{-n+2}& z\cr
\vdots & \vdots &  & \vdots \cr
f_n& f_{n-1}& \cdots &f_1& z^n
\end{matrix}
\right|.
\end{equation}
Then $\hat\phi_n$ has the same leading coefficient $\chi_n$ as $\phi_n$, and
\begin{equation}\label{or2}
{1\over 2\pi}\int_{C}\hat\phi_n(z^{-1})z^j f(z)\frac{dz}{iz}=
\chi_n^{-1}\delta_{jn},\qquad j=0,1,\dots,n.
\end{equation}
If $D_n\neq 0$ for $n=1,\ldots $ (and we set $D_0\equiv 1$,
$\phi_0(z)=\hat\phi_0(z)=1/\sqrt{D_1}$),
the system of polynomials $\phi_n$ and $\hat\phi_n$, $n=0,1,\dots$ exists
and can be characterized by the orthonormality relations
    \begin{equation}
    \frac{1}{2\pi}\int_{C}\phi_k(z)\hat\phi_m(z^{-1})f(z)\frac{dz}{iz}=\delta_{km},
\qquad k,m=0,1,\ldots.
    \end{equation}

If the symbol $f$ is positive on the unit circle $C$,
it is a classical fact (which follows, e.g., from the representation of a Toeplitz
determinant
as a multiple integral) that $D_n(f)>0$ for all $n\geq 0$,
and the system of orthogonal polynomials exists.

Assume that $D_{n-1}, D_n, D_{n+1}\neq 0$, $t>0$, and
define the function $Y(z;n)$ as follows
    \begin{equation}\label{def Y}
    Y(z)=
    \begin{pmatrix}
    \chi_n^{-1}\phi_n(z)&\chi_n^{-1}\int_{C}\frac{\phi_n(\xi)}{\xi-z}\frac{f(\xi)d\xi}{2\pi
i\xi^n}\\
    -\chi_{n-1}z^{n-1}\hat\phi_{n-1}(z^{-1})&-\chi_{n-1}\int_{C}\frac{\hat\phi_{n-1}(\xi^{-1})}{\xi-z}
    \frac{f(\xi)d\xi}{2\pi i\xi}
    \end{pmatrix}.
    \end{equation}
Then $Y$ is the unique solution of the following RH problem with
a jump on the counterclockwise oriented unit circle $C$.
    \subsubsection*{RH problem for $Y$}
    \begin{itemize}
    \item[(a)] $Y:\mathbb C \setminus C \to \mathbb C^{2\times 2}$ is analytic.
    \item[(b)] $Y_+(z)=Y_-(z)
                \begin{pmatrix}
                    1 & z^{-n}f(z) \\
                    0 & 1
                \end{pmatrix},$
                \qquad  for $z \in C$.
    \item[(c)] $Y(z)=\left(I+\bigO(1/z)\right)
                \begin{pmatrix}
                    z^n & 0 \\
                    0 & z^{-n}
                \end{pmatrix}$,
                \qquad  as $z\to \infty $.
    \end{itemize}

\medskip

A general fact that orthogonal polynomials can be so represented as a solution
of a RH problem was noticed in \cite{FIK} (for polynomials
on the line) and extended for polynomials on the circle in \cite{BDJ}.

In the next section we will show that the RH problem for $Y(z;n,t)$
is solvable (and therefore the orthogonal polynomials exist and the
coefficients $\chi_n$ are nonzero) for all $n$ larger than some
$n_0(\al,\bt)$ provided $2nt$ is bounded away from a certain finite
set of points (in particular, see Proposition \ref{prop21} below, $D_n\neq 0$). The number
$n_0(\al,\bt)$ is bounded for $\al$ and $\bt$ in a bounded set.

Our next aim is to express $\frac{d}{dt}\ln D_n(t)$ in terms of the entries of the RH
solution $Y$. We prove the following.
\begin{proposition}\label{prop21}
Let $t>0$ and $n\in\mathbb N$. Suppose that the RH problem for $Y(z;n,t)$ is solvable.
Then $D_n\neq 0$, and the following differential identity holds:
\be\label{differentialidentity}
\frac{d}{dt}\ln D_n(t)=-(\al+\bt)e^t \left(Y^{-1}{dY\over dz}\right)_{22}(e^t)+
(\al-\bt)e^{-t} \left(Y^{-1}{dY\over dz}\right)_{22}(e^{-t}).
\ee
\end{proposition}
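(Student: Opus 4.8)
The plan is to differentiate the defining formula $\ln D_n(t)=\sum_{k=1}^{n}\ln\chi_{k-1}^{-2}$ — or rather to use $D_n=\prod_{k=0}^{n-1}\chi_k^{-2}$, so that $\frac{d}{dt}\ln D_n=-2\sum_{k=0}^{n-1}\frac{d}{dt}\ln\chi_k$ — but this would require control of all $\chi_k$, which is awkward. A cleaner route, and the one I would pursue, is to differentiate $\ln D_n$ directly via the orthogonality relations, expressing everything through the single matrix $Y=Y(z;n,t)$. Concretely, starting from $\chi_n^{-2}=\frac{D_{n+1}}{D_n}$ and the Heine-type integral representation of $\chi_n^{-1}$ coming from (\ref{or1}), one obtains $\frac{d}{dt}\ln D_n$ as a combination of integrals of $\frac{\partial}{\partial t}\ln f(z;t)$ against the orthogonal polynomials. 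From (\ref{symbol}) with $V$ independent of $t$, we have $\frac{\partial}{\partial t}\ln f(z;t)=-(\alpha+\beta)\frac{e^t}{z-e^t}+(\alpha-\beta)\frac{e^{-t}}{z-e^{-t}}$, so the $t$-derivative of the symbol is a sum of two simple rational functions with poles at $z=e^{\pm t}$. The key point is that pairing these rational functions against the orthogonality measure collapses, by a residue computation, to the evaluation of entries of $Y$ (and its derivative) at the two points $e^{\pm t}$.

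The key steps, in order: (1) Establish the differential identity in the general ``magic formula'' form
\[
\frac{d}{dt}\ln D_n(t)=\frac{1}{2\pi i}\int_C \Big(\tfrac{\partial}{\partial t}\ln f(z;t)\Big)\,\Big(\text{trace of }Y^{-1}Y'\text{-type kernel}\Big)\,dz,
\]
derived by differentiating the Toeplitz determinant and reorganizing via the orthonormal system $\phi_k,\hat\phi_k$; this is the standard computation (e.g.\ as in \cite{DIK2}) that gives $\frac{d}{dt}\ln D_n = \frac{1}{2\pi i}\int_C \partial_t\ln f\,\big[(Y^{-1}Y')_{11}+(Y^{-1}Y')_{22}\big]\,dz$ after recognizing $\sum_{k=0}^{n-1}\phi_k(z)\hat\phi_k(w^{-1})$ as the reproducing kernel and relating it to $Y$. (2) Substitute $\partial_t\ln f(z;t)=-(\alpha+\beta)\frac{e^t}{z-e^t}+(\alpha-\beta)\frac{e^{-t}}{z-e^{-t}}$. (3) Compute the two resulting integrals by residues. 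Since the integrand $(Y^{-1}Y')_{jj}$ is analytic across $C$ except for the jump, and since $z=e^{\pm t}$ lie off the unit circle for $t>0$ (indeed $e^t$ is outside, $e^{-t}$ is inside), deforming the contour picks up exactly the residue at $z=e^{\pm t}$. Using analyticity of $Y$ near those points together with the boundary behavior dictated by the factor $z^{-n}f(z)$ in the jump (and the local exponents $\alpha\pm\beta$ built into $f$), one finds the contributions reduce precisely to $-(\alpha+\beta)e^t(Y^{-1}Y')_{22}(e^t)+(\alpha-\beta)e^{-t}(Y^{-1}Y')_{22}(e^{-t})$, with the $(Y^{-1}Y')_{11}$ pieces cancelling or vanishing. (4) Separately verify $D_n\neq 0$: solvability of the RH problem for $Y$ forces $\chi_n\neq 0$ and $\chi_{n-1}\neq 0$ through the explicit entries of (\ref{def Y}), hence $D_n/D_{n+1}$ and $D_{n-1}/D_n$ are finite and nonzero, and combined with $D_0=1$ this gives $D_n\neq 0$.

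The main obstacle I anticipate is step (3): carefully justifying the residue extraction at $z=e^{\pm t}$, because $f(z;t)$ itself has branch-point singularities at $e^{\pm t}$ (the factors $(z-e^t)^{\alpha+\beta}$ and $(z-e^{-t})^{\alpha-\beta}$), so $\partial_t\ln f$ has \emph{simple poles} there but $f$ — and hence the kernel built from $Y$ via the jump relation — has non-integer-power behavior nearby. One must argue that the product $\partial_t\ln f\cdot(\text{kernel})$ is integrable and that contour deformation is legitimate, keeping track of which side of $C$ the points $e^{\pm t}$ sit on and of the branch cuts $[0,e^{-t}]\cup[e^t,+\infty)$ of $f$. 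The cleanest way to handle this is probably to write the kernel in terms of $Y$ on one fixed side of $C$ (say, using $Y_-$), extend $Y_-$ analytically into a neighborhood, and note that the analytic continuation of $z^{-n}f(z)$ across the relevant arc of $C$ is regular away from the cut, so that only the pole of $\partial_t\ln f$ at $z=e^{\pm t}$ contributes; the branch structure then only affects constants in the residue, which are exactly the prefactors $-(\alpha+\beta)e^t$ and $(\alpha-\beta)e^{-t}$. A secondary, more bookkeeping-level obstacle is confirming that the $(Y^{-1}Y')_{11}$ contribution drops out — this should follow from $\det Y\equiv z^0=1$ (so $\tr(Y^{-1}Y')=0$ away from $C$) together with the large-$z$ normalization, reducing the sum of diagonal entries to a purely $(2,2)$-entry statement via the asymptotics $Y\sim(I+\bigO(1/z))\mathrm{diag}(z^n,z^{-n})$.
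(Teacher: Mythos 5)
Your overall plan (differentiate in $t$, use the rational structure of $\partial_t\ln f$, and express the answer through $Y$ at $e^{\pm t}$) is in the right spirit, but there are genuine gaps. First, the ``magic formula'' you start from is wrong as written: since $\det Y\equiv 1$, one has $(Y^{-1}Y')_{11}+(Y^{-1}Y')_{22}=\mathrm{tr}\,(Y^{-1}Y')=\partial_z\ln\det Y=0$, so your proposed integrand vanishes identically and the identity would give $\frac{d}{dt}\ln D_n=0$. The correct orthogonal-polynomial identity (the one used in \cite{DIK2}) involves the \emph{first column} of $Y$, schematically $\partial_t\ln D_n=\frac{1}{2\pi i}\oint_C z^{-n}\bigl(Y_{11}\partial_zY_{21}-Y_{21}\partial_zY_{11}\bigr)\,\partial_t f(z)\,dz$, i.e.\ the $(2,1)$ entry of $Y^{-1}Y'$ paired with $\partial_t f$, not the diagonal trace paired with $\partial_t\ln f$; and this error is not cosmetic, because your later claim that the $(1,1)$ contribution ``drops out by $\mathrm{tr}\,(Y^{-1}Y')=0$'' rests on the same confusion. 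Second, the step you yourself flag as the main obstacle -- extracting residues at $e^{\pm t}$ -- is exactly where the argument breaks: the integrand contains $f(z)$ (through $\partial_t f=(\partial_t\ln f)f$), which has \emph{branch points} at $e^{\pm t}$ with exponents $\alpha\pm\beta$ and cuts $[0,e^{-t}]\cup[e^t,\infty)$, so deforming the contour does not pick up simple-pole residues, and the assertion that ``the branch structure only affects constants'' is unsubstantiated. Passing from a contour-integral identity to the clean two-point formula (\ref{differentialidentity}) requires a different mechanism than contour deformation.

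The paper's proof avoids both problems by working with the Fredholm representation $D_n=\det(I-K_n)$, see (\ref{DnKn}), where $K_n$ is an integrable operator on $L^2(C)$. Then $\frac{d}{dt}\ln D_n=-\mathrm{tr}\,(I-K_n)^{-1}\partial_tK_n$, and the splitting of $\partial_tK_n$ according to the two rational terms of $\partial_t\ln f$ is handled by exact operator algebra (the operators $\Lambda_{1,2}$ and rank-one corrections): the only integrals evaluated by residues are of \emph{rational} functions in the integration variable, with $f$ entering only as an undifferentiated multiplier, so no deformation through branch points is ever needed. The points $e^{\pm t}$ enter through the matrix $m(e^{\pm t})$ of the integrable-operator theory, which is then identified with $Y(e^{\pm t})$ via (\ref{Ym+}), (\ref{Ym-}), producing $(Y^{-1}Y')_{22}(e^{\pm t})$ in (\ref{t1}), (\ref{t2}). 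This route also gives the nonvanishing statement cleanly: solvability of the RH problem for $Y$ implies invertibility of $I-K_n$, hence $D_n=\det(I-K_n)\neq 0$. Your step (4), by contrast, is circular: the formula (\ref{def Y}) you invoke to read off $\chi_n,\chi_{n-1}$ from $Y$ already presupposes $D_n,D_{n+1}\neq 0$, so it cannot be used to deduce $D_n\neq 0$ from solvability alone.
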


\begin{proof}
We will follow the approach of \cite{ITW2}.
Let us start with the expression
\be\label{DnKn}
D_n(f)=\det(I-K_n),
\ee
where $K_n$ is an integral operator acting on $L^2(C)$ with kernel
\be
K_n(z,z')=\frac{(z/z')^n-1}{z-z'}\frac{1-f(z')}{2\pi i}.
\ee
This fact is easy to verify by considering the matrix expression for
$K_n$ in the basis $\{ z^k \}$, $k=-\infty,\dots,\infty$.

As follows from the theory of ``integrable'' Fredholm
operators (see, e.g. \cite{ITW2}), the solvability of the RH problem
for $Y(z;n,t)$  implies that the operator $1-K_n$ is invertible.
Therefore, $D_n(f)=\det(I-K_n)\neq 0$.

Consider
\be
{d \over d t}\ln D_n(f)=
{d \over d t}\tr\ln (I-K_n)=
-\tr (I-K_n)^{-1}{d K_{n}\over d t}.
\ee
Since
\be
{d f\over d t}=
\left(-{\al+\bt\over z-e^t}e^t +{\al-\bt\over z-e^{-t}}e^{-t}\right)f,
\ee
we have
\be\label{Dtr}
{d K_n\over d t}=K_{n}^{(1)}-K_{n}^{(2)},
\ee
where
\be\label{k2}
K_{n}^{(2)}(z,z')={\al-\bt\over z'-e^{-t}}
\frac{(z/z')^n-1}{z-z'}\frac{f(z')}{2\pi i}e^{-t}
\ee
and
\be\label{k1}
K_{n}^{(1)}(z,z')=\Lambda_1(z,z')-
{\al+\bt\over z'-e^{t}}
\frac{(z/z')^n-1}{z-z'}\frac{1-f(z')}{2\pi i}e^t.
\ee
Here
\be
\Lambda_1(z,z')={\al+\bt\over z'-e^t}{e^t\over 2\pi i}\frac{(z/z')^n-1}{z-z'}.
\ee
The reason to single out $\Lambda_1$ will soon become clear.
By a residue calculation, we obtain
\[
(\Lambda_1 K_{n})(z,z')=
(\al+\bt)e^t\frac{1-f(z')}{2\pi i}\frac{1}{z-e^t}
\left[
\frac{(z/z')^n-1}{z-z'}+
\frac{1}{e^t-z'}\left(
\left({z\over e^t}\right)^n-\left({z\over z'}\right)^n\right)
\right].
\]
We can now rewrite (\ref{k1}) as follows:
\be\label{k11}
K_{n}^{(1)}(z,z')=(\Lambda_1(I-K_{n}))(z,z')+
(\al+\bt)e^t\frac{1-f(z')}{2\pi i}
\frac{1-(z/e^t)^n}{(z-e^t)(z'-e^t)}.
\ee
Defining the following 2-component vectors
\[
\hat f(z)={z^n \choose 1},\quad \hat g(z)={1-f(z)\over 2\pi i}{z^{-n}\choose -1},\quad
\wt f(z)={ \hat f(z)\over z-e^t},\quad
\wt g(z)={ \hat g(z)\over z-e^t},
\]
we can write (\ref{k11}) in the final form:
\be\label{k111}
K_{n}^{(1)}(z,z')=
(\Lambda_1(I-K_{n}))(z,z')+
(\al+\bt)e^t (e^{-nt}\wt f_1(z)\wt g_2(z')-\wt f_2(z)\wt g_2(z'))
\ee

On the other hand, let us define $F_j=(I-K_n)^{-1}\hat f_j$. Then
\[
{F_j(z)\over z-e^t}-
\int_C {K_n(z,z')\over z-e^t}F_j(z')dz'=\wt f_j(z),\qquad j=1,2.
\]
Noting that
\[
K_n(z,z')=\frac{\hat f(z)^T \hat g(z')}{z-z'},
\]
we can write the above equation in the form
\[
{F_j(z)\over z-e^t}-
\int_C K_n(z,z'){F_j(z')\over z'-e^t}dz'+
\int_C \wt f(z)^T \hat g(z'){F_j(z')\over z'-e^t}dz'=\wt f_j(z).
\]
Applying $(I-K_n)^{-1}$ to both sides, we obtain
\be
\sum_{k=1}^2 m_{jk}(e^t)\wt F_k(z)={1\over z-e^t}F_j(z),\qquad j=1,2,
\ee
where $\wt F_k=(I-K_n)^{-1}\wt f_k$ and
\be
m_{jk}(e^t)=\de_{jk}-\int_C F_j(z)\hat g_k(z){dz\over z-e^t},\qquad j,k=1,2.
\ee
Thus we have for the 2-component vector ($\det m=1$, see \cite{ITW2})
\be\label{mF}
\wt F(z)={1\over z-e^t}m^{-1}(e^t)F(z)=
{1\over z-e^t}
\begin{pmatrix}
m_{22}F_1-m_{12}F_2\cr
-m_{21}F_1+m_{11}F_2
\end{pmatrix}.
\ee
As is shown in (\cite{ITW2}, Eq. (2.16) up to a different notation),
the matrix $m$ is related to $Y$. For $|z|>1$,
\be\label{Ym+}
Y(z)=\begin{pmatrix}
m_{11}z^n+m_{12}& -m_{12}z^{-n}\cr
-m_{21}z^n-m_{22}& m_{22}z^{-n}
\end{pmatrix}.
\ee
Using the definition $\wt F=(I-K_n)^{-1}\wt f$, equations (\ref{k111}),
(\ref{mF}), (\ref{Ym+}), and the fact that
\[
\tr\Lambda_1=(\al+\bt){e^t\over2\pi i}\int_C\frac{n dz'}{z'(z'-e^t)}=
-n(\al+\bt),
\]
we easily obtain
\be\label{t1}
\tr ((I-K_n)^{-1}K_{n}^{(1)})=
(\al+\bt)e^t (Y_{11}(e^t){Y'_z}_{22}(e^t)-Y_{21}(e^t){Y'_z}_{12}(e^t)),
\ee
where $Y'_z(e^t)$ stands for the derivative of $Y(z)$ w.r.t. $z$ evaluated at
$z=e^t$.

Let us now compute the contribution of $K_{n}^{(2)}$.
First, write $K_{n}^{(2)}$ in the form
\[
K_{n}^{(2)}=\Lambda_2-
{\al-\bt\over z'-e^{-t}}
\frac{(z/z')^n-1}{z-z'}\frac{1-f(z')}{2\pi i}e^{-t},\qquad
\Lambda_2={\al-\bt\over z'-e^{-t}}
\frac{(z/z')^n-1}{z-z'}\frac{1}{2\pi i}e^{-t}.
\]
We then obtain as above for $K_{n}^{(1)}$ that
\be\label{k22}
K_{n}^{(2)}(z,z')=(\Lambda_2(I-K_{n}))(z,z')+
(\al-\bt)e^{-t}\frac{1-f(z')}{2\pi i}
\frac{(e^{-t}/z')^n-(z/z')^n}{(z-e^{-t})(z'-e^{-t})}.
\ee
Defining the new vectors
\[
\wt f(z)={ \hat f(z)\over z-e^{-t}},\quad
\wt g(z)={ \hat g(z)\over z-e^{-t}},\quad
\wt F_k=(I-K)^{-1}\wt f_k,
\]
we can write (\ref{k22}) in the form:
\be\label{k222}
K_{n}^{(2)}(z,z')=
(\Lambda_2(I-K_{n}))(z,z')+
(\al-\bt)e^{-t} (e^{-nt}\wt f_2(z)\wt g_1(z')-\wt f_1(z)\wt g_1(z'))
\ee
and obtain as above
\be\label{mF2}
\wt F(z)=
{1\over z-e^{-t}}
\begin{pmatrix}
m_{22}(e^{-t})F_1(z)-m_{12}(e^{-t})F_2(z)\cr
-m_{21}(e^{-t})F_1(z)+m_{11}(e^{-t})F_2(z)
\end{pmatrix}.
\ee
For $|z|<1$, the matrix $m$ is related to $Y$
by the formula preceeding (3.21) in \cite{ITW2}:
\be\label{Ym-}
Y(z)=\begin{pmatrix}
m_{11}z^n+m_{12}& m_{11}\cr
-m_{21}z^n-m_{22}& -m_{21}
\end{pmatrix}.
\ee
Using the definition $\wt F=(I-K_n)^{-1}\wt f$, equations (\ref{k222}),
(\ref{mF2}), (\ref{Ym-}), and the fact that $\tr\Lambda_2=0$,
we finally obtain
\be\label{t2}
\tr ((I-K_n)^{-1}K_{n}^{(2)})=
(\al-\bt)e^{-t} (Y_{11}(e^{-t}){Y'_z}_{22}(e^{-t})-Y_{21}(e^{-t}){Y'_z}_{12}(e^{-t})).
\ee

Expressions (\ref{t1}) and (\ref{t2}) imply by (\ref{Dtr}) the statement of the
Proposition.
\end{proof}

\section{Asymptotic analysis of the RH problem for orthogonal
polynomials}\label{section: RH}
In this section, we apply the steepest descent method of Deift and Zhou \cite{DZ} to
the RH
problem for $Y$. We follow the strategy used in \cite{BDJ} for orthogonal
polynomials on the unit circle with a special weight function, and used in
\cite{DKMVZ2, DKMVZ1} for orthogonal polynomials on the real line with respect to a
more general weight function. The most important new feature here is the
construction of a local parametrix near $1$ which involves a Painlev\'e V RH
problem.
We will obtain asymptotics for $Y$ as $n\to\infty$, and therefore,
asymptotics of the r.h.s.\ of (\ref{differentialidentity}) in terms of Painlev\'e V
functions.

    \subsection{Normalization of the RH problem}
    Define
    \begin{equation}\label{def T}
    T(z)=\begin{cases}Y(z)z^{-n\sigma_3},&\mbox{ as $|z|>1$},\\
    Y(z),&\mbox{ as $|z|<1$,}
    \end{cases}
    \end{equation}
    with $Y$ given by (\ref{def Y}).
    Then $T$ satisfies a RH problem normalized at infinity.
    \subsubsection*{RH problem for $T$}
    \begin{itemize}
    \item[(a)] $T:\mathbb C \setminus C \to \mathbb C^{2\times 2}$ is analytic.
    \item[(b)] $T_+(z)=T_-(z)
                \begin{pmatrix}
                    z^n & f(z) \\
                    0 & z^{-n}
                \end{pmatrix},$
                \qquad  for $z \in C$.
    \item[(c)] $T(z)=I+\bigO(1/z)$
                \qquad  as $z\to \infty $.
    \end{itemize}
The diagonal elements of
the jump matrix for $T$ oscillate rapidly on the unit circle if $n$ is large.
The next transformation turns the oscillatory behavior into exponential decay on a
deformed contour.

\subsection{Opening of the lens}

    Note that one can factorize the jump matrix for $T$ as follows,
    \begin{eqnarray}
    J_T(z)&:=&\begin{pmatrix}
                    z^n & f(z) \\
                    0 & z^{-n}
                \end{pmatrix}\nonumber \\
                &=&\begin{pmatrix}
                    1& 0 \\
                    z^{-n}f(z)^{-1} & 1
                \end{pmatrix}\begin{pmatrix}
                    0 & f(z) \\
                    -f(z)^{-1} & 0
                \end{pmatrix}\begin{pmatrix}
                    1 & 0 \\
                    z^nf(z)^{-1} & 1
                \end{pmatrix}.
    \end{eqnarray}

    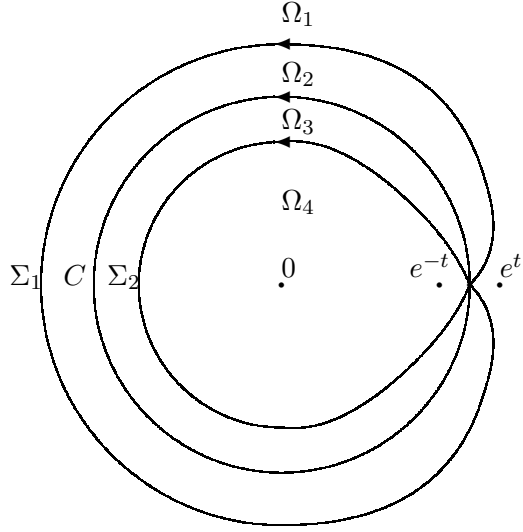
\begin{figure}[t]
    \begin{center}
    \setlength{\unitlength}{1truemm}
    \begin{picture}(100,55)(-5,10)
        \cCircle(50,40){25}[f]
        \put(50,40){\thicklines\circle*{.8}}
        \put(50,41){$0$}
        \put(75,40){\thicklines\circle*{.8}}
        \put(71,40){\thicklines\circle*{.8}}
        \put(79,41){$e^t$}
        \put(79,40){\thicklines\circle*{.8}}
        \put(67,41){$e^{-t}$}
        \put(49,65){\thicklines\vector(-1,0){.0001}}
        \put(49,72){\thicklines\vector(-1,0){.0001}}
        \put(49,59){\thicklines\vector(-1,0){.0001}}
        \put(21,40){$C$}
        \put(13.8,40){$\Sigma_1$}
        \put(26.8,40){$\Sigma_2$}

        \put(50,50){$\Omega_4$}
        \put(50,60.7){$\Omega_3$}
        \put(50,67){$\Omega_2$}
        \put(50,75){$\Omega_1$}
        \qbezier(75,40)(80,45)(77,54)
        \qbezier(77,54)(72,72)(50,72)
        \cCircle(50,40){32}[l]
\qbezier(75,40)(80,35)(77,26)
        \qbezier(77,26)(72,8)(50,8)

        \qbezier(75,40)(73,45)(68,50)
        \qbezier(68,50)(58,60)(50,59)
        \cCircle(50,40){19}[l]
        \qbezier(75,40)(73,35)(68,30)
        \qbezier(68,30)(58,20)(50,21)

    \end{picture}
    \caption{The contour $\Sigma=\Sigma_1\cup C\cup\Sigma_2$ and the regions
 $\Omega_1,\ldots,\Omega_4$.}
    \label{fig2}
\end{center}
\end{figure}

To make use of
this factorization, consider the
three counterclockwise oriented closed curves as shown in Figure
    \ref{fig2}. Let us write
    \begin{align}\label{J1}
    &J_1(z)=\begin{pmatrix}
                    1& 0 \\
                    z^{-n}f(z)^{-1} & 1
                \end{pmatrix},\\
    &\label{JN}J_N(z)=\begin{pmatrix}
                    0 & f(z) \\
                    -f(z)^{-1} & 0
                \end{pmatrix},\\
    &\label{J2}J_2(z)=\begin{pmatrix}
                    1 & 0 \\
                    z^nf(z)^{-1} & 1
                \end{pmatrix},
    \end{align}
    and
    define
    \begin{equation}\label{def S}
    S(z)=
    \begin{cases}
    T(z), &\mbox{ in regions $\Omega_1$ and $\Omega_4$},\\
    T(z)J_1, &\mbox{ in region $\Omega_2$},\\
    T(z)J_2^{-1}, &\mbox{ in region $\Omega_3$},
    \end{cases}
    \end{equation}
    with $\Omega_1, \Omega_2, \Omega_3$ as indicated in Figure \ref{fig2}.
Note that $f$ is an analytic function in $\mathbb C\setminus
([0,e^{-t}]\cup[e^{t},+\infty))$,
and we define $\Sigma_1$, $\Sigma_2$
so that the branch cuts are located in the regions $\Omega_1$ and $\Omega_4$. As we
can have $t\to 0$, it
is inevitable that $\Omega_1$ and $\Omega_4$ approach $1$. We choose $\Sigma_1$ and
$\Sigma_2$
to go through $1$ as in Figure \ref{fig2}.
The function $S(z)$ satisfies the following RH problem.
    \subsubsection*{RH problem for $S$}
    \begin{itemize}
    \item[(a)] $S:\mathbb C \setminus (\Sigma_1\cup C \cup\Sigma_2) \to \mathbb
C^{2\times 2}$ is analytic.
    \item[(b)] $S_+(z)=S_-(z)
                J_k(z),$
                \qquad  for $z \in \Sigma_k$, $k=1,2$,\\
$S_+(z)=S_-(z)J_N(z)$,\qquad for $z\in C$.
    \item[(c)] $S(z)=I+\bigO(1/z)$,
                \qquad  as $z\to \infty $.
    \end{itemize}
Note that the jump matrices $J_1$ and $J_2$ tend to the identity matrix on their
respective contours $\Sigma_1$ and $\Sigma_2$ as $n\to\infty$  except near $1$.

We need to construct a parametrix dealing with the jump condition on the unit circle
and a local parametrix near $1$.

    \subsection{Global parametrix away from $1$}
Ignoring the jumps on $\Sigma_1$ and $\Sigma_2$ and a neighborhood of $1$, we
consider the following model problem.
    \subsubsection*{RH problem for $N$}
    \begin{itemize}
    \item[(a)] $N:\mathbb C \setminus C \to \mathbb C^{2\times 2}$ is analytic.
    \item[(b)] $N_+(z)=N_-(z)
                J_N(z),$
                \qquad  for $z \in C$.
    \item[(c)] $N(z)=I+\bigO(1/z)$,
                \qquad  as $z\to \infty $.
    \end{itemize}
    This problem is easily solved explicitly:
    \begin{equation}\label{def N}
    N(z)=\begin{cases}
    D(z)^{\sigma_3}\begin{pmatrix}0&1\\-1&0\end{pmatrix},&\mbox{ for $|z|<1$},\\
    D(z)^{\sigma_3},&\mbox{ for $|z|>1$},
    \end{cases}
    \end{equation}
where the (Szeg\H o) function $D(z)$ is analytic
and nonzero in $\mathbb C\setminus C$, tends to $1$
as $z\to\infty$, and satisfies the jump condition $D_+(z)=D_-(z)f(z)$ for
$z\in C$. It is easy to verify that
    \begin{equation}
    D(z)=\begin{cases}\label{def D2}
    (z-e^{t})^{\alpha+\beta}e^{-i\pi(\alpha+\beta)}\exp\left(\sum_{k=0}^{\infty}V_k
z^{k}\right),&\mbox{ for $|z|<1$},\\
    (z-e^{-t})^{-\alpha+\beta}z^{\alpha-\beta}\exp\left(-\sum_{k=-\infty}^{-1}V_k
z^{k}\right),&\mbox{ for $|z|>1$}.
    \end{cases}
    \end{equation}

    \subsection{Local parametrix near $1$}\label{section: local}

For $0<t<t_0$ with $t_0$ fixed but sufficiently small, we will now construct a
parametrix $P$ satisfying the same jump conditions as $S$ in a neighborhood $U$ of
$1$ of a sufficiently small fixed radius and a matching condition with $N$ on the
boundary $\partial U$.

Assume that $\Psi(\zeta)$ solves the RH problem of Section \ref{PVsection}, and define
    \begin{equation}\label{def Phi}
    \Phi(\lambda;x)=e^{\frac{x}{4}\sigma_3}x^{-\beta\sigma_3}\Psi(\frac{\lambda}{x}+\frac{1}{2};x)
G(\lambda;x)^{\frac{1}{2}\sigma_3}e^{\pm\frac{\pi i}{2}(\alpha-\beta)\sigma_3}, \qquad
\mbox{ for $\pm\Im\lambda>0$}
    \end{equation}
respectively,  with
    \begin{equation}\label{def G}
    G(\lambda;x)=(\lambda+\frac{x}{2})^{-(\alpha-\beta)}(\lambda-\frac{x}{2})^{\alpha+\beta}
    e^{\lambda}e^{-\pi i(\al-\bt)},\qquad x>0,
    \end{equation}
where $G$ is analytic in
$\mathbb C\setminus \left((-\infty,-\frac{x}{2}]\cup[\frac{x}{2},+\infty)\right)$.
We choose $-\pi<\arg(\lambda+\frac{x}{2})<\pi$ and $0<\arg(\lambda-\frac{x}{2})<2\pi$.
It is straightforward to check that $\Phi=\Phi(\lambda;x)$ solves the
following RH problem for $x>0$.
    \subsubsection*{RH problem for $\Phi$}
    \begin{itemize}
    \item[(a)] $\Phi:\mathbb C\setminus
\cup_{j=1}^4e^{\frac{\pi i (2j-1)}{4}}\mathbb R^+
 \to \mathbb C^{2\times 2}$ is analytic,
    with the rays $e^{\frac{\pi i (2j-1)}{4}}\mathbb R^+$ oriented as shown in
Figure \ref{figure: Phi}.
    \item[(b)] $\Phi$ has continuous boundary values on $\cup_{j=1}^4e^{\frac{\pi i
(2j-1)}{4}}\mathbb R^+\setminus\{0\}$, and they are related by the jump
conditions:
    \begin{align}
    &\label{RHP hatPhi:b1}
    \Phi_+(\lambda)=\Phi_-(\lambda)\begin{pmatrix}1&G(\lambda;x)^{-1}\\0&1\end{pmatrix},
&\mbox{ as $\lambda\in e^{\frac{\pi i}{4}}\mathbb R^+\cup e^{\frac{7\pi
i}{4}}\mathbb R^+,$}\\
    &\label{RHP
hatPhi:b2}\Phi_+(\lambda)=\Phi_-(\lambda)\begin{pmatrix}1&0\\-G(\lambda;x)&1\end{pmatrix},
&\mbox{ as $\lambda\in e^{\frac{3\pi i}{4}}\mathbb R^+\cup e^{\frac{5\pi
i}{4}}\mathbb R^+$.}\end{align}
    \item[(c)] $\Phi$ has the following behavior as $\lambda\to\infty$:
    \begin{equation}\label{RHP hatPhi: c}
    \Phi(\lambda)=I+\bigO(\lambda^{-1}).
    \end{equation}
\item[(d)] $\Phi$ is bounded near $0$.
\end{itemize}

\begin{figure}[t]
\begin{center}
    \setlength{\unitlength}{0.8truemm}
    \begin{picture}(100,48.5)(0,2.5)

    \put(35,25){\thicklines\circle*{.8}}
    \put(34,27){\small $-\frac{x}{2}$}
    \put(62,27){\small $\frac{x}{2}$}
    \put(65,25){\thicklines\circle*{.8}}
    \put(50,25){\thicklines\circle*{.8}}
    \put(49,28){$0$}

    \put(50,25){\line(1,1){25}}
    \put(50,25){\line(1,-1){25}}
    \put(50,25){\line(-1,1){25}}
    \put(50,25){\line(-1,-1){25}}

    \put(65,40){\thicklines\vector(1,1){.0001}}
    \put(65,10){\thicklines\vector(-1,1){.0001}}
    \put(35,40){\thicklines\vector(-1,1){.0001}}
    \put(35,10){\thicklines\vector(1,1){.0001}}

    \put(65,50){$\widehat\Sigma_1$}
    \put(65,0){$\widehat\Sigma_1$}
    \put(32,50){$\widehat\Sigma_2$}
    \put(32,0){$\widehat\Sigma_2$}

    \put(67,37){\small $\begin{pmatrix}1&G^{-1}\\0&1\end{pmatrix}$}
    \put(10,40){\small $\begin{pmatrix}1&0\\-G&1\end{pmatrix}$}

    \put(10,8){\small $\begin{pmatrix}1&0\\-G&1\end{pmatrix}$}
    \put(67,11){\small $\begin{pmatrix}1&G^{-1}\\0&1\end{pmatrix}$}

    \end{picture}
    \caption{The jump contour and jump matrices for $\Phi$.}
    \label{figure: Phi}
\end{center}
\end{figure}

We will prove the following results.
\begin{proposition}\label{prop Phi}
\begin{itemize}
\item[{\rm (i)}] If $\Re\alpha>-\frac{1}{2}$, the RH problem for $\Phi$ is uniquely
solvable for all but possibly a finite number of positive
$x$-values $\{x_1, \ldots , x_k\}$, where $x_j=x_j(\alpha,\beta)$ and
$k=k(\alpha,\beta)$.
\item[{\rm (ii)}] If $\alpha>-\frac{1}{2}$ ($\Im\al=0$) and $\Re\beta=0$,
the RH problem for $\Phi$ is (uniquely) solvable for all $x>0$.
\item[{\rm (iii)}] If $\Re\alpha>-\frac{1}{2}$, the asymptotic condition (\ref{RHP
hatPhi: c}) for $\Phi$ is valid uniformly for $x\in (0,+\infty)$ provided that $x$
remains bounded away from
the set $\{x_1, \ldots , x_k\}$.
\end{itemize}
\end{proposition}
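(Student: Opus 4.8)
The plan is to establish Proposition \ref{prop Phi} by transferring all three statements from the already-announced properties of the RH problem for $\Psi$ in Theorem \ref{theorem: Painleve}, using the explicit algebraic relation (\ref{def Phi}) between $\Phi$ and $\Psi$. First I would observe that (\ref{def Phi}) is an invertible transformation: the factors $e^{\frac{x}{4}\sigma_3}$, $x^{-\beta\sigma_3}$, $G(\lambda;x)^{\frac12\sigma_3}$, and $e^{\pm\frac{\pi i}{2}(\alpha-\beta)\sigma_3}$ are all invertible for $x>0$ and $\lambda$ off the relevant branch cuts, and $\lambda\mapsto \lambda/x+\frac12$ is a bijection of $\mathbb C$ sending $\lambda=0,\pm x/2$ to $\zeta=\frac12,1,0$ respectively and the rays $e^{\pi i(2j-1)/4}\mathbb R^+$ (centered at $0$) to the rays $\Gamma_1,\dots,\Gamma_4$ (centered at $\frac12$). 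Hence $\Phi$ exists if and only if $\Psi$ exists, and the solution is unique exactly when $\Psi$ is. This immediately gives (i) from Theorem \ref{theorem: Painleve}(i) and (ii) from Theorem \ref{theorem: Painleve}(ii), with the \emph{same} exceptional set $\{x_1,\dots,x_k\}$: indeed the text has already noted that $\Psi$, hence $C_1(x)$, hence $\Phi$, is meromorphic in $x$ away from $0$, so the set of bad $x$ is the same finite set for both.

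Next, for completeness I would indicate why $\Phi$ as defined by (\ref{def Phi}) actually satisfies the stated RH problem (the text says this is ``straightforward to check,'' and I would confine myself to the key points). The jump on each ray of $\Gamma$ for $\Psi$ is conjugated by the diagonal prefactor $G(\lambda;x)^{\frac12\sigma_3}e^{\pm\frac{\pi i}{2}(\alpha-\beta)\sigma_3}$; one checks that the triangular jumps $\begin{pmatrix}1&e^{\pi i(\alpha-\beta)}\\0&1\end{pmatrix}$ etc.\ on $\Gamma_1,\dots,\Gamma_4$ become $\begin{pmatrix}1&G^{-1}\\0&1\end{pmatrix}$ and $\begin{pmatrix}1&0\\-G&1\end{pmatrix}$ after accounting for the branch behavior of $G$ across $[\frac{x}{2},\infty)$ and $(-\infty,-\frac{x}{2}]$, and that the contributions of $\Gamma_5$ ($e^{2\pi i\beta\sigma_3}$) and $\Gamma_6$ ($e^{-\pi i(\alpha-\beta)\sigma_3}$) are exactly cancelled by the corresponding jumps of $G^{\frac12\sigma_3}$ and by the sign change of the $\pm$ branch in the definition across the real axis, leaving no jump on $(-\frac{x}{2},\frac{x}{2})$ or on $(\frac{x}{2},\infty)$ or $(-\infty,-\frac{x}{2})$ other than those of $G$ itself; the local behavior (\ref{RHP Psi: d0}), (\ref{RHP Psi: d1}) of $\Psi$ at $\zeta=0,1$ combined with the matching powers in $G(\lambda;x)$ shows $\Phi$ is bounded near $\lambda=\pm x/2$, hence near $0$ as well. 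This is a routine residue/branch bookkeeping that I would not spell out in full.

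For part (iii), the point is the \emph{uniform} validity of $\Phi(\lambda)=I+\bigO(\lambda^{-1})$ as $\lambda\to\infty$ with $x$ bounded away from $\{x_1,\dots,x_k\}$. I would not prove this here directly; instead I would note that it follows from the Deift--Zhou steepest descent analysis of the $\Psi$ RH problem carried out in Section \ref{section: Painleve} (which proves Theorem \ref{theorem: Painleve}(iv)). That analysis yields, for large $x$ and small $x$, uniform control of $\Psi(\zeta)=(I+C_1/\zeta+\dots)\zeta^{-\beta\sigma_3}e^{-\frac{x}{2}\zeta\sigma_3}$ with $C_1=C_1(x,\alpha,\beta)$ uniformly bounded; substituting into (\ref{def Phi}) and using $\zeta=\lambda/x+\frac12$, $G(\lambda;x)=(\lambda+x/2)^{-(\alpha-\beta)}(\lambda-x/2)^{\alpha+\beta}e^\lambda e^{-\pi i(\alpha-\beta)}$, one checks the exponential factors $e^{-\frac{x}{2}\zeta\sigma_3}$ and $G^{\frac12\sigma_3}$ recombine (the $e^{\pm x\zeta/2}$ cancels the $e^{\lambda/2}$ coming from $G^{1/2}$ up to the constant $e^{x/4}$ absorbed by the prefactor) and the power factors $\zeta^{-\beta}$ and $(\lambda\pm x/2)^{(\pm\alpha\mp\beta)/2}$ combine to $I+\bigO(\lambda^{-1})$. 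The main obstacle in the whole proposition is precisely this uniformity statement (iii): one must check that the error terms produced in the $\Psi$-analysis, after the change of variable $\lambda=x(\zeta-\tfrac12)$ which compresses the $\zeta$-plane by a factor $x$ as $x\to\infty$, still give a genuine $\bigO(\lambda^{-1})$ that is uniform in $x$ on bounded-away-from-bad-points sets; the small-$x$ regime requires care because the two points $\pm x/2$ (i.e.\ $\zeta=0,1$) coalesce. Everything else — solvability, uniqueness, meromorphicity, the identification of the exceptional set — is a direct corollary of Theorem \ref{theorem: Painleve} via the explicit gauge transformation (\ref{def Phi}).
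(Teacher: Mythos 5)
Your reduction of (i) and (ii) to Theorem \ref{theorem: Painleve} via the invertible gauge transformation (\ref{def Phi}) is exactly the paper's argument, including the observation that the exceptional set $\{x_1,\dots,x_k\}$ is the same for $\Phi$ and $\Psi$; the paper likewise postpones (iii) to the asymptotic analysis of Section \ref{section: Painleve}. (A cosmetic difference: for large $x$ the paper runs the small-norm argument directly on the rescaled function $\wt\Phi(\zeta;x)=\Phi(x\zeta;x)$ rather than on $\Psi$, so the recombination of exponential and power factors you sketch is not actually needed there.)

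Your account of (iii), however, has a genuine hole. You attribute the uniformity of (\ref{RHP hatPhi: c}) entirely to the steepest-descent error estimates, i.e.\ to the regimes $x>M$ and $0<x<\delta$ (in the paper these are (\ref{Phi infty}) and (\ref{Phi infty 0}), obtained from the small-norm problems for $\wt\Phi$ and for $R$). But those analyses say nothing about $x$ in the compact middle range $[\delta,M]$ --- which is precisely where the exceptional points $x_1,\dots,x_k$ can lie, so it is exactly where the hypothesis ``bounded away from $\{x_1,\dots,x_k\}$'' does its work. The paper closes this range by a separate argument: the solution $\Psi$ depends differentiably (indeed meromorphically) on $x$ through the Lax equation (\ref{PsiB}), hence the $\bigO(\lambda^{-1})$ term in (\ref{RHP Psi: c}), and therefore in (\ref{RHP hatPhi: c}), is uniform on any compact subset of $(0,+\infty)$ avoiding the poles, by continuity plus compactness; combining this with the two asymptotic regimes gives (iii). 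Without some such argument for intermediate $x$, uniformity on all of $(0,+\infty)$ minus neighborhoods of the $x_j$ does not follow from the $x\to+\infty$ and $x\searrow 0$ analyses alone, so you should add this step (or an equivalent continuity-of-the-resolvent argument) to your plan.
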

Statements (i) and (ii) follow immediately from Theorem \ref{theorem: Painleve}
(which will be proven in Section \ref{section: Painleve}).
The third statement will follow from our asymptotic analysis of the RH problem for
$\Phi$ in Section \ref{section: Painleve}.

    We will now transform the jump matrices for $\Phi$ into the jump matrices for $S$
near $1$. Note first that the off-diagonal entries of the jump matrices for $\Phi$
have branch points at $\pm\frac{x}{2}$, and the ones for $S$ at $e^{\pm t}$.
    Let us therefore define a conformal mapping $\lambda(z)$ in a neighborhood of
$1$ which maps $e^{-t}$ to $-\frac{x}{2}$, $e^{t}$ to $\frac{x}{2}$, and $1$ to
$0$:
\begin{equation}\label{def lambda}
    \lambda(z)=\frac{x}{2t}\ln(z), \qquad \mbox{ $z\in U$.}
    \end{equation}
Here we take the branch of the logarithm such that
$\ln z>0$ for $z>1$, and the branch cut is along the negative real axis.
We will furthermore need that $e^{\lambda(z)}=z^n$, and therefore set
\be
x=2nt.
\ee

Let us choose the contours $\Sigma_1$ and $\Sigma_2$ near $1$ in such a way that
$\lambda$ maps $\Sigma_1\cup\Sigma_2$ onto the jump contour
$\cup_{j=1}^4e^{\frac{\pi i (2j-1)}{4}}\mathbb R^+$  for $\Phi$. We look for the
parametrix $P$ in the form
    \begin{equation}
    \label{def P}
    P(z)=E(z)\Phi(\lambda(z);2nt)W(z),
    \end{equation}
    where $E$ is an analytic function in $U$, and $W$ is given by
    \begin{equation}\label{def W}
W(z)=\begin{cases}
-G(\lambda(z))^{-\frac{1}{2}\sigma_3}
z^{\frac{n}{2}\sigma_3}f(z)^{-\frac{1}{2}\sigma_3}\sigma_3, &\mbox{ for $|z|<1$,}\\
G(\lambda(z))^{-\frac{1}{2}\sigma_3}z^{\frac{n}{2}\sigma_3}f(z)^{\frac{1}{2}\sigma_3}\sigma_1,
&\mbox{ for $|z|>1$,}
\end{cases}
\end{equation}
with $\sigma_1=\begin{pmatrix}0&1\\1&0\end{pmatrix}$.
Note that the branch points of $G$ cancel the ones for $f$ in $U$,
and $W$ is analytic in $U\setminus C$.

If $E$ is analytic in $U$, it is easy to check
using (\ref{RHP hatPhi:b1})--(\ref{RHP hatPhi:b2}) that $P(z)$ satisfies the same
jump conditions
as the matrix $S$ with the jump matrices given in (\ref{J1})--(\ref{J2}).
Since we evaluate $\Phi(\lambda;x)$ at $x=2nt$, we need to impose the condition that
$2nt$ does not belong to the set $\{x_1, \ldots , x_k\}$ of values at which the RH
problem for $\Phi$ is not
solvable.

To fix $E(z)$, let us consider the behavior of $P$ on $\partial U$.
From (\ref{def lambda}) one observes that there exists $c>0$ such
that for any $0<t<t_0$
\begin{equation}
|\lambda(z)|> cn, \qquad \mbox{ $z\in\partial U$}.
\end{equation}
As $n\to\infty$ and if $2nt$ stays bounded away from the set $\{x_1, \ldots ,
x_k\}$, we can
thus (by Proposition \ref{prop Phi}) use the asymptotic behavior (\ref{RHP hatPhi:
c}) for $\Phi$ to
conclude that
\begin{equation}\label{matching1}
P(z)=E(z)\left(I+\bigO(n^{-1})\right)W(z),\qquad\mbox{ as $n\to\infty$,}
\end{equation}
uniformly for $0<t<t_0$ and $z\in\partial U$.
If $t_0$ is sufficiently small, we can assume that $e^{\pm t}$ lie inside $U$ and at
a distance
bounded from below away from $\partial U$. Then we obtain from (\ref{def W}) and
(\ref{def G}) that
(here and in (\ref{Eas}) below, $\bigO(1)$ is a scalar matrix element)
\be\label{asW}
W(z)=
n^{-\bt\si_3}
\begin{cases}
\begin{pmatrix}
\bigO(1) & 0 \cr 0 & \bigO(1)
\end{pmatrix},& |z|<1\cr
\begin{pmatrix}
0 & \bigO(1) \cr \bigO(1) & 0
\end{pmatrix},& |z|>1
\end{cases}
\ee
as $n\to\infty$ uniformly for $0<t<t_0$ and uniformly for $z\in\partial U\setminus C$.

Now set
    \begin{equation}\label{def E}
    E(z)=N(z)W(z)^{-1}.
    \end{equation}
One verifies directly, using the jumps for $N$ and $W$ across $C$, that $E$ is analytic
in a full neighborhood $\overline U$ of $1$.
Furthermore, by (\ref{asW}), (\ref{def N}),
\be\label{Eas}
E(z)=\begin{pmatrix}
0 & \bigO(1) \cr \bigO(1) & 0
\end{pmatrix} n^{\bt\si_3}
\ee
as $n\to\infty$ uniformly for $0<t<t_0$ and $z\in\partial U$.

Using this result and (\ref{matching1}), we obtain the following matching
condition on $z\in\partial U$:
    \begin{equation}\label{matching3}
    P(z)N(z)^{-1}=E(z)\left(I+\bigO(n^{-1})\right)E(z)^{-1}=
I+n^{-\bt\si_3}\bigO(n^{-1})n^{\bt\si_3}.
    \end{equation}
as $n\to\infty$ uniformly for $0<t<t_0$ and $z\in\partial U$.
Note once again that the matching holds true if $2nt$ remains bounded away from the set
$\{x_1, \ldots , x_k\}$.

Since $P$ has the same jumps as $S$ inside $U$
and $S(z)P(z)^{-1}=\bigO(\ln(z-1))$ as $z\to1$ for $t>0$,
it follows that the singularity is removable and
$S(z)P(z)^{-1}$ is analytic  in $U$.
For later use, we note that
\begin{equation}\label{E}
E(e^{t})=n^{-\beta\si_3}g^{\sigma_3}\si_1, \qquad
E(e^{-t})=n^{-\beta\si_3}h^{\sigma_3}\si_1,
\end{equation}
with $g$ and $h$ given by
\begin{align}
&\label{g1}g=\left(\frac{\sinh t}{t}\right)^{-\frac{\alpha-\beta}{2}}
e^{-\pi i\beta}e^{t\alpha}
\exp\left(-\frac{1}{2}\sum_{k=-\infty}^{-1}V_ke^{tk}+\frac{1}{2}\sum_{k=0}^{\infty}V_ke^{tk}\right),\\
&\label{h1}h=\left(\frac{\sinh t}{t}\right)^{\frac{\alpha+\beta}{2}}
e^{-\pi i\beta}
\exp\left(-\frac{1}{2}\sum_{k=-\infty}^{-1}V_ke^{-tk}+\frac{1}{2}\sum_{k=0}^{\infty}V_ke^{-tk}
\right).
\end{align}

    \subsection{Final RH problem}

    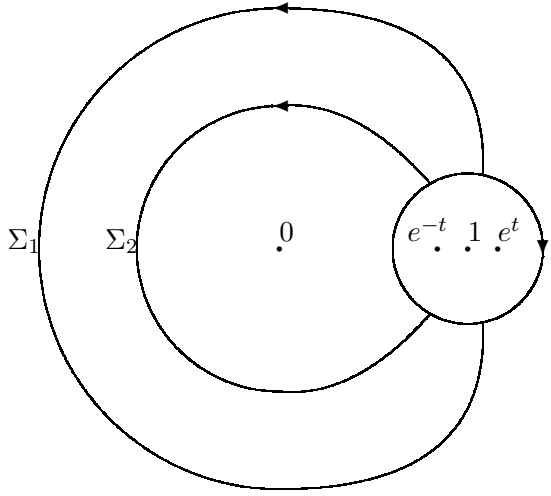
\begin{figure}[t]
    \begin{center}
    \setlength{\unitlength}{1truemm}
    \begin{picture}(100,55)(-5,10)
        \cCircle(75,40){10}[f]
        \put(50,40){\thicklines\circle*{.8}}
        \put(50,41){$0$}
        \put(75,40){\thicklines\circle*{.8}}
        \put(75,41){$1$}
        \put(71,40){\thicklines\circle*{.8}}
        \put(79,41){$e^t$}
        \put(79,40){\thicklines\circle*{.8}}
        \put(67,41){$e^{-t}$}
        \put(49,72){\thicklines\vector(-1,0){.0001}}
        \put(49,59){\thicklines\vector(-1,0){.0001}}

        \put(13.8,40){$\Sigma_1$}
        \put(26.8,40){$\Sigma_2$}


        \qbezier(77,50)(78,72)(50,72)
        \cCircle(50,40){32}[l]

        \qbezier(77,30)(78,8)(50,8)

\put(85,39){\thicklines\vector(0,-1){.0001}}

        \qbezier(70,48.7)(60,60)(50,59)
        \cCircle(50,40){19}[l]

        \qbezier(70,31.3)(60,20)(50,21)

    \end{picture}
    \caption{The contour $\Sigma_R$.}
\label{figure: R}
\end{center}
\end{figure}

    Define
    \begin{equation}\label{def R}
    R(z)=\begin{cases}
    n^{\bt\si_3}S(z)N(z)^{-1}n^{-\bt\si_3}, &\mbox{ for $z\in \mathbb C\setminus U$,}\\
    n^{\bt\si_3}S(z)P(z)^{-1}n^{-\bt\si_3}, &\mbox{ for $z\in U$.}
    \end{cases}
    \end{equation}

    Using the RH properties of $S$, $N$, and $P$, we obtain the following.
    \subsubsection*{RH problem for $R$}
    \begin{itemize}
    \item[(a)] $R$ is analytic in $\mathbb C\setminus \Sigma_R$, where
$\Sigma_R$ is the union of $\partial U$ and the parts of $\Si_1$, $\Si_2$
 lying outside $U$ (see Figure \ref{figure: R}).
    \item[(b)] $R_+(z)=R_-(z)J_R(z)$ for $z\in\Sigma_R$, where
    \begin{align}
    &\label{JR}
    J_R(z)=n^{\bt\si_3}P(z)N(z)^{-1}n^{-\bt\si_3}, &\mbox{ for $z\in\partial U$,}\\
    &J_R(z)=n^{\bt\si_3}N(z)J_k(z) N(z)^{-1}n^{-\bt\si_3}, &\mbox{ for
$z\in\Sigma_k$ outside $U$},
    \end{align}
and $J_k(z)$, $k=1,2$ are the jump matrices (\ref{J1}), (\ref{J2}) of $S$.
    \item[(c)] As $z\to\infty$,  $R(z)=I+\bigO(z^{-1})$.
    \end{itemize}

Using (\ref{J1}), (\ref{J2}), and (\ref{matching3}), we observe a crucial fact:
\begin{align}
    &\label{as JR}J_R(z)=I+\bigO(n^{-1}), &\mbox{ for $z\in\partial U$,}\\
    &J_R(z)=I+\bigO(e^{-cn}), &\mbox{ for $z\in\Sigma_R\setminus\partial U$, $c>0$}
    \end{align}
as $n\to\infty$ uniformly in $z$ and
uniformly for $0<t<t_0$ as long as $2nt$ remains bounded away from the set $\{x_1,
\ldots , x_k\}$.
Thus the jump matrix $J_R$ tends to the identity matrix as $n\to\infty$. The RH
problem for $R$ is
therefore a so-called small-norm RH problem, and by a standard analysis
(see, e.g., \cite{DKMVZ1}) we obtain the following statement.

\begin{proposition}\label{prop R}
Let $0<t<t_0$. Then
    \begin{itemize}
    \item[{\rm (i)}] The RH problem for $R$ is solvable for $n$ sufficiently large
as long as $2nt$ remains bounded away from the set $\{x_1, \ldots , x_k\}$.
    \item[{\rm (ii)}] If $n\to\infty$,
    \begin{equation}\label{expansion R}
    R(z)=I+\bigO(n^{-1}), \qquad \mbox{ uniformly for $z\in\mathbb C\setminus
\Sigma_R$}
    \end{equation}
and for $0<t<t_0$ such that $2nt$ remains bounded away from the set $\{x_1, \ldots ,
x_k\}$.
    \end{itemize}
    \end{proposition}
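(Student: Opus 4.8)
The plan is to recognise the RH problem for $R$ as a small-norm problem and to solve it by the standard Neumann-series argument on $L^2(\Sigma_R)$; the only genuine point is to keep all estimates uniform in $t$ down to $t=0$. First I would collect what has already been established: by (\ref{as JR}) and the text following it, $J_R-I$ is $\bigO(n^{-1})$ on $\partial U$ and $\bigO(e^{-cn})$ on $\Sigma_R\setminus\partial U$, uniformly in $z\in\Sigma_R$ and uniformly for $0<t<t_0$ provided $2nt$ stays bounded away from $\{x_1,\dots,x_k\}$; since $\Sigma_R$ is bounded this gives
\[
\|J_R-I\|_{L^2(\Sigma_R)\cap L^\infty(\Sigma_R)}=\bigO(n^{-1}).
\]
I would also observe that $\Sigma_R$ is a bounded, piecewise-analytic curve with only finitely many transversal self-intersections (the arcs of $\Sigma_1,\Sigma_2$ outside $U$ vary continuously and harmlessly with $t$), so the Cauchy operator $C_-$ on $L^2(\Sigma_R)$ is bounded with a bound uniform in $t\in(0,t_0)$.

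Next I would set up the singular integral equation. With $C_{J_R}f:=C_-\!\big(f\,(J_R-I)\big)$ one has $\|C_{J_R}\|_{L^2\to L^2}\le\|C_-\|\,\|J_R-I\|_{L^\infty}=\bigO(n^{-1})$, so $1-C_{J_R}$ is invertible on $L^2(\Sigma_R)$ for all sufficiently large $n$ (uniformly in the admissible $t$), and $\mu:=(1-C_{J_R})^{-1}I$ satisfies $\|\mu-I\|_{L^2}=\bigO(n^{-1})$. Then
\[
R(z)=I+\frac{1}{2\pi i}\int_{\Sigma_R}\frac{\mu(s)\,(J_R(s)-I)}{s-z}\,ds
\]
is the (unique) solution of the RH problem for $R$, which is part (i). For part (ii), the Cauchy--Schwarz bound $\|\mu(J_R-I)\|_{L^1}\le\|\mu\|_{L^2}\|J_R-I\|_{L^2}=\bigO(n^{-1})$ gives $R(z)=I+\bigO(n^{-1})$ for $z$ at a fixed positive distance from $\Sigma_R$; for $z$ near or on $\Sigma_R$ one combines this with the boundary relation $R_\pm=I+C_\pm\!\big(\mu(J_R-I)\big)$ and the $L^\infty$-smallness of $J_R-I$. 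This is exactly the routine small-norm analysis of \cite{DKMVZ1, DKMVZ2}, and it is $t$-uniform because each ingredient above is.

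The step that genuinely needs care is the uniformity as $t\to0$, and it has in effect already been arranged upstream. The parametrix $P$ is built from $\Phi(\lambda(z);2nt)$, whose behaviour at $\infty$ degenerates precisely at the excluded values $x=2nt\in\{x_1,\dots,x_k\}$; granting that exclusion, Proposition \ref{prop Phi}(iii) supplies the asymptotics of $\Phi$ uniformly for $x\in(0,\infty)$ away from $\{x_j\}$, and since $|\lambda(z)|>cn$ on $\partial U$ for every $0<t<t_0$ the matching (\ref{matching3}) and hence (\ref{as JR}) are uniform in $t$. Thus no new obstacle arises here. Finally, solvability of the RH problem for $R$ transfers, through the invertible transformations $Y\leftrightarrow T\leftrightarrow S\leftrightarrow R$, back to solvability of the RH problem for $Y$, as is used in Proposition \ref{prop21}.
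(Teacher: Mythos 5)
Your proposal is correct and is essentially the paper's own argument: the paper simply notes that (\ref{as JR}) makes the problem for $R$ a small-norm RH problem and invokes the standard analysis of \cite{DKMVZ1}, which is precisely the Neumann-series/Cauchy-operator argument you write out, with the $t$-uniformity supplied upstream by Proposition \ref{prop Phi} (iii) and (\ref{matching3}) exactly as you say.
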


\subsection{Asymptotics for $\frac{d}{dt}\ln D_n$}

Reversing the transformations $S\mapsto R$, $T\mapsto S$, and $Y\mapsto T$
by  (\ref{def T}), (\ref{def S}), and (\ref{def R}),
we obtain the asymptotics for $Y(z)$ under the conditions of Proposition \ref{prop R}.
In particular, we obtain
\begin{align}\label{Yas1}
&Y(z)=n^{-\bt\si_3}(I+\bigO(n^{-1}))n^{\bt\si_3}P(z)z^{n\si_3},&\mbox{for $z$ near
$e^t$,} \\
&\label{Yas2}Y(z)=n^{-\bt\si_3}(I+\bigO(n^{-1}))n^{\bt\si_3}P(z),&\mbox{for $z$ near
$e^{-t}$}.
\end{align}
as $n\to\infty$, uniformly for $0<t<t_0$ if $2nt$ remains bounded away from $\{x_1,
\ldots , x_k\}$.
Using the definitions (\ref{def N}), (\ref{def P}), and (\ref{E}), we find for $P$
in the above formulas:
\begin{align}&\label{Pform}
P(z)=E(z)\Phi(z)W(z)=D(z)^{\si_3}W(z)^{-1}\Phi(z)W(z),\qquad\qquad\mbox{for $z$ near
$e^t$,} \\
&P(z)=E(z)\Phi(z)W(z)=D(z)^{\si_3}
\begin{pmatrix}
0&1\cr -1&0
\end{pmatrix}
W(z)^{-1}\Phi(z)W(z),\quad\mbox{for $z$ near $e^{-t}$}.
\end{align}

We will now substitute the asymptotics we obtained for $Y$ into the differential
identity (\ref{differentialidentity}) for $\ln D_n(t)$.
First, consider the case of $z$ close to $e^t$.
By (\ref{Yas1}), we obtain
\be
Y^{-1}Y'_z=
\frac{n\si_3}{z}+z^{-n\si_3}P^{-1}P'_z z^{n\si_3}+
z^{-n\si_3}P^{-1}(z)n^{-\bt\si_3}(I+\bigO(n^{-1}))^{-1}\bigO(n^{-1})'_z
n^{\bt\si_3}P(z)z^{n\si_3}.
\ee
Using (\ref{Pform}) and (\ref{def W}), we further obtain
\be
P^{-1}P'_z=-\si_3{A'_z\over A}+
W^{-1}\Phi^{-1}\Phi'_z W-W^{-1}\Phi^{-1}\si_3\Phi W \left( {A'_z\over A}+{D'_z\over D}
\right),
\ee
where we defined $A(z)$ by the formula
\[
W(z)=A(z)^{\si_3}\si_1.
\]
Expressions (\ref{def W}) and (\ref{def D2}) give
\begin{align}
{A'_z\over A}(e^t)&={\al+\bt\over 4}e^{-t}+{\al-\bt\over 4}e^{-t}
\left({1\over t}+{e^{-t}\over\sinh t}\right)
+{1\over 2}V'_z(e^t);\\
{D'_z\over D}(e^t)&=-{\al-\bt\over 2\sinh t}e^{-2t}-\sum_{k=-\infty}^{-1}kV_k
e^{(k-1)t}.
\end{align}
Therefore, we finally have for the 22 matrix element of $P^{-1}P'_z$ at the point
$e^t$:
\be\label{PPe}\eqalign{
e^t(P^{-1}P'_z)_{22}(e^t)=
{\al+\bt\over 4}+{\al-\bt\over 4}
\left({1\over t}+{e^{-t}\over\sinh t}\right)
+{1\over 2}e^t V'_z(e^t)
+e^t(\Phi^{-1}\Phi'_z)_{11}(e^t)-\\
\left({\al+\bt\over 4}+{\al-\bt\over 4}
\left({1\over t}-{e^{-t}\over\sinh t}\right)
+{1\over 2}e^t V'_z(e^t)-\sum_{k=-\infty}^{-1}kV_k e^{kt}
\right)(\Phi^{-1}\si_3\Phi)_{11}(e^t).}
\ee
Now using the definition of $W(z)$ it is easy to conclude that
\[
n^{\bt\si_3}P(e^t)=\hat\Phi(t)n^{\bt\si_3},
\]
where $\hat\Phi(t)$ is bounded in $n$ as long as $\Phi(\frac{x}{2})$ is bounded.
Thus, we obtained the asymptotic expression
\be\label{1Y}
e^t(Y^{-1}Y'_z)_{22}(e^t)=
-n+e^t(P^{-1}P'_z)_{22}(e^t)+\left(\hat\Phi^{-1}(t)\bigO(1/n)\hat\Phi(t)\right)_{22}
\ee
uniformly for $0<t<t_0$
as long as $2nt$
remains bounded away from the set $\{x_1, \ldots , x_k\}$,
with the second term on the r.h.s. given by (\ref{PPe}).

Similar calculations at $e^{-t}$ give
\be\label{2Y}
e^{-t}(Y^{-1}Y'_z)_{22}(e^{-t})=
e^{-t}(P^{-1}P'_z)_{22}(e^{-t})+\left(\hat\Phi^{-1}(t)\bigO(1/n)\hat\Phi(t)\right)_{22},
\ee
with
\be\label{PPem}\eqalign{
e^{-t}(P^{-1}P'_z)_{22}(e^{-t})=
-{\al-\bt\over 4}-{\al+\bt\over 4}
\left({1\over t}+{e^{-t}\over\sinh t}\right)
+{1\over 2}e^{-t} V'_z(e^{-t})
+e^{-t}(\Phi^{-1}\Phi'_z)_{22}(e^{-t})-\\
\left({\al-\bt\over 4}+{\al+\bt\over 4}
\left({1\over t}-{e^{-t}\over\sinh t}\right)
-{1\over 2}e^{-t} V'_z(e^{-t})+\sum_{k=1}^\infty kV_k e^{-kt}
\right)(\Phi^{-1}\si_3\Phi)_{22}(e^{-t}).}
\ee

Collecting (\ref{1Y}) and (\ref{2Y}) together, substituting into
(\ref{differentialidentity}), and noting that
\[
\Phi'_z={n\over z} \Phi'_\lb
\]
gives

\begin{proposition}
Let
\be\label{defw}
w(x)=-{\al+\bt\over 2}(\Phi^{-1}\Phi'_\lb)_{11}(x/2)+
{\al-\bt\over 2}(\Phi^{-1}\Phi'_\lb)_{22}(-x/2).
\ee
Then
\be\label{di-interim}\eqalign{
\frac{d}{dt}\ln D_n=
(\al+\bt)n-{\al^2+\bt^2\over 2}-{\al^2-\bt^2\over 2}
\left({1\over t}+{e^{-t}\over\sinh t}\right)\\
-{\al+\bt\over 2}e^t V'_z(e^t)+{\al-\bt\over 2}e^{-t} V'_z(e^{-t})+
2nw(x)\\
+{\al+\bt\over 2}
\left\{ {\al+\bt\over 2}+{\al-\bt\over 2}
\left({1\over t}-{e^{-t}\over\sinh t}\right)
+\sum_{k=1}^\infty k(V_k e^{kt}+V_{-k} e^{-kt})
\right\}(\Phi^{-1}\si_3\Phi)_{11}(e^t)\\
-{\al-\bt\over 2}
\left\{ {\al-\bt\over 2}+{\al+\bt\over 2}
\left({1\over t}-{e^{-t}\over\sinh t}\right)
+\sum_{k=1}^\infty k(V_k e^{-kt}+V_{-k} e^{kt})
\right\}(\Phi^{-1}\si_3\Phi)_{22}(e^{-t})\\+O(1/n)\wt\Phi(x),}
\ee
where the error term is uniform for $0<t<t_0$
as long as $2nt$
remains bounded away from the set $\{x_1, \ldots , x_k\}$, and $\wt\Phi(x)$
depends on $\Phi(x/2)$, $\Phi(-x/2)$, $\al$, $\bt$ only and is bounded when
these parameters are in a compact set.
\end{proposition}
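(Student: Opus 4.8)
The statement is proved by straightforward substitution and bookkeeping: the analytic work is already contained in the differential identity of Proposition~\ref{prop21} and in the asymptotic formulas (\ref{1Y})--(\ref{2Y}), (\ref{PPe}), (\ref{PPem}). The plan is to insert (\ref{1Y}) and (\ref{2Y}) into the right-hand side of (\ref{differentialidentity}), replace $(P^{-1}P'_z)_{22}(e^{\pm t})$ by the explicit expressions (\ref{PPe}) and (\ref{PPem}), and then reorganize everything into the form (\ref{di-interim}).

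First I would record the elementary facts driving the reorganization. Since $\lambda(z)=\frac{x}{2t}\ln z$ with $x=2nt$, one has $\lambda(e^{t})=\frac{x}{2}$, $\lambda(e^{-t})=-\frac{x}{2}$, and $\Phi'_z=\frac{n}{z}\Phi'_\lb$, so $e^{t}(\Phi^{-1}\Phi'_z)_{11}(e^{t})=n(\Phi^{-1}\Phi'_\lb)_{11}(x/2)$ and $e^{-t}(\Phi^{-1}\Phi'_z)_{22}(e^{-t})=n(\Phi^{-1}\Phi'_\lb)_{22}(-x/2)$. Taking $-(\al+\bt)$ times the first and $(\al-\bt)$ times the second and adding produces $2n$ times $-\frac{\al+\bt}{2}(\Phi^{-1}\Phi'_\lb)_{11}(x/2)+\frac{\al-\bt}{2}(\Phi^{-1}\Phi'_\lb)_{22}(-x/2)=w(x)$ by (\ref{defw}), which gives the $2nw(x)$ term; the leading $-n$ in (\ref{1Y}) multiplied by $-(\al+\bt)$ gives the $(\al+\bt)n$ term.

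Next I would collect the purely explicit contributions. From the non-$\si_3$, non-$\Phi'$ parts of (\ref{PPe}) and (\ref{PPem}) one gets $-(\al+\bt)\big[\frac{\al+\bt}{4}+\frac{\al-\bt}{4}(\frac1t+\frac{e^{-t}}{\sinh t})+\frac12 e^{t}V'_z(e^{t})\big]+(\al-\bt)\big[-\frac{\al-\bt}{4}-\frac{\al+\bt}{4}(\frac1t+\frac{e^{-t}}{\sinh t})+\frac12 e^{-t}V'_z(e^{-t})\big]$, which by $(\al+\bt)^2+(\al-\bt)^2=2(\al^2+\bt^2)$ and $2(\al+\bt)(\al-\bt)=2(\al^2-\bt^2)$ equals the first two lines of (\ref{di-interim}). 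For the $(\Phi^{-1}\si_3\Phi)$-terms, the outer $-(\al+\bt)$ and the internal minus sign in (\ref{PPe}) combine so that the coefficient of $(\Phi^{-1}\si_3\Phi)_{11}(e^{t})$ becomes $(\al+\bt)\big[\frac{\al+\bt}{4}+\frac{\al-\bt}{4}(\frac1t-\frac{e^{-t}}{\sinh t})+\frac12 e^{t}V'_z(e^{t})-\sum_{k=-\infty}^{-1}kV_ke^{kt}\big]$; here I would use $e^{t}V'_z(e^{t})=\sum_{k=-\infty}^{\infty}kV_ke^{kt}$ and re-index the negative-$k$ sum ($k\mapsto -k$) to get $\frac12 e^{t}V'_z(e^{t})-\sum_{k=-\infty}^{-1}kV_ke^{kt}=\frac12\sum_{k=1}^{\infty}k(V_ke^{kt}+V_{-k}e^{-kt})$, which produces the stated coefficient $\frac{\al+\bt}{2}\{\cdots\}$. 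The term at $e^{-t}$ is handled in exactly the same way with $11\leftrightarrow 22$ and $t\leftrightarrow -t$, accounting for the minus sign in front of the last brace of (\ref{di-interim}).

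Finally, the error terms $-(\al+\bt)(\hat\Phi^{-1}(t)\bigO(1/n)\hat\Phi(t))_{22}+(\al-\bt)(\hat\Phi^{-1}(t)\bigO(1/n)\hat\Phi(t))_{22}$ from (\ref{1Y})--(\ref{2Y}) are bundled into a single $\bigO(1/n)\wt\Phi(x)$; since $\hat\Phi(t)$ was shown bounded in $n$ as long as $\Phi(x/2)$, respectively $\Phi(-x/2)$, is bounded, $\wt\Phi$ depends only on $\Phi(\pm x/2)$, $\al$, $\bt$ and is bounded for these in a compact set, and the uniformity statement is inherited verbatim from (\ref{1Y}) and (\ref{2Y}). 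There is no genuine analytic difficulty; the only point needing care — the closest thing to an obstacle — is keeping track of which matrix entry ($11$ versus $22$) and which sign attaches to each piece, which is dictated by the $\si_1$-conjugation in $W$ for $|z|>1$ versus the $\si_3$- and $\begin{pmatrix}0&1\\-1&0\end{pmatrix}$-structure for $|z|<1$, and this has already been incorporated into (\ref{PPe}) and (\ref{PPem}).
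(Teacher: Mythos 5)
Your proposal is correct and is essentially identical to the paper's own argument: the paper likewise obtains (\ref{di-interim}) by substituting (\ref{1Y})--(\ref{2Y}) with (\ref{PPe})--(\ref{PPem}) into (\ref{differentialidentity}), using $\Phi'_z=\frac{n}{z}\Phi'_\lb$ and the Fourier series of $zV'_z(z)$ to reorganize the terms. Your bookkeeping of the $11$ versus $22$ entries, the sign combinations, the re-indexed $V_k$ sums, and the uniform $\bigO(1/n)\wt\Phi(x)$ error term matches the paper's computation.
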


In the next section, we will analyze the Painlev\'e functions $\Phi(\lb)$. Namely,
we will obtain their behavior at $x=0$ and $\infty$ which will be used in the last
section to prove
Theorems \ref{theorem: Toeplitz} and \ref{theorem: Toeplitz2}.

\section{Model RH problem near $z=1$ and the fifth Painlev\'e
equation}\label{section: Painleve}

Recall the RH problem for $\Psi$ stated in the introduction on the contour $\Gamma$
given in
Figure \ref{figure: Gamma} for $x>0$ and $\Re\alpha>-\frac{1}{2}$.
In this section, we will
analyze the $\Psi$-RH problem asymptotically for $x$ near zero and infinity and
prove Theorem \ref{theorem: Painleve} and Proposition \ref{prop Phi}.
Moreover, we will give an explicit formula for $w$ (defined by (\ref{defw}))
in terms of the Painlev\'e V function $v$.

For simplicity, we will omit the dependence of $\Psi$ on $\alpha$, $\beta$, and $x$
in our notation
when convenient.
The behavior of $\Psi$ near $0$ and $1$ can also be characterized in a different way
from conditions (\ref{RHP Psi: d0}), (\ref{RHP Psi: d1}).
The following statement holds.

    \begin{proposition}\label{prop Psi}
    Let $\Psi=\Psi(\zeta;x,\alpha,\beta)$ satisfy the
conditions (a), (b), (d0), and (d1) of the RH problem for $\Psi$. Set
    \begin{equation}
    \label{def
Psi0}\Psi_0(\zeta):=\Psi(\zeta)(\zeta-1)^{\frac{\alpha+\beta}{2}\sigma_3}\zeta^{-\frac{\alpha-\beta}{2}\sigma_3}.
    \end{equation}
Then the function $\Psi_0$ is analytic near $0$ and near $1$. The branch cuts for
$\zeta^{-\frac{\alpha-\beta}{2}\sigma_3}$ and
$(\zeta-1)^{\frac{\alpha+\beta}{2}\sigma_3}$ are chosen here along $[0,+\infty)$ and
$[1,+\infty)$, respectively.
    \end{proposition}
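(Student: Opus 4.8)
The plan is to verify that $\Psi_0$ has no jump across the parts of $\Gamma$ lying near $0$ and near $1$, and that it is bounded at these two points; analyticity then follows from the removable singularity theorem. The first thing I would note is a purely geometric fact: the point $\tfrac12$ and the four rays $\Gamma_1,\dots,\Gamma_4$ stay at a fixed positive distance from both $0$ and $1$. Hence one can fix small discs $U_0\ni 0$ and $U_1\ni 1$ such that $\Gamma\cap U_0$ is a subarc of $\Gamma_6=(0,1)$ only, while $\Gamma\cap U_1$ consists of a subarc of $\Gamma_6$ (to the left of $1$) together with a subarc of $\Gamma_5=(1,+\infty)$ (to the right of $1$). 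By condition (a), $\Psi$, and hence $\Psi_0$, is analytic off these real segments, so it remains only to check the jump of $\Psi_0$ across $\Gamma_6$ inside $U_0$, and across $\Gamma_6$ and $\Gamma_5$ inside $U_1$.

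For these computations I would combine the jump (\ref{RHP Psi:b6}) of $\Psi$ on $\Gamma_6$ (respectively the jump $\Psi_+=\Psi_-e^{2\pi i\beta\sigma_3}$ on $\Gamma_5$) with the elementary boundary relations of the scalar factors in (\ref{def Psi0}). With the convention $\arg\in(0,2\pi)$ one has, on the relevant segments of the positive real axis, $\bigl(\zeta^{-\frac{\alpha-\beta}{2}\sigma_3}\bigr)_- = \bigl(\zeta^{-\frac{\alpha-\beta}{2}\sigma_3}\bigr)_+\, e^{-\pi i(\alpha-\beta)\sigma_3}$ (on both $\Gamma_6$ and $\Gamma_5$, the cut being $[0,+\infty)$), while $(\zeta-1)^{\frac{\alpha+\beta}{2}\sigma_3}$ is analytic near $0$ and across $\Gamma_6$ and satisfies $\bigl((\zeta-1)^{\frac{\alpha+\beta}{2}\sigma_3}\bigr)_- = \bigl((\zeta-1)^{\frac{\alpha+\beta}{2}\sigma_3}\bigr)_+\, e^{\pi i(\alpha+\beta)\sigma_3}$ on $\Gamma_5$ (its cut being $[1,+\infty)$). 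Since all these scalar factors are diagonal, they commute with the diagonal jump matrices of $\Psi$, and the jump of $\Psi_0$ collapses in each case to a product of scalar exponentials equal to $I$: on $\Gamma_6$ (near $0$ and near $1$) it is $e^{\pi i(\alpha-\beta)\sigma_3}e^{-\pi i(\alpha-\beta)\sigma_3}=I$, and on $\Gamma_5$ (near $1$) it is $e^{-2\pi i\beta\sigma_3}e^{\pi i(\alpha+\beta)\sigma_3}e^{-\pi i(\alpha-\beta)\sigma_3}=I$. Hence $\Psi_0$ extends to a single-valued analytic function on $U_0\setminus\{0\}$ and on $U_1\setminus\{1\}$.

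It then remains to show $\Psi_0$ is bounded at $0$ and at $1$, which is where I would use conditions (d0) and (d1). Near $0$ the factor $(\zeta-1)^{\frac{\alpha+\beta}{2}\sigma_3}$ is bounded, and multiplying the first and second columns of the matrix bound in (\ref{RHP Psi: d0}) by the diagonal entries $\zeta^{-\frac{\alpha-\beta}{2}}$ and $\zeta^{\frac{\alpha-\beta}{2}}$ of $\zeta^{-\frac{\alpha-\beta}{2}\sigma_3}$ exactly cancels the powers of $|\zeta|$ (the bounded $\arg\zeta$ contributing only bounded extra factors), so $\Psi_0=\bigO(1)$ as $\zeta\to0$; the identical column-by-column cancellation, now using (\ref{RHP Psi: d1}) and the boundedness of $\zeta^{-\frac{\alpha-\beta}{2}\sigma_3}$ near $1$, gives $\Psi_0=\bigO(1)$ as $\zeta\to1$. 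A bounded analytic function on a punctured disc extends analytically across the puncture, completing the argument. The only delicate point — and the one place where a sign error would be fatal — is the careful bookkeeping of orientations and branch cuts in the boundary relations for the scalar factors; once those are pinned down, the cancellations are forced by the factors being diagonal.
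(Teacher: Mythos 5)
Your argument is correct and follows essentially the same route as the paper's (much terser) proof: cancel the jumps of $\Psi_0$ across $\Gamma_5$ and $\Gamma_6$ using the diagonal, mutually commuting scalar factors with the stated branch conventions, then remove the isolated singularities at $0$ and $1$ via the boundedness supplied by conditions (d0) and (d1). Your explicit bookkeeping of the boundary relations and of the bounded $\arg\zeta$ contribution is exactly the verification the paper leaves to the reader.
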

    \begin{proof}
The fact that $\Psi_0$ is analytic near $0$ and $1$ can be verified using the jump
conditions
for $\Psi$ for $(\zeta-1)^{\frac{\alpha+\beta}{2}\sigma_3}$ and
$\zeta^{-\frac{\alpha-\beta}{2}\sigma_3}$.
The isolated singularities at $0$ and $1$ are removable because of the conditions
(d0) and (d1).
    \end{proof}

Recall the function $\Phi$ defined in terms of $\Psi$ in (\ref{def Phi})-(\ref{def G}).
It satisfies the RH problem given in Section \ref{section: local}.

We will now perform an asymptotic analysis of the RH problem for $\Phi$ as
$x\to +\infty$ and as $x\searrow 0$.

    \subsection{Asymptotics for $\Phi$ as $x\to +\infty$}

    Consider Figure \ref{Figure: +infty} and
    define
    \begin{equation}
    \wt\Phi(\zeta;x)=\Phi(x\zeta;x),
    \end{equation}
for $\zeta$ outside of
the two triangular regions $A$ and $B$. In these regions set
\begin{align*}
&\wt\Phi(\zeta;x)=\Phi(x\zeta;x)\begin{pmatrix}1&G(x\zeta;x)^{-1}\\0&1\end{pmatrix},
&\mbox{in region B,}\\
&\wt\Phi(\zeta;x)=\Phi(x\zeta;x)\begin{pmatrix}1&0\\G(x\zeta;x)&1\end{pmatrix},
&\mbox{in region A.}
\end{align*}
Now $\wt\Phi$ is defined in such a way that it has its jumps only on the solid lines
in Figure \ref{Figure: +infty}. We have

\begin{figure}[t]
\begin{center}
    \setlength{\unitlength}{0.8truemm}
    \begin{picture}(100,48.5)(0,2.5)

    \put(20,25){\thicklines\circle*{.8}}
    \put(50,25){\thicklines\circle*{.8}}
    \put(19,27){\small $-\frac{1}{2}$}
    \put(49,28){\small $0$}
    \put(78,27){\small $\frac{1}{2}$}
    \put(80,25){\thicklines\circle*{.8}}
    \put(65,40){\line(1,1){15}}
    \put(35,40){\line(-1,1){15}}
    \put(65,10){\line(1,-1){15}}
    \put(35,10){\line(-1,-1){15}}
    \put(65,10){\line(0,1){30}}
    \put(35,10){\line(0,1){30}}
    \put(40,25){A}
    \put(57,25){B}
\multiput(50,25)(1,1){15}{\thicklines\circle*{.1}}
\multiput(50,25)(1,-1){15}{\thicklines\circle*{.1}}
\multiput(50,25)(-1,1){15}{\thicklines\circle*{.1}}
\multiput(50,25)(-1,-1){15}{\thicklines\circle*{.1}}
    \put(26,52){$\widehat\Gamma_2$}
    \put(71,52){$\widehat\Gamma_1$}
    \put(35,26){\thicklines\vector(0,1){.0001}}
    \put(65,26){\thicklines\vector(0,1){.0001}}

    \end{picture}
    \caption{The jump contour for $\wt\Phi$.}
    \label{Figure: +infty}
\end{center}
\end{figure}
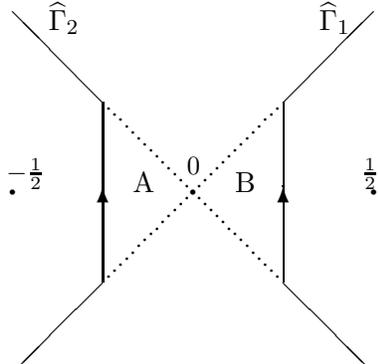

    \subsubsection*{RH problem for $\wt\Phi$}
    \begin{itemize}
    \item[(a)] $\wt\Phi:\mathbb C\setminus (\widehat\Gamma_1\cup\widehat\Gamma_2)
\to \mathbb C^{2\times 2}$ is analytic.
    \item[(b)] $\wt\Phi$ has continuous boundary values on
$\widehat\Gamma_1\cup\widehat\Gamma_2$ related by the conditions:
    \begin{align}
    &\label{RHP
hatPsi:b1}\wt\Phi_+(\zeta)=\wt\Phi_-(\zeta)\begin{pmatrix}1&G(x\zeta;x)^{-1}\\0&1\end{pmatrix},
    &\mbox{ for $\zeta\in\widehat\Gamma_1$,}\\
    &\label{RHP
hatPsi:b2}\wt\Phi_+(\zeta)=\wt\Phi_-(\zeta)\begin{pmatrix}1&0\\-G(x\zeta;x)&1\end{pmatrix},
    &\mbox{ as $\zeta\in\widehat\Gamma_2$.}
    \end{align}
    \item[(c)] $\wt\Phi$ has the following behavior as $\zeta\to\infty$,
    \begin{equation}\label{RHP wtPsi: c}
    \wt\Phi(\zeta)=I+\bigO(\zeta^{-1}).
    \end{equation}
    \end{itemize}

The jump matrices for $\wt\Phi$ are exponentially close to $I$ as $x\to +\infty$
because of the exponential factor in the definition of $G$ (see (\ref{def G})).
Indeed, let us  denote $G_{\wt\Phi}(\zeta)$ the jump matrix for  the function $\wt\Phi(z)$,
i.e.
$$
G_{\wt\Phi}(\zeta) = \begin{pmatrix}1&G(x\zeta;x)^{-1}\\0&1\end{pmatrix}, \quad \zeta \in \widehat\Gamma_1,
$$
and
$$
G_{\wt\Phi}(\zeta) = \begin{pmatrix}1&0\\-G(x\zeta;x)&1\end{pmatrix}, \quad \zeta \in \widehat\Gamma_2.
$$
Then, the following estimates hold:
\begin{equation}\label{100}
||I- G_{\wt\Phi} ||_{L_2(\widehat\Gamma)},\,\,||I- G_{\wt\Phi} ||_{L_{\infty}(\widehat\Gamma)}
< Ce^{-\frac{x}{2}(1-\epsilon)}, \quad 0 < \epsilon < 1,\quad C>0,
\end{equation}
where
$$
\widehat\Gamma = \widehat\Gamma_1 \cup \widehat\Gamma_2,
$$
and we assume
that the vertical parts of the contours $\widehat\Gamma_1$ and $\widehat\Gamma_2$ are
given by the equations $\Re \zeta = 1/2 -\epsilon/2$ and $\Re \zeta = -1/2 +\epsilon/2$,
respectively. Estimates (\ref{100}) imply that the RH
problem for $\wt\Phi$ is a small-norm RH problem for large $x$ and is
therefore solvable in a standard way (see e.g., \cite{DKMVZ1}) for $x$ sufficiently
large. Moreover, the solution $\wt\Phi$ admits the integral representation,
\begin{equation}\label{101}
\wt\Phi(\zeta) = I + \frac{1}{2\pi i}\int_{\widehat\Gamma}\rho(\zeta')
\Bigl(G_{\wt\Phi}(\zeta') - I\Bigr)\frac{d\zeta'}{\zeta'-\zeta},
\end{equation}
with the function $\rho(\zeta)$ which is $L_2$ - close to the identity, namely,
\begin{equation}\label{102}
||I- \rho ||_{L_2(\widehat\Gamma)} < Ce^{-\frac{x}{2}(1-\epsilon)}.
\end{equation}

The large $x$  solvability of the $\wt\Phi$--problem means that
 the RH problems for $\Phi$ and $\Psi$ are solvable for $x$ sufficiently
large as well.
In addition, we have that
    \begin{equation}\label{as wt Phi}
    \wt\Phi(\zeta;x)=I+\bigO(\frac{1}{\zeta}e^{-\frac{x}{2}(1-\epsilon)}), \qquad\mbox{ as $x\to
+\infty$}.
    \end{equation}
This estimate holds uniformly for $\zeta$ off the jump contour and this implies,
in particular, that the following asymptotics hold as $x\to +\infty$:
    \begin{align}
    &\label{as+0}\wt\Phi(\pm\frac{1}{2};x)=\Phi(\pm\frac{x}{2};x)=I+\bigO(e^{-\frac{x}{2}(1-\epsilon)}),\\
    &\label{as+1}\Phi'_\lb(\pm\frac{x}{2};x)=
\bigO(e^{-\frac{x}{2}(1-\epsilon)}).
    \end{align}
Furthermore, (\ref{as wt Phi}) implies that
 \begin{equation}\label{Phi infty}\Phi(\lambda; x)=I+\bigO(\lambda^{-1}),
\qquad\mbox{ as $\lambda\to\infty$, uniformly for $x>C$, $C>0$}.\end{equation}

The integral representation (\ref{101})
in conjunction with the estimate (\ref{102}) allows to evaluate the asymptotics
for the Painlev\'e function $v(x)$ defined in (\ref{def v intro}).
Let $\wt C_1$ be the first coefficient in the large $\zeta$
expansion of the function $\wt\Phi$,
\begin{equation}\label{103}
\wt\Phi(\zeta) =  I+ \frac{\wt C_1}{\zeta} + \bigO(\zeta^{-2}), \quad \zeta \to \infty.
\end{equation}
(Note that this expansion is uniform for $x > C$, $C > 0$.) Recalling the relation
between the functions $\wt\Phi$, $\Phi$, and $\Psi$, we obtain the following
expression of the first coefficient $C_1$ of the series (\ref{RHP Psi: c}) in
terms of the coefficient  $\wt C_1$,
\begin{equation}\label{104}
C_1 = \frac{\alpha + \beta}{2}\sigma_3
+e^{-\frac{x}{4}\sigma_3}x^{\beta\sigma_3}\wt C_1
x^{-\beta\sigma_3}e^{\frac{x}{4}\sigma_3}.
\end{equation}
Together with (\ref{def v intro}), this means that
\begin{equation}\label{105}
v(x) = - \wt C_{1,11} - x \wt C_{1,12} \wt C_{1,21}.
\end{equation}
On the other hand, from (\ref{101}) we obtain
\begin{equation}\label{106}
\wt C_1 = -\frac{1}{2\pi i}\int_{\widehat\Gamma}\rho(\zeta)
\Bigl(G_{\wt\Phi}(\zeta) - I\Bigr)\,d\zeta,
\end{equation}
which leads to the estimate:
\begin{eqnarray}
\wt C_1 &=& -\frac{1}{2\pi i}\int_{\widehat\Gamma}
\Bigl(G_{\wt\Phi}(\zeta) - I\Bigr)\,d\zeta -\frac{1}{2\pi
i}\int_{\widehat\Gamma}\Bigl(\rho(\zeta) - I\Bigr)
\Bigl(G_{\wt\Phi}(\zeta) - I\Bigr)\,d\zeta \nonumber\\
\label{vest1} &=& -\frac{1}{2\pi i}\int_{\widehat\Gamma}
\Bigl(G_{\wt\Phi}(\zeta) - I\Bigr)\,d\zeta  +
\bigO(e^{-x(1-\epsilon)}), \qquad 0<\epsilon <\frac{1}{2}.
\end{eqnarray}
The estimate (\ref{vest1}) implies the following asymptotic
representations for the entries of the matrix $\wt C_1$ as $x \to + \infty$:
\begin{eqnarray}
\wt C_{1,12}&=&-\frac{1}{2\pi i}\int_{\widehat\Gamma_1}
G(x\zeta;x)^{-1}\,d\zeta  +\bigO(e^{-x(1-\epsilon)})\nonumber\\
&=& -x^{-2\beta}e^{-\frac{x}{2}}\frac{e^{-2\pi
i\beta}}{\Gamma(\alpha+\beta)} \psi(1-\alpha -\beta, 2-2\beta; x) +
\bigO(e^{-x(1-\epsilon)})\nonumber\\
&=&-x^{-1+\alpha-\beta}e^{-\frac{x}{2}}\frac{e^{-2\pi
i\beta}}{\Gamma(\alpha+\beta)}\left(1 +
\bigO\left(\frac{1}{x}\right)\right),\label{vest2}
\end{eqnarray}
\begin{eqnarray}
\wt C_{1,21}&=&\frac{1}{2\pi i}\int_{\widehat\Gamma_2}
G(x\zeta;x)^{-1}\,d\zeta  +\bigO(e^{-x(1-\epsilon)})\nonumber\\
&=& x^{2\beta}e^{-\frac{x}{2}}\frac{e^{2\pi
i\beta}}{\Gamma(\alpha-\beta)} \psi(1-\alpha +\beta, 2+2\beta; x) +
\bigO(e^{-x(1-\epsilon)})\nonumber\\
&=&x^{-1+\alpha+\beta}e^{-\frac{x}{2}}\frac{e^{2\pi
i\beta}}{\Gamma(\alpha-\beta)}\left(1 +
\bigO\left(\frac{1}{x}\right)\right),\label{vest3}
\end{eqnarray}
and \begin{equation}\wt C_{1, 11}=-\wt C_{1,
22}=\bigO(e^{-\frac{x}{2}(1-\e)}),
\end{equation} where $\psi(a,c;x)$ denotes the confluent
hypergeometric function. The last estimate can be improved with the
help of the differential identity (\ref{qrt}) which will be proven
in Section \ref{section 43}. Indeed, this identity implies that
$$
\frac{d}{dx} \wt C_{1,11} = \wt C_{1,12}\wt C_{1,21},
$$
and hence
\begin{equation}
\wt C_{1,11} = x^{-2+2\alpha}e^{-x}
\frac{1}{\Gamma(\alpha-\beta)\Gamma(\alpha+\beta)}\left(1 +
\bigO\left(\frac{1}{x}\right)\right).\label{vest5}
\end{equation}
Substituting the estimates (\ref{vest2}), (\ref{vest3}), and (\ref{vest5}) into the formula
(\ref{105}), we arrive at the following asymptotic equation for the Painlev\'e function
$v(x)$:
\begin{equation}\label{vest6}
v(x) = x^{-1+2\alpha}e^{-x}
\frac{-1}{\Gamma(\alpha-\beta)\Gamma(\alpha+\beta)}
\left(1 + \bigO\left(\frac{1}{x}\right)\right).
\end{equation}

\subsection{Asymptotics for $\Psi$ and $\Phi$ as $x\searrow 0$}

    \begin{figure}[t]
\begin{center}
    \setlength{\unitlength}{0.8truemm}
    \begin{picture}(100,48.5)(0,2.5)

    \put(35,25){\thicklines\circle*{.8}}
    \put(34,27){\small $-x$}
    \put(48,27){\small $0$}
    \put(50,25){\thicklines\circle*{.8}}
    \put(35,25){\line(1,0){15}}
    \put(44,25){\thicklines\vector(1,0){.0001}}
    \put(50,25){\line(1,1){25}}
    \put(50,25){\line(1,-1){25}}
    \put(50,25){\line(-1,1){25}}
    \put(50,25){\line(-1,-1){25}}
    \put(50,25){\line(1,0){45}}
    \put(69,25){\thicklines\vector(1,0){.0001}}
    \put(65,40){\thicklines\vector(1,1){.0001}}
    \put(65,10){\thicklines\vector(1,-1){.0001}}
    \put(35,40){\thicklines\vector(-1,1){.0001}}
    \put(35,10){\thicklines\vector(-1,-1){.0001}}

    \put(67,37){\small $\begin{pmatrix}1&e^{\pi i(\alpha-\beta)}\\0&1\end{pmatrix}$}
    \put(-4,40){\small $\begin{pmatrix}1&0\\-e^{-\pi i(\alpha-\beta)}&1\end{pmatrix}$}
    \put(85,27){\small $e^{2\pi i\beta\sigma_3}$}
    \put(2,8){\small $\begin{pmatrix}1&0\\e^{\pi i(\alpha-\beta)}&1\end{pmatrix}$}
    \put(67,11){\small $\begin{pmatrix}1&-e^{-\pi i(\alpha-\beta)}\\0&1\end{pmatrix}$}
    \put(23,20){\small $e^{-\pi i(\alpha-\beta)\sigma_3}$}
    \put(110,35){I'}
    \put(50,50){II'}
    \put(0,25){III'}
    \put(110,7){V'}
    \put(50,-3){IV'}
    \end{picture}
    \caption{The jump contour and jump matrices for $\widehat\Psi$.}
    \label{Figure: 0}
\end{center}
\end{figure}
Write $\Psi_{\rm I}, \ldots, \Psi_{\rm V}$ for the analytic continuation of $\Psi$
from the
indicated in Figure \ref{figure: Gamma} sectors
I, $\ldots$, V  to $\mathbb C\setminus [0,+\infty)$, respectively, and consider the
function
\begin{equation}\label{def tildePsi}
\widehat\Psi(\lambda;x):=e^{\frac{x}{2}\sigma_3}x^{-\beta\sigma_3}\times
\begin{cases}
\Psi_{\rm I}(\frac{\lambda}{x}+1;x),&\mbox{ for $\lambda$ in region I',}\\
\Psi_{\rm II}(\frac{\lambda}{x}+1;x),&\mbox{ for $\lambda$ in region II',}\\
\Psi_{\rm III}(\frac{\lambda}{x}+1;x),&\mbox{ for $\lambda$ in region III',}\\
\Psi_{\rm IV}(\frac{\lambda}{x}+1;x),&\mbox{ for $\lambda$ in region IV',}\\
\Psi_{\rm V}(\frac{\lambda}{x}+1;x),&\mbox{ for $\lambda$ in region V',}
\end{cases}
\end{equation}
where the (modified) regions I', $\ldots$, V' are indicated in Figure \ref{Figure:
0}, so that $\widehat\Psi=\widehat\Psi(\lambda;x,\alpha,\beta)$ has jumps on a
contour which is partially shifted compared to the one for $\Psi(\lambda/x+1)$. In
particular, the intersection of the contour lines is now at $\lambda=0$
instead of $\lambda=-x/2$.

From the RH problem for $\Psi$, one easily derives the RH conditions
for $\widehat\Psi$.
\subsubsection*{RH problem for $\widehat\Psi$}
\begin{itemize}
    \item[(a)] $\widehat\Psi:\mathbb C\setminus\left(e^{\pm \frac{i\pi}{4}}\mathbb
R\cup\mathbb R^+\cup[-x,0]\right)\to \mathbb C^{2\times 2}$ is analytic.
    \item[(b)] $\widehat\Psi$ has continuous boundary values on $e^{\pm
\frac{i\pi}{4}}\mathbb R\cup\mathbb R^+\cup[-x,0]\setminus\{-x,0\}$, and they
are related as follows (with the orientation of the contour as in Figure
\ref{Figure: 0}),
    \begin{align}
    &\label{RHP
tildePsi:b1}\widehat\Psi_+(\lambda)=\widehat\Psi_-(\lambda)\begin{pmatrix}1&e^{\pi
i(\alpha-\beta)}\\0&1\end{pmatrix}, &\mbox{ as $\lambda\in
e^{\frac{i\pi}{4}}\mathbb R^+$,}\\
    &\widehat\Psi_+(\lambda)=\widehat\Psi_-(\lambda)\begin{pmatrix}1&0\\-e^{-\pi
i(\alpha-\beta)}&1\end{pmatrix}, &\mbox{ as $\lambda\in
e^{\frac{3i\pi}{4}}\mathbb R^+$,}\\
    &\widehat\Psi_+(\lambda)=\widehat\Psi_-(\lambda)\begin{pmatrix}1&0\\e^{\pi
i(\alpha-\beta)}&1\end{pmatrix}, &\mbox{ as $\lambda\in
e^{\frac{5i\pi}{4}}\mathbb R^+$,}\\
    &\widehat\Psi_+(\lambda)=\widehat\Psi_-(\lambda)\begin{pmatrix}1&-e^{-\pi
i(\alpha-\beta)}\\0&1\end{pmatrix}, &\mbox{ as $\lambda\in
e^{\frac{7i\pi}{4}}\mathbb R^+$,}\\
&\widehat\Psi_+(\lambda)=\widehat\Psi_-(\lambda)e^{2\pi i\beta\sigma_3}, &\mbox{ as
$\lambda\in\mathbb R^+$},\\
    &\label{RHP tildePsi:b6}\widehat\Psi_+(\lambda)=\widehat\Psi_-(\lambda)e^{-\pi
i(\alpha-\beta)\sigma_3}, &\mbox{ as $\lambda\in (-x,0)$}.
    \end{align}
    \item[(c)] As $\lambda\to\infty$,
    \begin{equation}\label{RHP tildePsi: c}
    \widehat\Psi(\lambda)=\left(I+\bigO(\lambda^{-1})\right)
\lambda^{-\beta\sigma_3}e^{-\frac{1}{2}\lambda\sigma_3}.
    \end{equation}
\item[(d0)]As $\lambda\to -x$,
    \begin{equation}\label{RHP tildePsi: d0}
    \widehat\Psi(\lambda)=\bigO\begin{pmatrix}|\lambda+x|^{\frac{\alpha-\beta}{2}}&|\lambda+x|^{-\frac{\alpha-\beta}{2}}\\
    |\lambda+x|^{\frac{\alpha-\beta}{2}}&|\lambda+x|^{-\frac{\alpha-\beta}{2}}\end{pmatrix}.
    \end{equation}
    \item[(d1)] As $\lambda\to 0$ in sectors I' and V',
    \begin{equation}\label{RHP tildePsi: d1}
    \widehat\Psi(\lambda)=\bigO\begin{pmatrix}|\lambda|^{-\frac{\alpha+\beta}{2}}&|\lambda|^{\frac{\alpha+\beta}{2}}\\
    |\lambda|^{-\frac{\alpha+\beta}{2}}&|\lambda|^{\frac{\alpha+\beta}{2}}\end{pmatrix}.
    \end{equation}
    As $\lambda\to 0$ in the other sectors, the behaviour of $\widehat\Psi$ is
obtained by
applying the jump conditions to (\ref{RHP tildePsi: d1}).
$\widehat\Psi(\lambda)=\bigO(\lambda^{-\frac{|\alpha+\beta|}{2}})$ always holds.
  \end{itemize}
For small values of $x$, we will now construct a global parametrix and a local
parametrix near $0$ for
$\widehat\Psi$ and match them on the boundary of an $\varepsilon$-neighborhood of
$\lambda=0$.
These constructions will lead to the uniform asymptotics for $\widehat\Psi$ and
$\Psi$ as $x\searrow 0$.

\subsubsection{Construction of the global parametrix}
Consider a fixed $\varepsilon$-neighborhood $U_\varepsilon$
of $\lambda=0$ containing, in particular, the $[-x,0]$ part
of the contour.
Outside of this neighborhood, we expect to model $\widehat\Psi$ by the
global parametrix $M=M(\lambda;\alpha,\beta)$ independent of $x$ and solving the
following RH problem.

\subsubsection*{RH problem for $M$}
    \begin{itemize}
    \item[(a)] $M:\mathbb C\setminus \left(e^{\pm\frac{\pi i}{4}}\mathbb R\cup
\mathbb R^+\right) \to \mathbb C^{2\times 2}$ is analytic,
    \item[(b)] $M$ has continuous boundary values on $e^{\pm\frac{\pi i}{4}}\mathbb
R\cup \mathbb R^+\setminus\left\{0\right\}$ related by the conditions
    \begin{align}
    &\label{RHP M0: b1}M_{+}(\lambda)=M_{-}(\lambda)\begin{pmatrix}1&e^{\pi
i(\alpha-\beta)}\\0&1\end{pmatrix}, &\mbox{ as $\lambda\in
e^{\frac{i\pi}{4}}\mathbb R^+$,}\\
    &M_{+}(\lambda)=M_{-}(\lambda)\begin{pmatrix}1&0\\-e^{-\pi
i(\alpha-\beta)}&1\end{pmatrix}, &\mbox{ as $\lambda\in
e^{\frac{3i\pi}{4}}\mathbb R^+$,}\\
    &M_{+}(\lambda)=M_{-}(\lambda)\begin{pmatrix}1&0\\e^{\pi
i(\alpha-\beta)}&1\end{pmatrix}, &\mbox{ as $\lambda\in
e^{\frac{5i\pi}{4}}\mathbb R^+$,}\\
    &M_{+}(\lambda)=M_{-}(\lambda)\begin{pmatrix}1&-e^{-\pi
i(\alpha-\beta)}\\0&1\end{pmatrix}, &\mbox{ as $\lambda\in
e^{\frac{7i\pi}{4}}\mathbb R^+$,}\\
    &\label{RHP M0: b5}M_{+}(\lambda)=M_{-}(\lambda)e^{2\pi i\beta\sigma_3}, &\mbox{
as $\lambda\in \mathbb R^+$}.
    \end{align}
    \item[(c)] We have
\begin{equation}\label{RHP Ph: c}
M(\lambda)=\left(I+\bigO(\lambda^{-1})\right)\lambda^{-\beta\sigma_3}e^{-\frac{1}{2}\lambda\sigma_3},\qquad
\mbox{ as $\lambda\to\infty$}.
\end{equation}
    \end{itemize}
We can solve this RH problem explicitly in terms of the confluent hypergeometric
function. Inspired by the constructions of \cite{IK1, DIK2}, we define
\be\label{PsiConfl1}
\eqalign{
H(\lambda):=\begin{pmatrix}
e^{-i\pi(2\bt+\al)}& 0 \cr 0 & e^{i\pi(\bt+2\al)}
\end{pmatrix}
e^{-\frac{i\pi}{2}\alpha\sigma_3}\left(\begin{matrix}
\lambda^{\alpha}\psi(\alpha+\beta,1+2\alpha,\lambda)e^{i\pi(2\beta+\alpha)} \cr
\lambda^{-\alpha}
\psi(1-\alpha+\beta,1-2\alpha,\lambda)e^{i\pi(\beta-3\alpha)}
{\Gamma(1+\alpha+\beta)\over\Gamma(\alpha-\beta)}
\end{matrix}
\right.\\
\left.
\begin{matrix}
\lambda^{\alpha}
\psi(1+\alpha-\beta,1+2\alpha,e^{-i\pi}\lambda)e^{i\pi(\beta+\alpha)}
{\Gamma(1+\alpha-\beta)\over\Gamma(\alpha+\beta)}
\cr
\lambda^{-\alpha}
\psi(-\alpha-\beta,1-2\alpha,e^{-i\pi}\lambda)e^{-i\pi\alpha}
\end{matrix}\right)e^{\frac{i\pi\alpha}{2}\sigma_3}e^{-\lambda\si_3/2},\qquad
\al\pm\bt\neq -1,-2,\dots,}
\ee
where $\psi(a,b,x)$ is the confluent hypergeometric function, and
$\Gamma(x)$ is Euler's $\Gamma$-function.
Furthermore let
\begin{align*}
&M(\lambda)=M_1(\lambda):=H(\lambda)\begin{pmatrix}1&-e^{\pi
i(\alpha-\beta)}\\0&1\end{pmatrix},&&\mbox{ for $0<\arg\lambda<\frac{\pi}{4}$},\\
&M(\lambda)=M_2(\lambda):=H(\lambda),&&\mbox{ for
$\frac{\pi}{4}<\arg\lambda<\frac{3\pi}{4}$},\\
&M(\lambda)=M_3(\lambda):=H(\lambda)\begin{pmatrix}1&0\\-e^{-\pi
i(\alpha-\beta)}&1\end{pmatrix},&&\mbox{ for
$\frac{3\pi}{4}<\arg\lambda<\frac{5\pi}{4}$},\\
&M(\lambda)=M_4(\lambda):=H(\lambda)\begin{pmatrix}1&0\\2i\sin\pi(\alpha-\beta)&1\end{pmatrix},&&\mbox{
for $\frac{5\pi}{4}<\arg\lambda<\frac{7\pi}{4}$},\\
&M(\lambda)=M_5(\lambda):=H(\lambda)\\
&\ \times \  \begin{pmatrix}1&-e^{-\pi
i(\alpha-\beta)}\\2i\sin\pi(\alpha-\beta)&-2ie^{-\pi
i(\alpha-\beta)}\sin\pi(\alpha-\beta)+1\end{pmatrix},&&\mbox{ for
$\frac{7\pi}{4}<\arg\lambda<2\pi$.}
\end{align*}
Using known properties of the confluent hypergeometric function, one verifies as in
\cite{DIK2} that $M$ satisfies the prescribed RH conditions.

In order to match $M$ later on with the local parametrix near zero that we construct
in the next section,
we will now need to rewrite $M$ in the form in which the structure of its
singularity at $\lambda=0$
becomes more apparent.

Recall the following properties of the confluent hypergeometric function (see, e.g.,
\cite{BE}):
\begin{align}
&\frac{\phi(a,c,z)}{\Gamma(c)}=
\frac{e^{i\pi a}}{\Gamma(c-a)}\psi(a,c,z)+
\frac{e^{-i\pi(c-a)}}{\Gamma(a)}\psi(c-a,c,e^{-i\pi}z)e^z,\label{ci1}\\
&\psi(a,c,z)=z^{1-c}\psi(a-c+1,2-c,z),\label{ci2}\\
&\psi(a,c,z)=
\frac{\Gamma(1-c)}{\Gamma(a-c+1)}\varphi(a,c,z)+\frac{\Gamma(c-1)}{\Gamma(a)}z^{1-c}
\varphi(a-c+1,2-c,z),\qquad c\notin\mathbb Z,\label{ci3}
\\
&\varphi(a,c,z)=e^z\varphi(c-a,c,-z).\label{ci4}
\end{align}
where
\begin{equation}\label{phidef}
\varphi(a,c;z) =
1+\sum_{n=1}^{\infty}\frac{a(a+1)\cdots(a+n-1)}{c(c+1)\cdots(c+n-1)}\frac{z^n}{n!},
\qquad c\neq 0,-1,-2,\dots
\end{equation}
is an entire function.
Let us focus on the region III'. Assume first that $2\al\neq 0,1,2,\dots$
(noninteger $2\al$ combined with our general condition $\Re\al>-1/2$).
Using the properties (\ref{ci1}) and (\ref{ci2}) for simplification of the first
column of $M_3$,
and (\ref{ci3}), (\ref{ci4}) for the second, we easily obtain
\be\label{M3}
M_3(\lb)=E(\lb)\lb^{\al\si_3}G_3,\qquad 2\al\neq 0,1,\dots,\quad \al\pm\bt\neq
-1,-2,\dots,
\ee
with the branch of $\lb^{\pm\al}$ chosen with $0<\arg\lb<2\pi$.
Here
\begin{multline}\label{E1}
E(\lb)=e^{-\lb/2}\left(\begin{matrix}
e^{-i\pi(\al+\bt)}\frac{\Gamma(1+\al-\bt)}{\Gamma(1+2\al)}\varphi(\al+\bt,1+2\al,\lb)
\cr
-e^{-i\pi(\al-\bt)}\frac{\Gamma(1+\al+\bt)}{\Gamma(1+2\al)}\varphi(1+\al+\bt,1+2\al,\lb)
\end{matrix}\right.\\
\left.\begin{matrix}
e^{i\pi(\al-\bt)}\frac{\Gamma(2\al)}{\Gamma(\al+\bt)}\varphi(-\al+\bt,1-2\al,\lb)\cr
e^{i\pi(\al+\bt)}\frac{\Gamma(2\al)}{\Gamma(\al-\bt)}\varphi(1-\al+\bt,1-2\al,\lb)
\end{matrix}\right)
\end{multline}
is entire, and $G_3$ is the constant matrix
\be\label{G3}
G_3=\begin{pmatrix}
1 & c_1 \cr
0 & 1
\end{pmatrix},\qquad c_1=-\frac{\sin\pi(\alpha+\beta)}{\sin 2\pi\alpha}.
\ee
Applying to (\ref{M3}) the jump conditions we readily obtain the general formulas
\be\label{M0}
 M_j(\lambda)=E(\lambda)\lambda^{\alpha\sigma_3}G_j,\qquad j=1, \ldots , 5,
\qquad  2\al\neq 0,1,\dots,\quad \al\pm\bt\neq -1,-2,\dots,
\ee
with appropriate constant matrices $G_j$. In particular, $G_3$ is given by
(\ref{G3}) and
\be\label{G1}
G_1=\begin{pmatrix}
1+c_1 e^{-i\pi(\al-\bt)} & -e^{i\pi(\al-\bt)} \cr
e^{-i\pi(\al-\bt)} & 0
\end{pmatrix}.
\ee

Consider now the case when $2\al$ is an integer. In this case, the calculations for the
first column of $M_3$ remain the same, whereas for the second column (note that
(\ref{ci3}) does
not hold now) we use the known logarithmic formulas for the $\psi$-function. We then
obtain
\begin{align}
&M_j(\lambda)=\wt E(\lambda)\lb^{\al\si_3}
\begin{pmatrix}1&\ga(\lb)\\0& 1\end{pmatrix}\wt G_j, &j=1, \ldots , 5,\\
&\ga(\lb)=\frac{(-1)^{2\al+1}}{\pi}\sin\pi(\alpha+\beta)\ln(\lb e^{-i\pi}),
&\mbox{ if }\quad 2\alpha=0,1,\dots,
\end{align}
where $\wt E(\lb)$ is analytic at zero, and $\wt G_j$ are constant matrices. In
particular,
$\wt G_3=I$.

\subsubsection{Construction of the local parametrix near $0$}

We construct a local parametrix $P$ in $U_\varepsilon$ in such a way that it has the
singularities and jumps of
$\widehat\Psi$ and matches with $M$ to the main order in $x$ on the boundary.
We have to consider the cases $2\al\notin\mathbb Z$ and $2\al\in\mathbb Z$ separately
because of the different behavior of $M$ at $\partial U_\varepsilon$.
As examination of the final formulas show (see (\ref{J0}), (\ref{J11}), (\ref{wtJ0}),
and (\ref{wtJ1}) below), our constructions in this section will be valid for all $\al$,
$\bt$ such that $\al\pm\bt\neq -1,-2,\dots$ This is exactly the restriction on $M$ in
the previous section. Some preliminary expressions, however, are valid under stronger
conditions $\al\pm\bt\notin \mathbb Z$  (cf. (\ref{F1})). We do not mention these conditions
as they disappear in the final formulas, namely: the singularities in $\al$, $\bt$
of $J$ and $\wt J$ defined below cancel with the zeros of $c_0$, $c_2$.

We first deal with the case when $2\alpha\neq 0,1,2,\dots$.
Since we have a problem with 2 singular points with power-law behaviour at $-x$ and
$0$ and
require a power-law behaviour at the boundary $\partial U_\varepsilon$ (see (\ref{M0}))
we expect from the general principles (e.g., \cite{FIKN}) a parametrix in terms of
the hypergeometric function.
Indeed such a parametrix was found by Jimbo \cite{J} in the generic case of
Painlev\'e V equation.
Instead of trying to specialize it to our situation, we provide a direct
construction below.
For $2\alpha\notin\mathbb Z$ , define $P=P(\lambda;x,\alpha,\beta)$ in
$U_\varepsilon$ by
the expressions
\begin{equation}
P(\lambda)=P_j(\lambda),\qquad
\mbox{ with $j=1$ for $\lambda$ in sector I', $j=2$ in sector II', and so on,}
\end{equation}
where
\begin{equation}\label{def Pb}
P_j(\lambda)=E(\lambda)\begin{pmatrix}1&
c_0 J(\lb;x,\al,\bt)\\0&1\end{pmatrix}(\lambda+x)^{\frac{\alpha-\beta}{2}\sigma_3}
\lambda^{\frac{\alpha+\beta}{2}\sigma_3}G_j
\end{equation}
with the argument of the roots between $0$ and $2\pi$.
Here $E$ is given by (\ref{E1}), $G_j$ are as in (\ref{M0}),
\be\label{F}
J(\lb;x,\al,\bt)=\frac{1}{\pi}
\frac{x^{1+2\al}}{-\lb}\frac{\Ga(1+\al+\bt)\Ga(1+\al-\bt)}{\Ga(2+2\al)}
F(1,1+\al+\bt,2+2\al,e^{i\pi}x/\lb),
\ee
\begin{equation}\label{C0}
c_0=-e^{2\pi i\alpha}\frac{\sin\pi(\alpha+\beta)\sin\pi(\alpha-\beta)}{\sin
2\pi\alpha},
\end{equation}
and $F(a,b,c,z)$ is the hypergeometric function of $z$ with parameters $a$, $b$,
$c$. The argument of $z$ is chosen between $0$ and $2\pi$.
For $c\neq 0,-1,-2,\dots$, this function is
represented by the standard series
\[
F(a,b,c,z)= 1+\sum_{n=1}^{\infty}\frac{a(a+1)\cdots(a+n-1)b(b+1)\cdots(b+n-1)}
{c(c+1)\cdots(c+n-1)}\frac{z^n}{n!},
\]
converging in the disk $|z|\le r<1$ of any radius $r<1$,
and is extended to the analytic function in the plane with a cut $[1,+\infty)$.
Therefore, the function $F(a,b,c,e^{i\pi}x/\lb)$, in particular the one in (\ref{F}),
is analytic in $\lb$-plane outside $[-x,0]$.
We will now find the jump of $J(\lb)$ on $[-x,0]$ and the structure of its
singularities
at $-x$ and $0$. First, using the transformation of the hypergeometric functions from
the one with the argument $z$ to those with the argument $1/z$, we can write:
\be\label{F1}
\eqalign{
F(1,1+\al+\bt,2+2\al,z)=
-\frac{\pi}{\sin\pi(\al+\bt)}\frac{\Ga(2+2\al)}{\Ga(1+\al+\bt)\Ga(1+\al-\bt)}\\
\times\left(e^{i\pi} z^{-1}\right)^{1+\al+\bt}F(1+\al+\bt,-\al+\bt,1+\al+\bt,1/z)\\
+\frac{1+2\al}{\al+\bt}e^{i\pi} z^{-1}F(1,-2\al,1-\al-\bt,1/z).}
\ee
Since the function
\[
F(1+\al+\bt,-\al+\bt,1+\al+\bt,1/z)=\left(1-{1\over z}\right)^{\al-\bt},
\]
and $F(1,-2\al,1-\al-\bt,1/z)$ are analytic outside $[0,1]$, we have on $(1,+\infty)$:
\[\eqalign{
F(1,1+\al+\bt,2+2\al,z)_+ -F(1,1+\al+\bt,2+2\al,z)_-\\
=\frac{2\pi i\Ga(2+2\al)}{\Ga(1+\al+\bt)\Ga(1+\al-\bt)}
|z|^{-1-\al-\bt}\left|1-{1\over z}\right|^{\al-\bt},\qquad z\in (1,+\infty).}
\]
Setting here $z=e^{i\pi}x/\lb$ and using (\ref{F}), we obtain
\be\label{Jjump}
J_+(\lb)=J_-(\lb)+2i |\lb|^{\al+\bt}|\lb+x|^{\al-\bt},\qquad \lb\in (-x,0).
\ee

By (\ref{M0}) we can write
\begin{equation}\label{def P2}
P(\lambda)=E(\lambda)\begin{pmatrix}1& c_0 J(\lb)
\\0&1\end{pmatrix}
(\lambda+x)^{\frac{\alpha-\beta}{2}\sigma_3}
\lambda^{-\frac{\alpha-\beta}{2}\sigma_3}E(\lambda)^{-1}M(\lambda).
\end{equation}
Since the product of the factors to the left of $M$ is analytic in $\mathbb
C\setminus [-x,0]$, the
expression (\ref{def P2})
implies directly that $P$ satisfies the jump conditions for $M$, see (\ref{RHP M0:
b1})--(\ref{RHP M0: b5}),
except on $(-x,0)$ where $M$ is analytic but $P$ is not.
For $\lambda\in (-x,0)$, it is convenient to use (\ref{def Pb}), (\ref{G3}), and
(\ref{Jjump}) to verify
that
\[
P_+(\lambda)=P_-(\lambda)e^{-\pi i(\alpha-\beta)\sigma_3}.
\]
This is only true if $c_0$ is given by (\ref{C0}).
We have thus constructed $P$ in such a way that it has exactly the jump conditions
for $\widehat\Psi$.

\medskip

From the fact that $P$ and $\widehat\Psi$ have the same jumps, it follows that
$\widehat\Psi P^{-1}$ is
analytic in $U_\varepsilon$ except possibly at the points $-x$ and $0$.
Let us investigate the behavior of $\widehat\Psi P^{-1}$ near these points in some
detail.
Recall that by Proposition \ref{prop Psi}
$\Psi(\zeta)\zeta^{-\frac{\alpha-\beta}{2}\sigma_3}(\zeta-1)^{\frac{\alpha+\beta}{2}\sigma_3}$
is analytic
at $0$ and $1$, which implies by (\ref{def tildePsi}) that
$\widehat\Psi_0(\lambda):=\widehat\Psi(\lambda)(\lambda+x)^{-\frac{\alpha-\beta}{2}\sigma_3}
\lambda^{\frac{\alpha+\beta}{2}\sigma_3}$ is analytic at $-x$ and $0$, where at $-x$ the
variable $\lb$ is in the region $III'$, while at $0$, the variable $\lb$ is
in the region $I'$ or $V'$ (for other regions at $0$, the appropriate jump conditions
should be applied to $\widehat\Psi(\lambda)$). Consider first $\lb$ close to $0$.
By (\ref{def Pb}), we have
\begin{multline}\label{psip}
\widehat\Psi(\lambda)P(\lambda)^{-1}=
\widehat\Psi_0(\lambda)(\lambda+x)^{\frac{\alpha-\beta}{2}\sigma_3}
\lambda^{-\frac{\alpha+\beta}{2}\sigma_3}G_1^{-1}(\lambda+x)^{-\frac{\alpha-\beta}{2}\sigma_3}
\lambda^{-\frac{\alpha+\beta}{2}\sigma_3}\\
\times\quad
\begin{pmatrix}1& -c_0 J(\lb;x,\al,\bt)
\\0&1\end{pmatrix}E(\lambda)^{-1},\qquad \lb\in I'.
\end{multline}
Substituting (\ref{F1}) with $z=e^{i\pi}x/\lb$ into (\ref{F}), we obtain
the following representation
\be\label{J00}
J(\lb;x,\al,\bt)=
\frac{\lb^{\al+\bt}(\lb+x)^{\al-\bt}}{\sin\pi(\al+\bt)}-
x^{2\al}\frac{\Ga(\al+\bt)\Ga(1+\al-\bt)}{\pi\Ga(1+2\al)}
F(1,-2\al,1-\al-\bt,-\lb/x).
\ee
Therefore,
\be\label{J0}
\eqalign{
c_0 J(\lb;x,\al,\bt)=
-e^{2\pi i\al}\frac{\sin\pi(\al-\bt)}{\sin2\pi\al}
\lb^{\al+\bt}(\lb+x)^{\al-\bt}+F_0(\lb),\\
F_0(\lb)= -x^{2\al} e^{2\pi
i\al}\frac{\sin\pi(\al-\bt)}{\pi}\frac{\Ga(1+\al-\bt)\Ga(-2\al)}{\Ga(1-\al-\bt)}
F(1,-2\al,1-\al-\bt,-\lb/x).}
\ee
Since $F_0(\lb)$ is analytic close
to $\lb=0$, this representation gives an explicit expression for the
singularity of $J(\lb)$ at $0$.
(Note that the arguments of $\lb$, $\lb+x$ here are between $-\pi$ and $\pi$
as they originate from the hypergeometric function above, whereas
the arguments of  $\lb$, $\lb+x$ in the first row of (\ref{psip}) are between $0$ and $2\pi$.)
Substituting (\ref{J0}) into
(\ref{psip}) we see that the singularity cancels, and
$\widehat\Psi(\lambda)P(\lambda)^{-1}$ is analytic at $\lb=0$.

In order to analyze the singularity of $J$ at $\lb=-x$, apply first the
transformation of the hypergeometric
function between arguments $z$ and $1-z$ to (\ref{F1}) with $z=e^{i\pi}x/\lb$. We then obtain:
\be\label{F2}
\eqalign{
F(1,1+\al+\bt,2+2\al,-x/\lb)=\\
\frac{\pi e^{-2\pi
i\al}}{\sin\pi(\al-\bt)}\frac{\Ga(2+2\al)}{\Ga(1+\al+\bt)\Ga(1+\al-\bt)}
\left({\lb\over x}\right)^{1+\al+\bt}\left(1+{\lb\over x}\right)^{\al-\bt}\\
-\frac{1+2\al}{\al-\bt}{\lb\over x}F(1,-2\al,1-\al+\bt,1+\lb/x),}
\ee
and therefore by (\ref{F}),
\be\label{J10}
\eqalign{
J(\lb;x,\al,\bt)=
-\frac{e^{-2\pi i\al}\lb^{\al+\bt}(\lb+x)^{\al-\bt}}{\sin\pi(\al-\bt)}\\
+x^{2\al}\frac{\Ga(\al-\bt)\Ga(1+\al+\bt)}{\pi\Ga(1+2\al)}
F(1,-2\al,1-\al+\bt,1+\lb/x).}
\ee
Using (\ref{C0}), we obtain
\be\label{J11}
\eqalign{
c_0 J(\lb;x,\al,\bt)=
\frac{\sin\pi(\al+\bt)}{\sin2\pi\al}\lb^{\al+\bt}(\lb+x)^{\al-\bt}+F_1(\lb),\\
F_1(\lb)=x^{2\al}
e^{2\pi
i\al}\frac{\sin\pi(\al+\bt)}{\pi}\frac{\Ga(1+\al+\bt)\Ga(-2\al)}{\Ga(1-\al+\bt)}
F(1,-2\al,1-\al+\bt,1+\lb/x).}
\ee
This representation explicitly displays the singularity at $-x$ because $F_1(\lb)$
is analytic near $\lb=-x$.
In the same way as for $\lb=0$, we now obtain that
$\widehat\Psi(\lambda)P(\lambda)^{-1}$ is analytic at $\lb=-x$ as well.

For $\lambda$ at a fixed distance away of the origin, say for
$\lambda\in\partial U_\varepsilon$, it follows from (\ref{def P2}) that
\begin{equation}
P(\lambda;x)M(\lambda)^{-1}=I+\bigO(x)+\bigO(x^{1+2\alpha}),\qquad\mbox{ as
$x\searrow 0$.}
\end{equation}

\medskip

We now consider the case $2\alpha\in\mathbb Z$. We again set
\begin{equation}
P(\lambda)=P_j(\lambda),\qquad\mbox{ with $j=1$ for $\lambda$ in sector I',
$j=2$ in sector II', and so on,}
\end{equation}
but now with
\be\label{wtP}
P_j(\lambda)=\wt E(\lambda)\begin{pmatrix}1& c_2\wt J(\lb)
\\0&1\end{pmatrix}
(\lambda+x)^{\frac{\alpha-\beta}{2}\sigma_3}\lambda^{\frac{\alpha+\beta}{2}\sigma_3}
\begin{pmatrix}1&\ga(\lb)\\0&1\end{pmatrix}\wt G_j,
\ee
where
\be\label{C2}
c_2=-{1\over\pi}\sin\pi(\al+\bt)\sin\pi(\al-\bt),
\ee
\be
\wt J(\lb;x,\al,\bt)={1\over 2}\left(
\frac{\partial}{\partial\al}+\frac{\partial}{\partial\bt}\right)J(\lb;x,\al,\bt),
\ee
with $J(\lb)$ given by (\ref{F}).

Note that
\be\label{wtJjump}
\wt J_+(\lb)=\wt J_-(\lb)+ 2i|\lb|^{\al+\bt}|\lb+x|^{\al-\bt}\ln|\lb|,\qquad \lb\in
(-x,0).
\ee

A similar derivation to the one above shows that $P$ again satisfies the same jump
conditions as the ones for $\widehat\Psi$.
To analyze the structure of the singularities of $P(\lb)$ at $0$ and at $-x$, we
need, as above,
to find suitable expressions for $c_2\wt J(\lb)$ at these points.
Applying the differential operator ${1\over 2}(\partial/\partial\al+
\partial/\partial\bt)$
to (\ref{J00}) (it is convenient to write
$\lb^{\al+\bt}=e^{i\pi(\al+\bt)}(e^{-i\pi}\lb)^{\al+\bt}$ first)
and then taking $2\al\in\mathbb Z$ and using the fact that
\[
\frac{\sin\pi(\al-\bt)}{\sin\pi(\al+\bt)}=(-1)^{2\al+1},\qquad 2\al\in\mathbb Z,
\quad\bt\notin\mathbb Z,
\]
we obtain
\be\label{wtJ0}
\eqalign{
c_2\wt J(\lb;x,\al,\bt)=
\left[(-1)^{2\al+1}(\lb e^{-i\pi})^{\al+\bt}-
{1\over\pi}\sin\pi(\al-\bt)\lb^{\al+\bt}\ln(\lb e^{-i\pi})\right]
(\lb+x)^{\al-\bt}+\wt F_0(\lb)\\
\wt F_0(\lb)=-{c_2\over 2}\left(
\frac{\partial}{\partial\al}+\frac{\partial}{\partial\bt}\right)
\left[
x^{2\al}\frac{\Ga(\al+\bt)\Ga(1+\al-\bt)}{\pi\Ga(1+2\al)}
F(1,-2\al,1-\al-\bt,-\lb/x)\right].}
\ee
As above, this expression can be used to show that $\widehat\Psi(\lb)P(\lb)^{-1}$
has no singularity at $\lb=0$.
To analyze a neighborhood of $\lb=-x$, note that the first term on the r.h.s. of
(\ref{J10})
can be written as
\[
-\frac{e^{-\pi i(\al-\bt)}}{\sin\pi(\al-\bt)}
(e^{-i\pi}\lb)^{\al+\bt}(\lb+x)^{\al-\bt},
\]
and application of the operator ${1\over 2}(\partial/\partial\al+
\partial/\partial\bt)$
to the fraction in the above formula gives zero by antisymmetry in $\al$ and $\bt$.
We finally obtain from (\ref{J10})
\be\label{wtJ1}
\eqalign{
c_2\wt J(\lb;x,\al,\bt)=
{1\over\pi}\sin\pi(\al+\bt)e^{-2\pi\al}\lb^{\al+\bt}(\lb+x)^{\al-\bt}
\ln(\lb e^{-i\pi})+\wt F_1(\lb)\\
\wt F_1(\lb)={c_2\over 2}\left(
\frac{\partial}{\partial\al}+\frac{\partial}{\partial\bt}\right)
\left[
x^{2\al}\frac{\Ga(\al-\bt)\Ga(1+\al+\bt)}{\pi\Ga(1+2\al)}
F(1,-2\al,1-\al+\bt,1+\lb/x)\right],}
\ee
which can be used to see that $\widehat\Psi(\lb)P(\lb)^{-1}$ has no singularity at
$\lb=-x$.

It is easy to see that
\begin{equation}
P(\lambda;x)M(\lambda)^{-1}=I+\bigO(x\ln x),\qquad\mbox{ for $\lambda\in\partial U$,}
\end{equation}
as $x\searrow 0$.

Note that using an integral representation for the hypergeometric function,
we can, if $\Re(\al\pm\bt)>-1$, represent $J(\lb)$ and $\wt J(\lb)$ in the following
form,
which makes the jump conditions (\ref{Jjump}) and (\ref{wtJjump}) obvious:
\be\eqalign{
J(\lb)=\frac{1}{\pi}
\int_{-x}^0\frac{|\xi+x|^{\alpha-\beta}|\xi|^{\alpha+\beta}}{\xi-\lambda}d\xi,\qquad
\wt J(\lb)=\frac{1}{\pi}
\int_{-x}^0\frac{|\xi+x|^{\alpha-\beta}|\xi|^{\alpha+\beta}\ln|\xi|}
{\xi-\lambda}d\xi,\\
\Re(\al\pm\bt)>-1.}
\ee

\medskip

Now define
\begin{equation}\label{RPsi}
R(\lambda)=
\begin{cases}\widehat\Psi(\lambda)M(\lambda)^{-1},&\mbox{ for $\lambda\in\mathbb
C\setminus
U_\varepsilon$,}\\
\widehat\Psi(\lambda)P(\lambda)^{-1},&\mbox{ for $\lambda\in U_\varepsilon$.}
\end{cases}
\end{equation}

This function satisfies the following problem.
\subsubsection*{RH problem for $R$}
\begin{itemize}
\item[(a)] $R$ is analytic in $\mathbb C\setminus \partial U_\varepsilon$.
\item[(b)] The jump condition for $R$ is
\begin{equation}
R_+(\lambda)=R_-(\lambda)(I+e(\lb)), \qquad\mbox{ for $\lambda\in\partial U$,
with $e(\lb)=o(1)$ as $x\searrow 0$.}
\end{equation}
\item[(c)] $R(\lambda)=I+\bigO(\lambda^{-1})$ as $\lambda\to\infty$.
\end{itemize}
 For $x$ sufficiently small, this is a small-norm RH problem, and it follows that
the RH problem for $R$ is solvable, say for $0<x<\delta$. From the invertible
transformations $\Psi\mapsto \widehat\Psi\mapsto R$ and $\Psi\mapsto\Phi$, it
follows that the RH problems for $\Psi$ and $\Phi$ are solvable as well for
$0<x<\delta$. We also have $R(\lambda)=I+o(1)$ uniformly for $\lambda\in\mathbb
C\setminus \partial U$ as $x\searrow 0$. In particular this holds at infinity,
which means that
\begin{equation}
R(\lambda)=I+\bigO\left(\frac{1}{\lambda}\right), \qquad \mbox{ as $\lambda\to\infty$,}
\end{equation} uniformly for small $x$.
Tracing back the transformations $\Psi\mapsto \widehat\Psi\mapsto R$ and
$\Psi\mapsto \Phi$, we can conclude that
\begin{equation}\label{Phi infty 0}\Phi(\lambda; x)=I+\bigO(\lambda^{-1}),
\qquad\mbox{ as $\lambda\to\infty$, uniformly for $0<x<\delta$}.\end{equation}

Moreover, using (\ref{J0}) we obtain for $2\al\notin\mathbb Z$
\be\label{PhiPhi1}
\Phi(\frac{x}{2};x)\si_3\Phi(\frac{x}{2};x)^{-1}=
e^{-x\si_3/4}R(0)E(0)
\begin{pmatrix}
-1 & 2F_0(0)\cr
0 & 1
\end{pmatrix}
E(0)^{-1}R(0)^{-1}e^{x\si_3/4},
\ee
where
$F_0(\lb)$ is defined in (\ref{J0}).
Similarly,
\be\label{PhiPhi2}
\eqalign{
\Phi(-\frac{x}{2};x)\si_3\Phi(-\frac{x}{2};x)^{-1}=\\
e^{-x\si_3/4}R(-x)E(-x)
\begin{pmatrix}
1 & -2F_1(-x)\cr
0 & -1
\end{pmatrix}
E(-x)^{-1}R(-x)^{-1}e^{x\si_3/4},}
\ee
where
$F_1(\lb)$ is defined in (\ref{J11}).

Therefore, we have
\be\label{Phi0est}
\Phi(\pm\frac{x}{2};x)\si_3\Phi(\pm\frac{x}{2};x)^{-1}=\bigO(1)+\bigO(x^{2\al}),\qquad
x\searrow 0,\qquad 2\al\notin\mathbb Z.
\ee
Similarly, by (\ref{wtP}), (\ref{wtJ0}), (\ref{wtJ1}),
\be\label{Phi0estwt}
\Phi(\pm\frac{x}{2};x)\si_3\Phi(\pm\frac{x}{2};x)^{-1}=\bigO(1)+\bigO(x^{2\al})+
\bigO(x^{2\al}\ln x),\qquad
x\searrow 0,\qquad 2\al\in\mathbb Z.
\ee

We will now estimate $w(x)$ given by (\ref{defw}) as $x\searrow 0$. First, using the
connection
(\ref{def Phi}) between $\Phi$ and $\Psi$, we obtain
\begin{equation}\label{PhiPsi}
(\Phi^{-1}\Phi'_\lb)_{jj}= (\Psi^{-1}\Psi'_\lb)_{jj}-{(-1)^j\over 2}
\left(\frac{\al+\bt}{\lb-x/2}-\frac{\al-\bt}{\lb+x/2}+1\right),\qquad j=1,2.
\end{equation}
Expressing $\Psi^{-1}\Psi'_\lb=\widehat\Psi^{-1}\widehat\Psi'_\lb$, and using
(\ref{RPsi}), and the formulas for $P(\lb)$, we obtain after straightforward
calculations that
\be\label{Phit}\eqalign{
(\Phi^{-1}\Phi'_\lb)_{11}(x/2)=-{\al-\bt\over x}+(d_1+d_2x^{2\al})(1+\bigO(x)),\qquad
d_1=\frac{\al-\bt}{2\al},\\
d_2=\frac{\al-\bt}{1+2\al}\frac{\Gamma(1+\al+\bt)\Gamma(1+\al-\bt)}{\Gamma(1-\al+\bt)\Gamma(1-\al-\bt)}
\frac{\Gamma(-2\al)}{\Gamma(1+2\al)^2},\qquad 2\al\notin\mathbb Z.}
\ee
Similarly,
\be\label{Phi-t}\eqalign{
(\Phi^{-1}\Phi'_\lb)_{22}(-x/2)={\al+\bt\over x}+(\wt d_1+\wt
d_2x^{2\al})(1+\bigO(x)),\qquad
\wt d_1=-\frac{\al+\bt}{2\al},\\
\wt d_2=-\frac{\al+\bt}{1+2\al}\frac{\Gamma(1+\al+\bt)\Gamma(1+\al-\bt)}
{\Gamma(1-\al+\bt)\Gamma(1-\al-\bt)}
\frac{\Gamma(-2\al)}{\Gamma(1+2\al)^2},\qquad 2\al\notin\mathbb Z.}
\ee
Substituting these expressions into (\ref{defw}), we obtain as $x\searrow 0$:
\be\label{was}\eqalign{
w(x)=
{\al^2-\bt^2\over x}+
\frac{\al^2-\bt^2}{2\al}\\
\times\left\{1-
x^{2\al}\frac{\Gamma(1+\al+\bt)\Gamma(1+\al-\bt)}{\Gamma(1-\al+\bt)\Gamma(1-\al-\bt)}
\frac{\Gamma(1-2\al)}{\Gamma(1+2\al)^2}\frac{1}{1+2\al}\right\}(1+\bigO(x)),\qquad
2\al\notin\mathbb Z.}
\ee
Similarly, we verify using (\ref{wtP}), (\ref{wtJ0}), (\ref{wtJ1}) that as
$x\searrow 0$,
\be\label{waswt}
w(x)=
{\al^2-\bt^2\over x}+\bigO(1)+\bigO(x^{2\al})+
\bigO(x^{2\al}\ln x),\qquad 2\al\in\mathbb Z.
\ee

\subsection{Differential system for $\Psi$}\label{section 43}
So far we know that there exist $\delta, M>0$ such that the RH
problems for $\Psi$ and $\Phi$ are solvable for $x>M$ and for
$0<x<\delta$. We will derive differential equations for $\Psi$ with
respect to $x$ and $\zeta$. This will lead to the Painlev\'e V
equation and will help us to find an identity for the function $w$
given by (\ref{defw}) in terms of $v$. Here we follow similar lines
as in \cite[Section 5.4]{FIKN}.

\medskip

From the RH conditions for $\Psi$, it follows that, for any $x$ for which the RH
problem is solvable, the (matrix) function
$A(\zeta;x)=\Psi_\zeta(\zeta;x)\Psi^{-1}(\zeta;x)$ is a rational function in $\zeta$
with simple poles at $0$ and $1$. Indeed, $A$ is meromorphic because $\Psi$ has
constant jump matrices, $A$ is bounded at infinity because of (\ref{RHP Psi: c}),
and has simple poles at $0$ and $1$ because of (\ref{def Psi0}). Similarly,
$B(\zeta;x)=\Psi_x(\zeta;x)\Psi^{-1}(\zeta;x)$ is a polynomial of degree $1$ in
$\zeta$. It follows that $\Psi$ satisfies a linear differential system of the form
   \begin{align}
&\label{PsiA}\Psi_\zeta(\zeta;x)=\left[A_\infty(x)+\frac{A_0(x)}{\zeta}+\frac{A_1(x)}{\zeta-1}\right]\Psi(\zeta;x),\\
&\label{PsiB}\Psi_x(\zeta;x)=\left[B_1(x)\zeta+B_0(x)\right]\Psi(\zeta;x).
\end{align}
Substituting the large $\zeta$-expansion (\ref{RHP Psi: c}) for $\Psi$ into
(\ref{PsiA}) and (\ref{PsiB}), we can express the coefficient matrices
$A_\infty$, $A_0$, $A_1$, $B_0$, and $B_1$ explicitly in terms of the entries of
$C_1$ and $C_2$:
\begin{align}
&\label{A} A_\infty=-\frac{x}{2}\sigma_3, \\
&\label{A0}A_0=\begin{pmatrix}-\beta+q+xrt&-2\beta r-xh+xrq+r+xr\\2\beta
t+xj-xtq+t-xt&\beta-q-xrt\end{pmatrix},\\
&\label{A1}A_1=\begin{pmatrix}-q-xrt&2\beta r+xh-xrq-r\\-2\beta
t-xj+xtq-t&q+xrt\end{pmatrix},\\
& B_1=-\frac{1}{2}\sigma_3, \\ &\label{B}B_0=\begin{pmatrix}0&
r\\ -t
&0\end{pmatrix},
\end{align}
where $q=q(x)$, $r=r(x)$, $t=t(x)$, $h=h(x)$, and $j=j(x)$ are given by
\begin{equation}\label{def C1}
C_1(x)=\begin{pmatrix}q(x)&r(x)\\t(x)&-q(x)
\end{pmatrix},\qquad C_2(x)=\begin{pmatrix}*&h(x)\\j(x)&*
\end{pmatrix}
\end{equation}
(note that the trace of $C_1$ must be zero since the determinant of $\Psi$ is equal
to $1$).
Equating the $\bigO(1/\zeta)$-terms in $\Psi_x=(B_1\zeta+B_0)\Psi$ gives the identities
\begin{align}&q'(x)=r(x)t(x),\label{qrt}\\
&h(x)=-r'(x)+r(x)q(x),\\
&j(x)=t'(x)+t(x)q(x).
\end{align}
Furthermore, equating the mixed derivatives $\Psi_{x\zeta}=\Psi_{\zeta x}$ leads to
the compatibility condition
\begin{equation}\label{compatibility}
A_x-B_\zeta+[A,B]=0,\qquad [A,B]=AB-BA.
\end{equation}
Let us follow \cite{FokasMuganZhou, FIKN} and write
\begin{align}
&\label{defv}v(x)=\frac{\alpha+\beta}{2}-q(x)-xq'(x)=\frac{\alpha+\beta}{2}-q(x)-xr(x)t(x), \\
&\label{def y}y(x)=\frac{v(x)}{(-2\beta-1) t(x)-xt'(x)},\\
&\label{def u}u(x)=1+\frac{xt}{(2\beta+1-x)t(x)+xt'(x)}.
\end{align}
Using Proposition \ref{prop Psi} one shows as in \cite{FIKN} that
$\det A_0=-\frac{(\alpha-\beta)^2}{4}$ and $\det A_1=-\frac{(\alpha+\beta)^2}{4}$.
It then follows that the matrices $A_0$, $A_1$, and $B_0$ can be written in the form
(the elements (11), (22), and (21) of $A_0$ and $A_1$ are easy to verify directly,
and (12) follows from the expression for the determinant):
\begin{align}
&A_0=\begin{pmatrix}-v+\frac{\alpha-\beta}{2}&
uy(v-\alpha+\beta)\\-\frac{v}{uy}&v-\frac{\alpha-\beta}{2}\end{pmatrix},\label{2A0}
\\ &A_1=\begin{pmatrix}v-\frac{\alpha+\beta}{2}&-y(v-\alpha-\beta)\\
\frac{v}{y} &-v+\frac{\alpha+\beta}{2}\end{pmatrix},\label{2A1}
\\
&B_0=\frac{1}{x}\begin{pmatrix}0&
-y[v-\alpha-\beta-u(v-\alpha+\beta)]\\ \frac{1}{y}[v-\frac{v}{u}]
&0\end{pmatrix}.\label{2B}
\end{align}
Writing the compatibility condition (\ref{compatibility}) in terms of the functions
$u, v, y$, one verifies that
$u$, $v$, and $y$ solve the system of ODEs
\begin{align}\label{system uvy}
&xu_x=xu-2v(u-1)^2+(u-1)[(\alpha-\beta)u-\beta-\alpha],\\
&\label{system uvy2}xv_x=uv[v-\alpha+\beta]-\frac{v}{u}(v-\beta-\alpha),\\
&\label{system uvy3}xy_x=y\left\{-2v+\alpha
+\beta+u[v-\alpha+\beta]+\frac{v}{u}-x\right\},
\end{align}
which is part of the content of Theorem \ref{theorem: Painleve}
(iii). Eliminating $v$ from the first two equations, one shows that
$u$ solves the Painlev\'e V equation (\ref{PVintro})-(\ref{ABCD}).

Define
\begin{equation}\label{sigmaq}
\sigma(x) = xq(x) -\frac{\alpha + \beta}{2}x.
\end{equation}
It follows from (\ref{defv}) that
\begin{equation}\label{sigmaq0}
\sigma' = -v,
\end{equation}
and therefore, by (\ref{system uvy2}),
\begin{equation}\label{sigmaq2}
- x\sigma'' = uv(v-\alpha+\beta)-\frac{v}{u}(v-\beta-\alpha).
\end{equation}
Moreover, in view of (\ref{qrt}), we have that
\[
\sigma -x\sigma' = -x^2q' = -x^2rt \equiv x^2(B_0)_{12}(B_0)_{21}.
\]
This equation can be rewritten with the help of (\ref{2B}) as
\begin{multline}
\sigma -x\sigma' = -\Bigl(v-\alpha -\beta  -u(v-\alpha
+\beta)\Bigr)\left(v - \frac{v}{u}\right)\\ \label{sigmaq3}
=uv(v-\alpha+\beta)+\frac{v}{u}(v-\beta-\alpha) -2v^2 +2\alpha v.
\end{multline}
Using (\ref{sigmaq0}),  (\ref{sigmaq2}), and  (\ref{sigmaq3}), we
can check directly that the function $\sigma(x)$ satisfies the
$\sigma$-form of the fifth Painlev\'e equation (\ref{sigma5}).

The system (\ref{PsiA})-(\ref{PsiB}) is the Lax pair associated with Painlev\'e V.
Since the RH problem for $\Psi(\zeta;x,\alpha,\beta)$ is solvable for $0<x<\delta$
and for $x>M$, the Lax matrices $A_0(x; \alpha, \beta)$, $A_1(x; \alpha, \beta)$,
and $B_0(x; \alpha, \beta)$ exist for those values of $x$.
However, the system (\ref{system uvy})-(\ref{system uvy3}) has solutions which are
meromorphic in $\mathbb C\setminus\{0\}$ with a cut from zero to infinity, which
implies that $A_0$, $A_1$, and $B_0$ exist for all but (possibly) a finite number of
positive $x$-values. Using appropriately normalized solutions to
(\ref{PsiA})-(\ref{PsiB}), the RH solution $\Psi$ can also be constructed for all
but possibly a finite number of positive $x$-values \cite{FIKN}.
This proves Theorem \ref{theorem: Painleve} (i) and the equivalent statement for
$\Phi$, Proposition \ref{prop Phi} (i).
Furthermore, the differentiability of $\Psi$ with respect to $x$, see (\ref{PsiB}),
implies that the asymptotic condition (\ref{RHP Psi: c}), and thus also (\ref{RHP
Ph: c}), holds uniformly as long as $\delta\leq x\leq M$ if $x$ remains bounded away
from the set of $x$-values for which the RH problem is not solvable. Together with
(\ref{Phi infty}) and (\ref{Phi infty 0}), this proves Proposition \ref{prop Phi}
(iii).


\begin{remark}
The functions $u, v, y$ appearing in (\ref{2A0})--(\ref{2B}) are particular
solutions to the system (\ref{system uvy})--(\ref{system uvy3}). Other solutions can
be obtained by considering RH problems for $\Psi$ with modified jump matrices and
modified behavior near $0$ and $1$, corresponding to different monodromy data, see
\cite{AndreevKitaev}.
\end{remark}
%

\begin{remark}\label{remark notation}
The RH problem for $\Psi$ is not the standard RH problem related to the fifth
Painlev\'e equation. In \cite{FIKN, FokasMuganZhou}, a RH problem  was posed on a
contour $U_0\cup U_1\cup\widehat\Gamma$, where $U_0$ and $U_1$ are small circles
surrounding $0$, and $1$, and where $\widehat\Gamma=\mathbb R\setminus
(\overline{U_0}\cup \overline{U_1})$. The equivalence of
a particular case of this RH problem with ours can be verified directly using
Proposition \ref{prop Psi}.
In order to avoid confusion with the notations in \cite{FIKN}, we note that the
system (\ref{system uvy})-(\ref{system uvy3}) is written with parameters $\theta_0$,
$\theta_1$, and $\theta_\infty$ in \cite{FIKN}, which in our setting are given by
\begin{equation}
\theta_0=-\beta-\alpha, \qquad \theta_1=\alpha-\beta, \qquad \theta_\infty=2\beta.
\end{equation}
\end{remark}

\begin{proposition}\label{prop eigenvectors}
Set
\begin{equation}\label{def a}
a(\zeta;x)=\left(\Psi(\zeta;x)\sigma_3\Psi^{-1}(\zeta;x)\right)_{11}.
\end{equation}
Then the identities
\begin{align}
&\label{ida1}\frac{\alpha-\beta}{2}a(0;x)=A_{0,11}=-v(x)+\frac{\alpha-\beta}{2},\\
&\label{ida2}\frac{\alpha+\beta}{2}a(1;x)=-A_{1,11}=-v(x)+\frac{\alpha+\beta}{2},
\end{align}
hold, with $v$ defined as before by (\ref{def v intro}).
\end{proposition}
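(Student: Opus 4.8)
The plan is to identify the residues of the rational matrix $A(\zeta;x)=\Psi_\zeta\Psi^{-1}$ at $\zeta=0$ and $\zeta=1$ with the local data of $\Psi$ supplied by Proposition \ref{prop Psi}, using the elementary fact that the diagonal conjugating factors appearing there commute with $\sigma_3$ and therefore drop out of $\Psi\sigma_3\Psi^{-1}$.

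First I would use Proposition \ref{prop Psi} to write, in a fixed neighbourhood of $\zeta=0$ (and, separately, in one of $\zeta=1$),
\[
\Psi(\zeta)=\Psi_0(\zeta)\,D(\zeta),\qquad D(\zeta):=\zeta^{\frac{\alpha-\beta}{2}\sigma_3}(\zeta-1)^{-\frac{\alpha+\beta}{2}\sigma_3},
\]
with $\Psi_0$ analytic at the point in question and $\det\Psi_0\equiv\det\Psi\equiv 1$, so $\Psi_0$ is invertible there. Since $D(\zeta)$ is diagonal, its logarithmic derivative is $D'D^{-1}=\bigl(\tfrac{\alpha-\beta}{2\zeta}-\tfrac{\alpha+\beta}{2(\zeta-1)}\bigr)\sigma_3$, and hence
\[
A(\zeta;x)=\Psi_0'\Psi_0^{-1}+\frac{\alpha-\beta}{2\zeta}\,\Psi_0\sigma_3\Psi_0^{-1}-\frac{\alpha+\beta}{2(\zeta-1)}\,\Psi_0\sigma_3\Psi_0^{-1}.
\]
Comparing with $A=A_\infty+A_0/\zeta+A_1/(\zeta-1)$ from (\ref{PsiA}) and noting that $\Psi_0'\Psi_0^{-1}$ is regular at $0$ and at $1$, the singular parts give
\[
A_0=\frac{\alpha-\beta}{2}\,\Psi_0(0)\sigma_3\Psi_0(0)^{-1},\qquad A_1=-\frac{\alpha+\beta}{2}\,\Psi_0(1)\sigma_3\Psi_0(1)^{-1}.
\]

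Next, the same commutation of $D(\zeta)$ with $\sigma_3$ gives $\Psi(\zeta)\sigma_3\Psi(\zeta)^{-1}=\Psi_0(\zeta)\sigma_3\Psi_0(\zeta)^{-1}$, so $a(\zeta;x)=(\Psi\sigma_3\Psi^{-1})_{11}$ extends analytically through $\zeta=0$ and through $\zeta=1$ with $a(0;x)=(\Psi_0(0)\sigma_3\Psi_0(0)^{-1})_{11}$ and $a(1;x)=(\Psi_0(1)\sigma_3\Psi_0(1)^{-1})_{11}$. Reading off the $(1,1)$ entry of the residue formulas then yields $\tfrac{\alpha-\beta}{2}a(0;x)=A_{0,11}$ and $-\tfrac{\alpha+\beta}{2}a(1;x)=A_{1,11}$, and substituting the explicit matrices (\ref{2A0}), (\ref{2A1}), namely $A_{0,11}=-v(x)+\tfrac{\alpha-\beta}{2}$ and $A_{1,11}=v(x)-\tfrac{\alpha+\beta}{2}$, gives exactly (\ref{ida1}) and (\ref{ida2}).

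The only step needing a little attention is the branch structure at $\zeta=1$, which lies on the cut $[0,+\infty)$ of $\zeta^{-\frac{\alpha-\beta}{2}\sigma_3}$: I would note that whatever boundary value of this diagonal factor one takes, it commutes with $\sigma_3$ and cancels in $\Psi_0\sigma_3\Psi_0^{-1}$, so both the residue of $A$ at $\zeta=1$ and the value $a(1;x)$ are unambiguous. Beyond this, the argument is routine — it is the standard reading-off of the eigenvector data of the Lax matrices from the local normalization of $\Psi$.
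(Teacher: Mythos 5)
Your argument is correct and is essentially the paper's own proof: substitute the local factorization $\Psi=\Psi_0(\zeta)\zeta^{\frac{\alpha-\beta}{2}\sigma_3}(\zeta-1)^{-\frac{\alpha+\beta}{2}\sigma_3}$ from Proposition \ref{prop Psi} into the Lax equation (\ref{PsiA}), compare residues at $\zeta=0$ and $\zeta=1$, note that the diagonal factor commutes with $\sigma_3$ so $a$ is unambiguous there, and read off the $(1,1)$ entries using (\ref{2A0}), (\ref{2A1}). No issues.
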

\begin{proof}
Substituting $\Psi$ expressed from (\ref{def Psi0}) into the differential equation
\[\Psi_{\zeta}\Psi^{-1}=A_\infty+\frac{A_0}{\zeta}+\frac{A_1}{\zeta-1},\]
and comparing the residue of the left- and right-hand side at $0$ leads to an
expression for $\Psi(\zeta)\si_3\Psi(\zeta)^{-1}$ as $\zeta\to 0$, in terms of
$A_0$. By
(\ref{2A0}), this gives the
first identity. Comparing the residues at $1$ gives the second identity.
\end{proof}

\begin{proposition}\label{prop w v}
Let $w$ be defined by (\ref{defw}). Then
\begin{align}\label{def wv}
&v(x)=-(xw(x))',\\
&\sigma(x)=x w(x),\label{sw}\\
&\sigma(x)=\int_x^{+\infty}v(\xi)d\xi.\label{sint}
\end{align}
\end{proposition}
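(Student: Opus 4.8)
The plan is to prove the central identity $\sigma(x)=xw(x)$ of (\ref{sw}); the other two statements then come for free. Indeed, once $\sigma=xw$ is known, (\ref{def wv}) is just the derivative of this relation combined with $\sigma'=-v$ from (\ref{sigmaq0}). For (\ref{sint}), integrating $\sigma'=-v$ from $x$ to $+\infty$ gives $\sigma(x)=\sigma(+\infty)+\int_x^{+\infty}v(\xi)\,d\xi$, where the integral converges by the exponential decay of $v$ in (\ref{vestintro}) (equivalently (\ref{vest6})), and $\sigma(+\infty)=\lim_{x\to\infty}xw(x)=0$ because $\Phi(\pm\frac x2;x)=I+\bigO(e^{-\frac x2(1-\epsilon)})$ and $\Phi'_\lb(\pm\frac x2;x)=\bigO(e^{-\frac x2(1-\epsilon)})$ by (\ref{as+0})--(\ref{as+1}), so $w(x)$ is exponentially small.

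To get $\sigma=xw$, first I would rewrite $w$ entirely in terms of $\Psi$. By (\ref{def Phi}) the prefactor $e^{\frac x4\sigma_3}x^{-\beta\sigma_3}$ is independent of $\lambda$ and $G(\lambda;x)^{\frac12\sigma_3}$ is diagonal, so with $\zeta=\frac\lambda x+\frac12$ one has $(\Phi^{-1}\Phi'_\lb)_{jj}=\frac1x(\Psi^{-1}\Psi_\zeta)_{jj}+(\text{a term with simple poles at }\lambda=\pm\tfrac x2)$, which is precisely (\ref{PhiPsi}). Since $\tr A(\zeta)=0$ and, by Proposition \ref{prop Psi}, $\Psi\sigma_3\Psi^{-1}=\Psi_0\sigma_3\Psi_0^{-1}=:W_0(\zeta)$ is analytic at $\zeta=0$ and $\zeta=1$, we get $(\Psi^{-1}\Psi_\zeta)_{11}=\tfrac12\tr\bigl(W_0\,A(\zeta)\bigr)=-(\Psi^{-1}\Psi_\zeta)_{22}$. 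Evaluating at $\lambda=\frac x2$ ($\zeta\to1$) and $\lambda=-\frac x2$ ($\zeta\to0$), and using that the residues of $A$ are $A_1=-\tfrac{\alpha+\beta}{2}W_0(1)$ and $A_0=\tfrac{\alpha-\beta}{2}W_0(0)$, the simple pole of $\frac1x(\Psi^{-1}\Psi_\zeta)_{jj}$ cancels against the pole of the $R^{-1}R_\lb$-term, while the contribution of $W_0'$ to the finite part drops because $\tr\bigl([\Psi_0'\Psi_0^{-1},W_0]\,W_0\bigr)=0$. Substituting $W_0(1)=-\tfrac2{\alpha+\beta}A_1$ and $W_0(0)=\tfrac2{\alpha-\beta}A_0$ and collecting finite parts, this yields
\[ xw(x)=\tfrac12\tr\bigl(A_1(A_\infty+A_0)\bigr)-\tfrac12\tr\bigl(A_0(A_\infty-A_1)\bigr)+\tfrac{\alpha^2-\beta^2}{2}-\tfrac{\alpha x}{2}=\tfrac12\tr\bigl(A_\infty(A_1-A_0)\bigr)+\tr(A_0A_1)+\tfrac{\alpha^2-\beta^2}{2}-\tfrac{\alpha x}{2}. \]

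Finally I would insert the explicit entries (\ref{A}), (\ref{2A0})--(\ref{2A1}) of $A_\infty,A_0,A_1$ in terms of $u,v,y$. From $A_\infty=-\frac x2\sigma_3$ and the diagonal entries $(A_1-A_0)_{11}=-(A_1-A_0)_{22}=2v-\alpha$ one gets $\tfrac12\tr\bigl(A_\infty(A_1-A_0)\bigr)=-xv+\tfrac{\alpha x}{2}$. A direct computation of $\tr(A_0A_1)$ from (\ref{2A0})--(\ref{2A1}), simplified by the already established identity (\ref{sigmaq3}) for $\sigma-x\sigma'$ together with $\sigma'=-v$, gives $\tr(A_0A_1)=\sigma+xv-\tfrac{\alpha^2-\beta^2}{2}$. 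Adding the three pieces, the $v$-, $x$- and $(\alpha^2-\beta^2)$-terms cancel and one is left with $xw(x)=\sigma(x)$.

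The step most likely to generate sign errors — rather than a genuine obstacle — is the bookkeeping of the pole cancellations near $\lambda=\pm\frac x2$ between the $\frac1x(\Psi^{-1}\Psi_\zeta)_{jj}$ term and the $R^{-1}R_\lb$ term, together with checking that $W_0'$ makes no finite contribution. One should also observe that the intermediate formulas divide by $\alpha+\beta$ and $\alpha-\beta$ (via $W_0(1)=-\tfrac2{\alpha+\beta}A_1$, $W_0(0)=\tfrac2{\alpha-\beta}A_0$), whereas the final identity $xw=\sigma$ has no such division and all quantities depend analytically on $(\alpha,\beta)$ wherever the RH problem is solvable, so the degenerate cases $\alpha\pm\beta=0$ follow by continuity.
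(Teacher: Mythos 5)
Your proposal is correct, but it follows a genuinely different route from the paper. The paper first proves the derivative identity (\ref{def wv}): writing $\Psi=E(\zeta)\zeta^{\frac{\alpha-\beta}{2}\sigma_3}$ near $0$ and $\Psi=F(\zeta)(\zeta-1)^{-\frac{\alpha+\beta}{2}\sigma_3}$ near $1$, it feeds the local expansions into the $x$-equation $\Psi_x=(B_1\zeta+B_0)\Psi$ to get $E_{1,22}'=\tfrac12 a(0;x)$ and $F_{1,11}'=-\tfrac12 a(1;x)$, expresses $w$ through $E_{1,22}$, $F_{1,11}$ via (\ref{PhiPsi}), and concludes $-(xw)'=v$ from Proposition \ref{prop eigenvectors}; it then obtains (\ref{sw}) by comparing with $\sigma'=-v$ and fixing the integration constant from the large-$x$ behaviour of both $q$ (so $\sigma\to 0$) and $xw$ (via (\ref{as+0})--(\ref{as+1})). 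You instead establish (\ref{sw}) directly, pointwise in $x$, via the $\zeta$-equation: the identity $(\Psi^{-1}\Psi_\zeta)_{11}=\tfrac12\tr\bigl(W_0 A\bigr)$ with $W_0=\Psi\sigma_3\Psi^{-1}=\Psi_0\sigma_3\Psi_0^{-1}$ analytic at $0$ and $1$ (Proposition \ref{prop Psi}), the residue relations $A_1=-\tfrac{\alpha+\beta}{2}W_0(1)$, $A_0=\tfrac{\alpha-\beta}{2}W_0(0)$ (a matrix form of Proposition \ref{prop eigenvectors}), the vanishing of $\tr(W_0'W_0)$, and then the parametrization (\ref{2A0})--(\ref{2A1}) together with the previously established (\ref{sigmaq0}) and (\ref{sigmaq3}); I checked the pole cancellations and the trace bookkeeping ($\tfrac12\tr(A_\infty(A_1-A_0))=-xv+\tfrac{\alpha x}{2}$, $\tr(A_0A_1)=\sigma+xv-\tfrac{\alpha^2-\beta^2}{2}$), and they are right, so $xw=\sigma$ follows as stated. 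What your version buys is that no integration constant needs to be determined for (\ref{sw}); the exponential estimates (\ref{as+0})--(\ref{as+1}) enter only to give $\sigma(+\infty)=0$ in (\ref{sint}). What it costs is a heavier algebraic computation, and the $x$-equation of the Lax pair is still used implicitly, since (\ref{sigmaq3}) rests on $q'=rt$ from (\ref{qrt}). Two minor remarks: the symbol $R$ in your phrase about the ``$R^{-1}R_\lambda$-term'' is not the paper's $R$ but simply the explicit rational term in (\ref{PhiPsi}), so rename it; and your continuity argument for the degenerate cases $\alpha\pm\beta=0$ presupposes analytic dependence of the RH solution on the parameters, which is standard but not proved in the paper --- an issue the paper's own intermediate formulas share.
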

\begin{proof}
It follows from Proposition \ref{prop Psi} that $\Psi$ can be written in the form
\begin{equation}\label{PsiEF}
\Psi(\zeta)=E(\zeta)\zeta^{\frac{\alpha-\beta}{2}\sigma_3}, \qquad
\Psi(\zeta)=F(\zeta)(\zeta-1)^{-\frac{\alpha+\beta}{2}\sigma_3},
\end{equation}
with $E$ analytic near $0$ and $F$ analytic near $1$. Let us write
\begin{align}
&\label{EF1}E(\zeta)=E_0(I+E_1\zeta+\bigO(\zeta^2)),&\mbox{ as $\zeta\to 0$,}\\
&\label{EF2}F(\zeta)=F_0(I+F_1(\zeta-1)+\bigO((\zeta-1)^2)),&\mbox{ as $\zeta\to 1$.}
\end{align}
Substituting (\ref{PsiEF}) and (\ref{EF1})--(\ref{EF2}) into (\ref{PsiB}),
we obtain the identities
\begin{align*}
&E_{0,x}'=B_0E_0, & E_{1,x}'=E_0^{-1}B_1E_0,\\
&F_{0,x}'=B_0F_0, & F_{1,x}'=F_0^{-1}B_1F_0,
\end{align*}
which imply by Proposition \ref{prop eigenvectors}, in particular, that
\begin{equation}
\label{diffEF}
E_{1,22}'(x)=(E_0^{-1}B_1E_0)_{22}=(E_0B_1E_0^{-1})_{22}=\frac{1}{2}a(0;x),
\qquad F_{1,11}'(x)=-\frac{1}{2}a(1;x).
\end{equation}
On the other hand recalling equation (\ref{PhiPsi}), we obtain
\begin{align}
&\left(\Phi^{-1}(-\frac{x}{2};x)\Phi'_\lambda(-\frac{x}{2};x)\right)_{22}=-\frac{1}{2}+\frac{\alpha+\beta}{2x}+\frac{1}{x}E_{1,22},\\
&\left(\Phi^{-1}(\frac{x}{2};x)\Phi'_\lambda(\frac{x}{2};x)\right)_{11}=\frac{1}{2}-\frac{\alpha-\beta}{2x}+\frac{1}{x}F_{1,11}.
\end{align}
From (\ref{defw}), it follows that
\begin{equation}
w(x)=-\frac{\alpha}{2}+\frac{\alpha^2-\beta^2}{2x}+\frac{\alpha-\beta}{2x}E_{1,22}-\frac{\alpha+\beta}{2x}F_{1,11},
\end{equation}
and by (\ref{diffEF}) together with Proposition \ref{prop eigenvectors} we obtain
$-(xw(x))'=v(x)$.
From (\ref{sigmaq0}) and (\ref{def wv}), it follows that
$\sigma(x) = xw(x) + \mbox{constant}$, where $\sigma$ is defined by (\ref{sigmaq}).
To determine the constant, note first that, as follows from (\ref{104}) and (\ref{vest1}),
$q(x)=C_{1,11}\to (\al+\bt)/2$ as $x\to+\infty$, and hence, $\sigma(x)\to 0$ as $x\to+\infty$.
On the other hand, as follows from (\ref{as+0}), (\ref{as+1}),
we have $xw(x)\to 0$ as $x\to+\infty$. Hence the constant in question is zero, and
we obtain (\ref{sw}). Equation (\ref{sint}) is obtained similarly.
\end{proof}

Combining (\ref{def wv}), (\ref{was}), (\ref{waswt}), and (\ref{vest6}), we obtain
(\ref{vestintro}). The expressions (\ref{was}), (\ref{waswt}), (\ref{sw}), and
(\ref{sint}) imply (\ref{intv}).

\begin{proposition}\label{prop v real}
Let $v$ be defined by (\ref{def v intro}). Then $v(x)$ is real for
$x>0$ if $\Im\alpha=0$, $\alpha>-\frac{1}{2}$, and $\Re\beta=0$.
\end{proposition}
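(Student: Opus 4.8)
The plan is to combine a reflection symmetry of the RH problem for $\Psi$ with the invariance of the $\sigma$-form (\ref{sigma5}) under $\beta\mapsto-\beta$. Throughout assume $\alpha\in\mathbb R$, $\beta\in i\mathbb R$ and $x>0$, so that $\bar\alpha=\alpha$, $\bar\beta=-\beta$, $\overline{\alpha\pm\beta}=\alpha\mp\beta$, and note that the contour $\Gamma$ of Figure \ref{figure: Gamma}, together with its orientation, is invariant under $\zeta\mapsto\bar\zeta$ (with $\Gamma_1\leftrightarrow\Gamma_4$, $\Gamma_2\leftrightarrow\Gamma_3$, and $\Gamma_5,\Gamma_6$ fixed setwise). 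The first step is to show that
\[
\Psi^{\star}(\zeta):=e^{2\pi i\beta\sigma_3}\,\overline{\Psi(\bar\zeta;x,\alpha,\beta)}
\]
solves the RH problem for $\Psi$ with parameters $(\alpha,-\beta)$, and then to invoke uniqueness.

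The verification is bookkeeping. Condition (a) is clear. For (b), conjugate each jump relation for $\Psi$ and keep track of which side of the reflected contour is the ``$+$'' side: the jump of $\Psi$ on $\Gamma_4$ (resp. $\Gamma_3,\Gamma_2,\Gamma_1,\Gamma_5,\Gamma_6$) becomes the jump of $\Psi^{\star}$ on $\Gamma_1$ (resp. $\Gamma_2,\Gamma_3,\Gamma_4,\Gamma_5,\Gamma_6$); since $\overline{e^{\pi i(\alpha-\beta)}}=e^{-\pi i(\alpha+\beta)}$ and $\overline{e^{2\pi i\beta}}=e^{2\pi i\beta}$, these are precisely the $(\alpha,-\beta)$-jump matrices, and the constant left factor $e^{2\pi i\beta\sigma_3}$ does not affect these right-sided relations. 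For (c), with the convention $\arg\in(0,2\pi)$ one computes $\overline{\bar\zeta^{-\beta\sigma_3}}=e^{-2\pi i\beta\sigma_3}\zeta^{\beta\sigma_3}$ and (as $x$ is real) $\overline{e^{-\frac{x}{2}\bar\zeta\sigma_3}}=e^{-\frac{x}{2}\zeta\sigma_3}$; the prefactor $e^{2\pi i\beta\sigma_3}$ is chosen exactly to absorb the spurious constant $e^{-2\pi i\beta\sigma_3}$, so $\Psi^{\star}$ has the normalized large-$\zeta$ form required of $\Psi(\cdot;x,\alpha,-\beta)$, and comparing expansions gives $C_1(x;\alpha,-\beta)=e^{2\pi i\beta\sigma_3}\,\overline{C_1(x;\alpha,\beta)}\,e^{-2\pi i\beta\sigma_3}$. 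Finally (d0), (d1) hold because conjugation turns the exponents $\frac{\alpha-\beta}{2}$ at $0$ and $\frac{\alpha+\beta}{2}$ at $1$ into $\frac{\alpha+\beta}{2}$ and $\frac{\alpha-\beta}{2}$ (the $(\alpha,-\beta)$-exponents) and boundedness near $\frac12$ is preserved. Since $\Im\alpha=0$ and $\Re(\pm\beta)=0$, Theorem \ref{theorem: Painleve}(ii) gives unique solvability for both parameter sets and all $x>0$, so $\Psi^{\star}(\zeta)=\Psi(\zeta;x,\alpha,-\beta)$.

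Writing $C_1=\begin{pmatrix}q&r\\ t&-q\end{pmatrix}$ and noting that conjugation by the diagonal matrix $e^{2\pi i\beta\sigma_3}$ fixes the diagonal and scales $r,t$ by $e^{\pm4\pi i\beta}$, the relation above yields $q(x;\alpha,-\beta)=\overline{q(x;\alpha,\beta)}$ and $(rt)(x;\alpha,-\beta)=\overline{(rt)(x;\alpha,\beta)}$, hence by (\ref{def v intro}) and $\overline{\frac{\alpha+\beta}{2}}=\frac{\alpha-\beta}{2}$,
\[
v(x;\alpha,-\beta)=\overline{v(x;\alpha,\beta)},\qquad x>0
\]
(equivalently one may compare the residues $A_0$ of the Lax matrix in (\ref{PsiA}), using (\ref{2A0}) and Proposition \ref{prop eigenvectors}). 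To close the argument one shows $v(x;\alpha,-\beta)=v(x;\alpha,\beta)$: by Theorem \ref{theorem: Painleve} both $\sigma(\cdot;\alpha,\pm\beta)=\int_{\cdot}^{+\infty}v(\xi;\alpha,\pm\beta)\,d\xi$ solve the $\sigma$-form (\ref{sigma5}) with the asymptotics (\ref{westintro}) and have no poles on $(0,+\infty)$ in this range. Now (\ref{sigma5}) depends on $\beta$ only through $(\sigma'+\alpha+\beta)(\sigma'+\alpha-\beta)=(\sigma'+\alpha)^2-\beta^2$, hence is invariant under $\beta\mapsto-\beta$; and so are the connection data (\ref{westintro}), since $\alpha^2-\beta^2$ is $\beta$-even and the $\beta$-dependent factor of $C(\alpha,\beta)$ in (\ref{Cab}) is $\Gamma(1+\alpha+\beta)\Gamma(1+\alpha-\beta)/[\Gamma(1-\alpha+\beta)\Gamma(1-\alpha-\beta)]$, which is even in $\beta$. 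As these connection formulas determine the solution of (\ref{sigma5}) uniquely, $\sigma(x;\alpha,-\beta)=\sigma(x;\alpha,\beta)$ and so $v(x;\alpha,-\beta)=v(x;\alpha,\beta)$. Together with the displayed identity this gives $v(x;\alpha,\beta)=\overline{v(x;\alpha,\beta)}$, i.e.\ $v$ is real on $(0,+\infty)$.

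The symmetry verification in the second paragraph is routine but must be carried out with care: the orientations of $\Gamma_1,\dots,\Gamma_6$ under reflection and the branch-cut constant $e^{2\pi i\beta\sigma_3}$ are where errors creep in. The one genuinely delicate point is the last step — that a solution of (\ref{sigma5}) is pinned down by the connection data (\ref{westintro}); one may either invoke the uniqueness of the connection-formula solution of Painlev\'e V (cf.\ the references in Remark \ref{MTWBus2}), or, staying within the paper's framework, realize $\beta\mapsto-\beta$ directly as a Schlesinger/B\"acklund transformation of the $\Psi$-RH problem. The reflection-plus-ODE-symmetry route above is nonetheless the quickest to write down.
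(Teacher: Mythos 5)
Your Schwarz-reflection step is sound: one can check, as you do, that $\Psi^{\star}(\zeta)=e^{2\pi i\beta\sigma_3}\overline{\Psi(\bar\zeta;x,\alpha,\beta)}$ solves the $\Psi$-RH problem with parameters $(\alpha,-\beta)$, and uniqueness then gives $C_1(x;\alpha,-\beta)=e^{2\pi i\beta\sigma_3}\overline{C_1(x;\alpha,\beta)}e^{-2\pi i\beta\sigma_3}$ and hence $v(x;\alpha,-\beta)=\overline{v(x;\alpha,\beta)}$. The genuine gap is the closing step $v(x;\alpha,-\beta)=v(x;\alpha,\beta)$. You deduce it from the assertion that a solution of (\ref{sigma5}) is pinned down by the connection data (\ref{westintro}); that assertion is nowhere established in the paper and is not harmless. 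For $2\alpha\in\mathbb Z$ (which includes $\alpha=0$, the Ising case) the $x\to 0$ line of (\ref{westintro}) consists only of $\bigO$-bounds and cannot single out a solution, so you would have to fall back on uniqueness of the exponentially decaying solution at $x\to+\infty$ with prescribed leading coefficient. That is a Hastings--McLeod-type statement which is plausible but requires its own proof (e.g.\ an integral-equation argument near infinity); the references in Remark \ref{MTWBus2} are invoked by the paper only for expected connection formulae, not for such a uniqueness theorem. Since the $\beta$-evenness of $v$ is essentially the entire remaining content after your reflection, the proof as written is incomplete at exactly this point.

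The paper closes the argument with a different reflection that avoids any Painlev\'e-uniqueness input: for real $x$ one sets $\widehat\Psi(\zeta)=\sigma_1\overline{\Psi(-\overline{(\zeta-1/2)})}\,\sigma_1 e^{\pm\pi i\beta\sigma_3}$ for $\pm\Im\zeta>0$, i.e.\ one reflects about the line $\Re\zeta=\tfrac12$ rather than about the real axis. This map swaps the singular points $0$ and $1$ while keeping $(\alpha,\beta)$ fixed (the $\sigma_1$-conjugation and the piecewise constants $e^{\pm\pi i\beta\sigma_3}$ repair the normalization (\ref{RHP Psi: c}) and the jumps on the real axis), so uniqueness gives $\Psi(\zeta+1/2)=C(x)^{\sigma_3}\widehat\Psi(\zeta)$. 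Consequently $a(0;x)=\overline{a(1;x)}$ for $a$ defined in (\ref{def a}), and the identities (\ref{ida1})--(\ref{ida2}) of Proposition \ref{prop eigenvectors} immediately yield $v(x)=\overline{v(x)}$. If you want to keep your route, you must either prove the uniqueness lemma at $x\to+\infty$ or realize $\beta\mapsto-\beta$ as an exact symmetry of the $\Psi$-problem; written out, the latter is essentially the paper's reflection composed with yours, so the direct argument above is the economical repair.
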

\begin{proof}
Suppose that $\alpha>-\frac{1}{2}$, $\Re\beta=0$, and that $\Psi(\zeta;x)$ is a
solution to the RH problem for $\Psi$ given in Section \ref{PVsection}. Then it is
straightforward to verify that the function $\widehat\Psi$ defined by
\[
\widehat\Psi(\zeta):=\sigma_1\overline{\Psi(-\overline{(\zeta-1/2)})}
\sigma_1e^{\pm\pi i\beta\sigma_3}, \qquad \mbox{ if $\pm\Im\zeta>0$,}\]
with $\sigma_1=\begin{pmatrix}0&1\\1&0\end{pmatrix}$, solves the RH problem for
$\Psi(\zeta+1/2)$ for real $x$ up to a constant factor.
Therefore, by uniqueness,
\begin{equation}
\Psi(\zeta+1/2)=C(x)^{\si_3}\widehat\Psi(\zeta),
\end{equation}
where $C(x)$ is independent of $\zeta$. By (\ref{def a}), it follows that
$a(0;x)=\overline{a(1;x)}$.
Subtracting the complex conjugate of (\ref{ida2}) from (\ref{ida1}), we
conclude that $v(x)=\overline{v(x)}$.
\end{proof}

    \subsection{Solvability of the RH problem for $\Psi$}\label{solvability}
    In this section, we will prove Theorem \ref{theorem: Painleve} (ii) and
Proposition \ref{prop Phi} (ii):
    we will prove that the RH problem for $\Psi$ is solvable for all positive values
of $x$ if $\Re\beta=0$
and $\Im\alpha=0$, $\alpha>-\frac{1}{2}$.

        \subsubsection{Vanishing lemma for Painlev\'e V}
        For a general class of RH problems, it is known that solvability of a RH
problem is equivalent to the triviality of a homogeneous version of the RH
problem \cite{FIKN, FokasZhou, KMM}. For the case of Painlev\'e V, this has
been used in \cite{FokasMuganZhou} for a slightly different but equivalent
RH problem
        (cf.\ Remark \ref{remark notation}). In our case a sufficient (and
necessary) condition to prove the solvability of the RH problem for $\Psi$
is given by the following so-called vanishing lemma.
        \begin{lemma}{\bf (Vanishing lemma for Painlev\'e V)}\ Let $x>0$,
$\Im\alpha=0$, $\Re\beta=0$, and suppose that $\Psi_0$ satisfies the RH
conditions (a), (b), (d0), and (d1) of the RH problem for $\Psi$, with
condition (c) replaced by the homogeneous asymptotic condition
        \begin{equation}
    \Psi_0(\zeta)e^{\frac{x}{2}\zeta\sigma_3}=\bigO(\zeta^{-1})
    , \qquad \mbox{ as $\zeta\to\infty$}.
    \end{equation}
    Then it follows that $\Psi_0 \equiv 0$.\label{vanishing lemma}
        \end{lemma}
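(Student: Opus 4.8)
The plan is to follow the classical vanishing lemma argument for Painlevé-type RH problems (as in \cite{FIKN, FokasMuganZhou}), adapted to the contour $\Gamma$ of Figure \ref{figure: Gamma}. The idea is to introduce an auxiliary matrix $H(\zeta) = \Psi_0(\zeta)\Psi_0^*(\bar\zeta)$, where $\Psi_0^*(\zeta) := \overline{\Psi_0(\bar\zeta)}^{\,T}$, and to show that a suitably chosen contour integral of $H$ (or of $\Psi_0$ against $\Psi_0^*$) vanishes, while the integrand is simultaneously forced to be a non-negative-definite matrix on the relevant contour. This combination forces $\Psi_0 \equiv 0$.

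\medskip

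First I would deform the problem to a more symmetric contour. The natural symmetry for the given jump matrices, when $\Im\alpha = 0$ and $\Re\beta = 0$, is reflection $\zeta \mapsto 1 - \bar\zeta$ about the vertical line $\Re\zeta = 1/2$ combined with conjugation (exactly the symmetry exploited in Proposition \ref{prop v real}). So I would first conjugate $\Psi_0$ by the shift $\zeta \mapsto \zeta + 1/2$ to center everything, and absorb the endpoint factors $e^{2\pi i\beta\sigma_3}$ and $e^{-\pi i(\alpha-\beta)\sigma_3}$ on $\Gamma_5,\Gamma_6$ by a scalar (or diagonal) transformation so that the function to be analyzed, call it $\widetilde\Psi_0$, has jumps only on the four rays $e^{i\pi(2j-1)/4}\mathbb{R}^+$ with the triangular/unipotent jump matrices $\left(\begin{smallmatrix}1 & *\\0 & 1\end{smallmatrix}\right)$ and $\left(\begin{smallmatrix}1 & 0\\ * & 1\end{smallmatrix}\right)$, together with the local behavior (d0), (d1) at $\pm 1/2$ and at $0$. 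When $\Im\alpha = 0$ and $\Re\beta = 0$ the off-diagonal entries $\pm e^{\pm\pi i(\alpha-\beta)}$ have modulus $1$, and the crucial sign structure needed below (alternating $+$ and $-$ in adjacent sectors, Schwarz-symmetric pairs of rays) is present.

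\medskip

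The core of the argument: define $H(\zeta) = \widetilde\Psi_0(\zeta)\,\overline{\widetilde\Psi_0(\bar\zeta)}^{\,T}$, which is analytic off the jump contour, behaves like $\bigO(\zeta^{-1})\cdot\bigO(\bar\zeta^{-1})$ hence $\bigO(\zeta^{-2})$ at infinity (using the homogeneous condition and the fact that $e^{\frac{x}{2}\zeta\sigma_3}$ is killed on both factors), and whose singularities at $0,\pm 1/2$ are mild enough — the products $|\zeta|^{\pm(\alpha-\beta)}$ etc.\ in (d0), (d1), being of the form (power)$\times$(conjugate power) with $\alpha$ real and $\beta$ imaginary, combine to $|\zeta|^{\pm\text{(real)}}$ with exponent bounded in modulus below $1$ when $\Re\alpha > -1/2$ — so $H$ is integrable there. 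Then $\oint H(\zeta)\,d\zeta = 0$ by deforming to infinity (the $\bigO(\zeta^{-2})$ decay gives a vanishing contribution from the large circle, and the contributions from small circles around $0,\pm1/2$ vanish by the integrability just noted). On the other hand, collapsing the contour onto the rays, the jump relation $H_+ - H_- = \widetilde\Psi_{0,-}(J - I)\widetilde\Psi_{0,-}^* + (\text{terms that telescope to }0)$ — more precisely one writes $\int (H_+ + H_-)\,d\zeta$ or $\int H_-\,(J - J^{-*})\,\cdots$ — reduces the integral to $\sum_j \int_{\text{ray}_j}\widetilde\Psi_{0,-}(\zeta) N_j \widetilde\Psi_{0,-}(\zeta)^*\,d\zeta$ for explicit constant matrices $N_j$, and the real/imaginary-part structure of the rays (Schwarz pairs) plus the $\pm$ sign alternation of the jumps makes each such term a Hermitian matrix of one definite sign. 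Taking the trace of the $(1,1)$ or a suitable scalar component, one gets $0 = \int |\,(\text{boundary value of a component of }\widetilde\Psi_0)\,|^2\,|d\zeta|$ summed over rays, forcing the boundary values of $\widetilde\Psi_0$ to vanish on the rays; hence $\widetilde\Psi_0$ extends to an entire $\bigO(\zeta^{-1})$ function, so $\widetilde\Psi_0 \equiv 0$, and therefore $\Psi_0 \equiv 0$.

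\medskip

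\textbf{The hard part} will be bookkeeping the signs: verifying that, after the reduction to rays, every term $\widetilde\Psi_{0,-} N_j \widetilde\Psi_{0,-}^*$ is a non-negative (or non-positive) definite matrix, which is where the hypotheses $\Im\alpha = 0$, $\Re\beta = 0$ are used in an essential way (they guarantee $N_j$ is, up to a common sign, a positive rank-one matrix like $\left(\begin{smallmatrix}0&0\\0&1\end{smallmatrix}\right)$ or $\left(\begin{smallmatrix}1&0\\0&0\end{smallmatrix}\right)$ on Schwarz-symmetric ray pairs). A secondary technical point is justifying the contour deformation near $0$ and $\pm 1/2$: one must check that the exponents in (d0), (d1) — after forming $H$ — are strictly greater than $-1$ so the singularities are integrable and contribute nothing; this uses $\Re\alpha > -1/2$ together with the specific $(\alpha\pm\beta)/2$ combinations, and the boundedness of $\Psi_0$ near $1/2$. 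I would also need to invoke standard theory (e.g.\ \cite{FIKN, FokasZhou}) to conclude that the vanishing lemma indeed implies the existence statement (ii) of Proposition \ref{prop Phi} and Theorem \ref{theorem: Painleve}, but that implication is black-boxed by the surrounding text.
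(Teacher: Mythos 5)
Your overall strategy (symmetrize, form a Hermitian product $H$, integrate, use positivity of the symmetrized jump) is the same as the paper's, but as written it has two problems, one of which is fatal to the argument. First, a repairable slip: you set $H(\zeta)=\widetilde\Psi_0(\zeta)\,\overline{\widetilde\Psi_0(\bar\zeta)}^{\,T}$, i.e.\ you reflect across the real axis. With that choice the exponential factors do \emph{not} cancel: from the homogeneous condition $\Psi_0(\zeta)=\bigO(\zeta^{-1})e^{-\frac{x}{2}\zeta\sigma_3}$, so the product behaves like $\bigO(\zeta^{-2})e^{-x\zeta\sigma_3}$, which blows up in one half plane, and your step ``$\oint H\,d\zeta=0$ by deforming to infinity'' fails. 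The symmetry you correctly identified in your first paragraph is reflection about the vertical line $\Re\zeta=\tfrac12$; after the shift this means one must use $H(\zeta)=M(\zeta)M^*(-\overline{\zeta})$ (as the paper does), work in the right half plane only, and get the vanishing of $\int_{i\mathbb R}H_-\,d\zeta$ from Cauchy's theorem there (this is also where the diagonal jumps on $(-\tfrac12,\tfrac12)$ and the hypotheses $\Im\alpha=0$, $\Re\beta=0$ are used, to see that $H$ has no jump across $(0,\tfrac12)$ -- they are not simply ``absorbed'' beforehand).

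The fatal gap is the conclusion you draw from positivity. You yourself note that the symmetrized jumps $N_j$ are only rank-one positive matrices (of the type $\mathrm{diag}(0,2e^{\mp\pi i\beta})$), and indeed that is all one gets here: the identity $0=\int M_-N_jM_-^*$ forces only the \emph{second column} of $M_-$ to vanish on the imaginary axis, hence one column of $M$ in each half plane (via the jump relation), not all boundary values. So $\widetilde\Psi_0$ does not become entire and Liouville does not finish the proof. What remains is a nontrivial scalar problem: the surviving column gives a scalar function $g$, analytic off $i\mathbb R\cup[-\tfrac12,\tfrac12]$, with the multiplicative jump $e^{\pm\pi i\alpha}e^{-x\zeta}$ across the imaginary axis, and such a function need not vanish on the strength of growth/decay at infinity alone in an elementary way. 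The paper's proof devotes its entire second half to this: it continues $g$ analytically to $\mathbb C\setminus[-\tfrac12,+\infty)$, substitutes $h(\zeta)=\widehat g(-(\zeta+1)^{3/2})$ to obtain a function bounded and analytic in $\Re\zeta\geq 0$ with $h(\zeta)=\bigO(e^{-x|\zeta|})$ along the imaginary directions, and then invokes Carlson's theorem (a Phragm\'en--Lindel\"of type uniqueness statement, valid precisely because $x>0$) to conclude $h\equiv0$, hence $M\equiv0$ and $\Psi_0\equiv0$. This step is not bookkeeping; without it (or an equivalent argument) the proof is incomplete, and your assertion that the rank-one positivity already forces all boundary values of $\widetilde\Psi_0$ to vanish is incorrect.
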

        \begin{remark}
        A vanishing lemma was proven in \cite{FokasMuganZhou} for a family of
solutions to the
        system (\ref{system uvy})--(\ref{system uvy3}). Our solution, however, is
not contained in this family, and the vanishing lemma requires a different
proof in our case. For the proof of the vanishing lemma, we follow similar
lines as in \cite[Section 5.3]{DKMVZ2}.
        \end{remark}
        \begin{varproof}{\bf of Lemma \ref{vanishing lemma}.}
        Suppose we have a solution $\Psi_0$ to the homogeneous RH problem. We will
then prove that $\Psi_0\equiv 0$.
        Let us first define a function $M$ as follows,
        \begin{align*}
        &M(\zeta)=\Psi_0(\zeta+\frac{1}{2})e^{\frac{x}{2}\zeta\sigma_3},&&\mbox{ if
$\frac{3\pi}{4}<\arg\zeta<\frac{5\pi}{4}$,}\\
        &M(\zeta)=\Psi_0(\zeta+\frac{1}{2})e^{-\pi
i\beta\sigma_3}e^{\frac{x}{2}\zeta\sigma_3},&&\mbox{ for
$0<\arg\zeta<\frac{\pi}{4}$,}\\
        &M(\zeta)=\Psi_0(\zeta+\frac{1}{2})e^{\pi
i\beta\sigma_3}e^{\frac{x}{2}\zeta\sigma_3},&&\mbox{ for
$\frac{7\pi}{4}<\arg\zeta<2\pi$}.
        \end{align*} In the remaining regions where $\frac{\pi}{4}<\arg\zeta<\frac{3\pi}{4}$ or $\frac{5\pi}{4}<\arg\zeta<\frac{7\pi}{4}$
        we define
        $M$ as the analytic continuation of $M$ from the other sectors in such a way that $M$ has
jumps only on the imaginary and on a part of the real axis:
\begin{align*}
        &M(\zeta)=\Psi_0(\zeta+\frac{1}{2})\begin{pmatrix}1&-e^{\pi i(\alpha-\beta)}\\0&1\end{pmatrix}e^{-\pi
i\beta\sigma_3} e^{\frac{x}{2}\zeta\sigma_3},&&
        \mbox{ for
$\frac{\pi}{4}<\arg\zeta<\frac{\pi}{2}$},\\
&M(\zeta)=\Psi_0(\zeta+\frac{1}{2})\begin{pmatrix}1&0\\-e^{-\pi
i(\alpha-\beta)}&1\end{pmatrix}e^{\frac{x}{2}\zeta\sigma_3},&&\mbox{
for
$\frac{\pi}{2}<\arg\zeta<\frac{3\pi}{4}$},\\
        &M(\zeta)=\Psi_0(\zeta+\frac{1}{2})\begin{pmatrix}1&0\\-e^{\pi i(\alpha-\beta)}&1\end{pmatrix}e^{\frac{x}{2}\zeta\sigma_3},&&\mbox{ for
$\frac{5\pi}{4}<\arg\zeta<\frac{3\pi}{2}$},\\
        &M(\zeta)=\Psi_0(\zeta+\frac{1}{2})\begin{pmatrix}1&-e^{-\pi i(\alpha-\beta)}\\0&1\end{pmatrix}e^{\pi
i\beta\sigma_3}e^{\frac{x}{2}\zeta\sigma_3},&&\mbox{ for
$\frac{3\pi}{2}<\arg\zeta<\frac{7\pi}{4}$}.
        \end{align*}

        Then $M$ satisfies the following RH conditions.
        \subsubsection*{RH problem for $M$}
        \begin{itemize}
        \item[(a)] $M$ is analytic in $\mathbb C\setminus (i\mathbb R\cup
[-\frac{1}{2}, \frac{1}{2}])$.
        \item[(b)] $M$ satisfies the following jump conditions on the contour
$(i\mathbb R\cup (-\frac{1}{2}, \frac{1}{2}))$, with $i\mathbb R$ oriented
upwards and $(-\frac{1}{2}, \frac{1}{2})$ oriented from left to right,
        \begin{align}\label{RHP M: b1}
        &M_+(\zeta)=M_-(\zeta)V_1(\zeta),
        &&\mbox{ as $\zeta\in(0,+i\infty)$},\\
        &\label{RHP M: b2}M_+(\zeta)=M_-(\zeta)V_2(\zeta),
        &&\mbox{ as $\zeta\in(-i\infty,0)$},\\
        &\label{RHP M: b3}M_+(\zeta)=M_-(\zeta)e^{-\pi i(\alpha-\beta)\sigma_3},
        &&\mbox{ as $\zeta\in(-\frac{1}{2}, 0)$},\\
        &\label{RHP M: b4}M_+(\zeta)=M_-(\zeta)e^{-\pi i(\alpha+\beta)\sigma_3},
        &&\mbox{ as $\zeta\in(0, \frac{1}{2})$},
        \end{align}
        with
        \begin{align}
        &\label{V1}V_1(\zeta)={\small\begin{pmatrix}0&e^{\pi
i\alpha}e^{-x\zeta}\\-e^{-\pi i\alpha}e^{x\zeta}&e^{-\pi
i\beta}\end{pmatrix}},\\
    &\label{V2}V_2(\zeta)={\small \begin{pmatrix}0&e^{-\pi
i\alpha}e^{-x\zeta}\\-e^{\pi i\alpha}e^{x\zeta}&e^{\pi i\beta}\end{pmatrix}},
        \end{align}
        \item[(c)] For fixed $x>0$,
        \begin{equation}M(\zeta)=\bigO(\zeta^{-1}),\qquad\mbox{ as
$\zeta\to\infty$}.\end{equation}
        \item[(d0)] As $\zeta\to -\frac{1}{2}$,
    \begin{equation}\label{RHP M: d0}
    M(\zeta)=\bigO\begin{pmatrix}|\zeta+\frac{1}{2}|^{\frac{\alpha}{2}}&|\zeta+\frac{1}{2}|^{-\frac{\alpha}{2}}\\
    |\zeta+\frac{1}{2}|^{\frac{\alpha}{2}}&|\zeta+\frac{1}{2}|^{-\frac{\alpha}{2}}\end{pmatrix}.
    \end{equation}
    \item[(d1)] As $\zeta\to +\frac{1}{2}$,
    \begin{equation}\label{RHP M: d1}
    M(\zeta)=\bigO\begin{pmatrix}|\zeta-\frac{1}{2}|^{-\frac{\alpha}{2}}&|\zeta-\frac{1}{2}|^{\frac{\alpha}{2}}\\
    |\zeta-\frac{1}{2}|^{-\frac{\alpha}{2}}&|\zeta-\frac{1}{2}|^{\frac{\alpha}{2}}\end{pmatrix}.
    \end{equation}
    \end{itemize}

    \medskip

    Let us now define a function $H(\zeta)$ in terms of $M$ and its Hermitian
conjugate as follows:
        \begin{equation}
        H(\zeta)=M(\zeta)M^*(-\overline{\zeta}).
        \end{equation}
        Because of the condition (c) of the RH problem for $M$, we have that
$H(\zeta)=\bigO(\zeta^{-2})$ as $\zeta\to\infty$. Furthermore, using the
jump condition (\ref{RHP M: b3})--(\ref{RHP M: b4}) for $M$,
we obtain that $H$ has no jump across $(0,\frac{1}{2})$.
(This is only true if $\Re\beta=0$, $\Im\alpha=0$.)
        Therefore, $H$ is meromorphic for $\Re\zeta>0$, with an isolated singularity
at $\frac{1}{2}$, which is removable because
        of  (\ref{RHP M: d0}) and (\ref{RHP M: d1}).
        Using Cauchy's theorem, we then have
        \begin{equation}\label{HHC}
        \int_{-i\infty}^{+i\infty}H_-(\zeta)d\zeta=0,\qquad
\int_{-i\infty}^{+i\infty}H_-^*(\zeta)d\zeta=0.
        \end{equation}
        Because of the jump conditions for $M$, the first integral implies that
        \begin{equation}
        \int_{-i\infty}^{0}M_-(\zeta)V_2^*(\zeta)M_-^*(\zeta)d\zeta+
        \int_{0}^{+i\infty}M_-(\zeta)V_1^*(\zeta)M_-^*(\zeta)d\zeta=0.
        \end{equation}
        Summing up this expression and the one obtained from the second integral in
(\ref{HHC}),
we find, using (\ref{V1}), (\ref{V2}) and the fact that $x$ is real,
        \begin{equation}\label{symmetry}
        \int_{-i\infty}^0M_-(\zeta)\begin{pmatrix}0&0\\0&2e^{\pi
i\beta}\end{pmatrix}M_-^*(\zeta)d\zeta +
\int_0^{i\infty}M_-(\zeta)\begin{pmatrix}0&0\\0&2e^{-\pi
i\beta}\end{pmatrix}M_-^*(\zeta)d\zeta=0.
        \end{equation}
        Since $\Re\beta=0$, it follows immediately that the second column of $M_-$
is identically zero on $i\mathbb R\setminus\{0\}$. From the jump conditions
(\ref{RHP M: b1})-(\ref{RHP M: b2}), it then follows that the first column
of $M_+$ is zero on $i\mathbb R\setminus\{0\}$ as well. Therefore, we have
that $M_{j2}(\zeta)=0$ for $\Re\zeta>0$, and $M_{j1}(\zeta)=0$ for
$\Re\zeta<0$.
        Let us now define
        \begin{equation}
        g_j(\zeta)=\begin{cases}
        M_{j2}(\zeta), &\mbox{ as $\Re\zeta<0$},\\
        M_{j1}(\zeta), &\mbox{ as $\Re\zeta>0$},
        \end{cases}
        \end{equation}
        so that $g_j$ is analytic in $\mathbb C\setminus(i\mathbb R\cup
[-\frac{1}{2}, \frac{1}{2}])$.
        Furthermore, $g_j$ is bounded except near $\pm\frac{1}{2}$.
        On $i\mathbb R$, $g$ has the following jump relation,
        \begin{equation}
        g_{j,+}(\zeta)=g_{j,-}(\zeta)\times
        \begin{cases}
        e^{\pi i\alpha}e^{-x\zeta}, &\mbox{ as $\zeta\in(0,+i\infty)$,}\\
        e^{-\pi i\alpha}e^{-x\zeta}, &\mbox{ as $\zeta\in(-i\infty,0)$.}
        \end{cases}
        \end{equation}
        Now we write $\widehat g$ for the analytic continuation of $g$ from the left
half plane to $\mathbb C\setminus[-\frac{1}{2},+\infty)$,
        \begin{equation}
        \widehat g(\zeta)=\begin{cases}
        g(\zeta), &\mbox{ as $\Re\zeta<0$,}\\
        g(\zeta)e^{\pi i\alpha}e^{-x\zeta}, &\mbox{ as $\Re\zeta>0$, $\Im\zeta>0$,}\\
        g(\zeta)e^{-\pi i\alpha}e^{-x\zeta}, &\mbox{ as $\Re\zeta>0$, $\Im\zeta<0$.}
        \end{cases}
        \end{equation}
        Set
        \begin{equation}
        h(\zeta)=\widehat g(-(\zeta+1)^{3/2}).
        \end{equation}
        It is now easy to verify that $h$ is analytic and bounded for $\Re\zeta\geq
0$, and that
        $h(\zeta)=\bigO(e^{-x|\zeta|})$ for $\zeta\to\pm i\infty$. By Carlson's
theorem, this implies that $h\equiv 0$ if $x>0$. Tracing back the previous
steps, it follows that $g\equiv 0$, $M\equiv 0$, and $\Psi_0\equiv 0$, which
proves the vanishing lemma.
        \end{varproof}
        \begin{remark}
        The proof of the vanishing lemma does not apply if either $\alpha$ is not
real or $\beta$ is not purely imaginary. The first failure is that the
function $H$ would not be analytic across $(0,\frac{1}{2})$ in this case. A
further problem in the proof would be that the matrices $V_1+V_1^*$ and
$V_2+V_2^*$ lose their symmetry, which results in non-zero off-diagonal
entries in (\ref{symmetry}). It is of course possible that the vanishing
lemma can be proven in a different way. Another possibility is that, given
$\alpha$ and $\beta$, the RH problem is not solvable for certain isolated
values of $x$.
        \end{remark}

\section{Asymptotics for Toeplitz determinants}\label{section: integrate}
Using the identities of Proposition \ref{prop eigenvectors} and the Fourier
representation
for $V(z)$, we can rewrite (\ref{di-interim}) in the form, with $x=2nt$,
\be\label{di-interim2}\eqalign{
\frac{d}{dt}\ln D_n=
(\al+\bt)n-(\al^2-\bt^2)
{e^{-t}\over\sinh t}+
(\al-\bt)\sum_{k=1}^\infty k V_k e^{-kt}+(\al+\bt)\sum_{k=1}^\infty k V_{-k} e^{-kt}\\
+\frac{1}{t}\sigma(x) -v(x)\left\{ \al+\al \left({1\over t}-{e^{-t}\over\sinh
t}\right) +2\sum_{k=1}^\infty k(V_k+V_{-k})\cosh(kt)
\right\}+O(1/n)\wt\Phi(x).} \ee

The expressions (\ref{was}), (\ref{waswt}), (\ref{sw}) yield
the $x\to 0$ expansion for $\sigma$ in (\ref{westintro});
and the expressions (\ref{vest6}), (\ref{sint}) imply the
$x\to +\infty$ expansion in (\ref{westintro}).

Because of the uniformity property of the error term in (\ref{di-interim2}),
the integration of this identity from $\ep>0$ to some $t<t_0$ gives
uniformly for any $0<\ep<t$,
\be\label{di-interim3}\eqalign{
\ln D_n(t)=\ln D_n(\ep)+(\al+\bt)n(t-\ep)+
\sum_{k=1}^{\infty}k\left[V_k-(\alpha+\beta)\frac{e^{-tk}}{k}\right]
\left[V_{-k}-(\alpha-\beta)\frac{e^{-tk}}{k}\right]\\
-\sum_{k=1}^{\infty}k V_k V_{-k}
+(\al-\bt)\sum_{k=1}^\infty V_k e^{-k\ep}+(\al+\bt)\sum_{k=1}^\infty V_{-k} e^{-k\ep}\\
+\left[
\int_{2n\ep}^{2nt}\frac{\sigma(x)}{x}dx+(\al^2-\bt^2)\ln\{n(1-e^{-2\ep})\}\right]-
(\al^2-\bt^2)\ln n + R_n(t)+\bigO(1/n);\\
R_n(t)=-\int_{\ep}^t v(2nt)\left\{ \al+\al
\left({1\over t}-{e^{-t}\over\sinh t}\right)
+2\sum_{k=1}^\infty k(V_k+V_{-k})\cosh(kt)\right\}dt.}
\ee

If $\al$ is real and $\bt$ is imaginary, we can take as a path of integration the
interval $[\ep,t]$
of the real line as, according to Section \ref{solvability}, the functions $\sigma(x)$
and $v(x)$
are real analytic for positive $x$. The estimates (\ref{westintro}),
(\ref{vestintro}) ensure integrability
at $x=0$ and $x=+\infty$. In particular, the term in the square brackets in
(\ref{di-interim3})
converges if $\ep\to 0$.

For arbitrary $\bt$, $\Re\al>-1/2$, we can choose a path of integration
and the end-point $t$ to avoid possible singular points $\{x_1,\dots,x_k\}$.
The estimates (\ref{westintro}), (\ref{vestintro}) were obtained above for positive
$x$.
The restriction to real $x$ was only imposed for
simplicity of notation. In fact, it is easy to verify that the estimates
(\ref{westintro}), (\ref{vestintro})
hold for any path to zero and infinity within a sector
$-\pi/2+\de<\arg x<\pi/2+\de$, $0<\de<\pi/2$.

We have for $R_n(t)$ in (\ref{di-interim3}):
\be
|R_n(t)|< C \int_0^t |v(2nu)|du=\bigO(1/n),\qquad n\to\infty,\quad 0<t<t_0.
\ee
Now recall that (\ref{di-interim3}) is uniform in $\ep$, and $\ln D_n(t)$ is
continuous at $t=0$.
Therefore, taking the limit $\ep\to 0$ in (\ref{di-interim3}) and using the
Fisher-Hartwig
asymptotics (\ref{FH}) for
$\ln D_n(0)$ gives the expression (\ref{expansion Dn}) of Theorem \ref{theorem:
Toeplitz}.

This concludes the proof of both  Theorem \ref{theorem: Toeplitz} and  Theorem
\ref{theorem: Toeplitz2}.

\section*{Acknowledgements}
The authors are grateful to Alexander Abanov, Yan Fyodorov, and Jon
Keating for encouraging our interest in transition asymptotics for
determinants. Tom Claeys is a Postdoctoral Fellow of the Fund for
Scientific Research - Flanders (Belgium), and was also supported by
the ESF program MISGAM. Alexander Its was supported in part by NSF
grant \#DMS-0701768. Igor Krasovsky was supported in part by EPSRC
grant \#EP/E022928/1.

\end{document}